\documentclass[
 10pt, aps, prx, twocolumn,
 amsfonts, amsmath, amssymb, 
 superscriptaddress, nofootinbib,
 floatfix
]{revtex4-2}

\usepackage{xcolor}
\usepackage{amsthm}
\usepackage{thmtools}
\usepackage{thm-restate}
\usepackage{braket}
\usepackage{graphicx}
\usepackage{booktabs}
\usepackage{mathtools}
\usepackage[ruled,vlined,linesnumbered]{algorithm2e}
\usepackage{microtype}
\usepackage[hypertexnames=false]{hyperref}
\usepackage[capitalise,nameinlink]{cleveref}

\IncMargin{1em}

\declaretheorem{theorem}

\declaretheorem[sibling=theorem]{lemma}

\declaretheorem[style=definition]{definition}

\definecolor{red4}{Hsb}{0,0.79,0.72}
\definecolor{blue4}{Hsb}{240,0.65,0.85}
\definecolor{purple4}{Hsb}{330,0.79,0.70}

\crefname{section}{Sec.}{Secs.}

\DeclareMathOperator*{\argmin}{arg\,min}

\let\oldnl\nl
\newcommand{\nlnonumber}{\renewcommand{\nl}{\let\nl\oldnl}}

\makeatletter
\newcommand{\apptocfile}{atoc}
\let\apptoc@orig@appendix\appendix
\renewcommand{\appendix}{%
  \apptoc@orig@appendix
  \let\apptoc@orig@addtocontents\addtocontents
  \long\def\addtocontents##1##2{%
    \def\apptoc@ext{##1}%
    \def\apptoc@toc{toc}%
    \ifx\apptoc@ext\apptoc@toc
      \apptoc@orig@addtocontents{\apptocfile}{##2}%
    \else
      \apptoc@orig@addtocontents{##1}{##2}%
    \fi
  }%
}
\newcommand{\appendixtableofcontents}{%
  \begingroup
    \setcounter{tocdepth}{3}%
    \phantomsection
    \let\addcontentsline\@gobblethree
    \section*{Contents}%
    \pdfbookmark[1]{Appendices}{apxcontents}%
    \@starttoc{\apptocfile}%
  \endgroup
}
\renewcommand\paragraph{\@startsection{paragraph}{4}{\z@}%
  {1ex \@plus1ex \@minus.2ex}%
  {-1em}%
  {\normalfont\normalsize\bfseries}%
}%
\newcommand{\titlename}{\textsl{\@title}}
\makeatother

\newcommand{\ketbra}[2]{\vert{#1}\rangle\langle{#2}\vert}

\renewcommand{\vec}[1]{\boldsymbol{#1}}

\mathchardef\mhyphen="2D

\allowdisplaybreaks
\hypersetup{
    colorlinks,
    citecolor=blue4,
    linkcolor=red4,
    urlcolor=purple4,
    breaklinks=true,
}

\begin{document}
\title{Low-depth amplitude estimation via statistical eigengap estimation}

\author{Po-Wei Huang}
\email{po-wei.huang@maths.ox.ac.uk}
\affiliation{Mathematical Institute, University of Oxford, Woodstock Road, Oxford OX2 6GG, United Kingdom}
\affiliation{Quantum Motion, 9 Sterling Way, London N7 9HJ, United Kingdom}

\author{B\'alint Koczor}
\email{balint.koczor@maths.ox.ac.uk}
\affiliation{Mathematical Institute, University of Oxford, Woodstock Road, Oxford OX2 6GG, United Kingdom}
\affiliation{Quantum Motion, 9 Sterling Way, London N7 9HJ, United Kingdom}

\date{\today}
\begin{abstract}
Amplitude estimation, in its original form, is formulated as phase estimation upon the Grover iterate. Subsequent improvements to the algorithm have eliminated the need for phase estimation and introduced low-depth variants that trade speedups for lower circuit depth. We make the key observation that amplitude estimation is equivalent to estimating the energy gap of an effective Hamiltonian, whereby discrete-time evolution is generated by amplitude amplification. This enables us to develop an amplitude estimation algorithm for both Heisenberg-limited and low-depth circuit regimes, inspired by statistical phase estimation techniques developed for early fault-tolerant ground-state energy estimation. In the Heisenberg-limited regime, our approach achieves performance comparable to state-of-the-art methods while using simplified classical post-processing. In the low-depth regime, it obtains optimal query--depth tradeoffs up to polylogarithmic factors, with provable guarantees and improved empirical performance over prior approaches. The resulting protocol is ancilla-free and requires only standard Grover reflections. Due to its flexibility, generality, and robustness, we expect our approach to be a key enabler for a broad range of early fault-tolerant applications.
\end{abstract}

\maketitle
\section{Introduction}
The original amplitude estimation of \citet{brassard2002quantum} applies phase estimation~\citep{kitaev1995quantum,cleve1998quantum,nielsen2010quantum} to Grover's amplitude amplification iterate~\citep{grover1996fast}, but requires multiple controlled operations, ancilla qubits, and a quantum Fourier transform (QFT)~\citep{coppersmith1994approximate}. Literature has focused on simplifying amplitude estimation, e.g., removing the phase estimation subroutine~\citep{suzuki2020amplitude,aaronson2020quantum,grinko2021iterative,rall2023amplitude,venkateswaran2021quantum,labib2024quantum,yang2025classical} and developing ``low-depth'' algorithms that trade off speedup factors for a lower circuit depth~\citep{rall2023amplitude, giurgicatiron2022low,vu2025lowdepth}. Independently, literature concerned with ground state energy estimation in Hamiltonian simulation
focused on vastly improving conventional phase estimation by developing ``statistical'' variants of phase estimation
that only require one ancilla qubit via the Hadamard test~\citep{kitaev1995quantum,cleve1998quantum} as relevant in early fault-tolerant applications~\citep{lin2022heisenberg-limited,wan2022randomized,ding2023even,ni2023low}. Among those techniques, sampling from Gaussian filter functions has been shown to be effective for producing low-depth phase estimation algorithms~\citep{wang2023quantum,ding2023simultaneous,ding2024quantum}. We refer the reader to \cref{appRelated} for an overview of the related work.

A key observation of this work is that amplitude estimation can be recast as an energy gap estimation problem for an effective Hamiltonian generated by amplitude amplification. This perspective enables the direct application of statistical phase estimation techniques, without relying on controlled operations or ancillary qubits, to recover the underlying amplitude from expectation values of observables. Building on this connection, we develop a Gaussian-filtered amplitude estimation algorithm that combines standard amplitude amplification circuits with Loschmidt echo--style constructions~\citep{goussev2012loschmidt}, where circuit depths are sampled from a discrete Gaussian distribution and post-processed via a simple classical least-squares estimator to recover the amplitude.

Our approach achieves both Heisenberg-limited and low-depth amplitude estimation within a unified, ancilla-free framework using only standard Grover reflections and observable-based eigengap inference instead of ancilla-assisted phase estimation. This provides a simple estimation procedure with provable query--depth tradeoffs, improved robustness in low-depth regimes, and full-range applicability. These features make it particularly well suited to early fault-tolerant quantum computers~\citep{campbell2021early,katabarwa2024early}, where minimizing circuit depth, ancillary qubits, and controlled operations is essential for reducing logical noise and resource overhead.

\section{Theoretical framework}
Statistical phase estimation in the early fault-tolerant regime uses controlled time evolution operators and the Hadamard test~\citep{kitaev1995quantum,cleve1998quantum} to
measure an ancillary qubit. Our theoretical results explain why, in contrast,
state-of-the-art amplitude estimation algorithms (which do not use phase estimation) require no extra ancillary qubits, nor controlled evolutions~\citep{suzuki2020amplitude, grinko2021iterative,venkateswaran2021quantum, aaronson2020quantum,rall2023amplitude,labib2024quantum}, even in short depth regimes~\citep{giurgicatiron2022low}. Specifically, in the following, we show that the Grover iterate can be viewed as a discrete-time evolution under an effective Hamiltonian, and prove that amplitude estimation can be effectively formulated as a Hamiltonian energy gap estimation algorithm (rather than phase estimation), whereby expected values of observables are measured instead of the Hadamard test circuit.

\subsection{Quantum phase and eigengap estimation}
First, let us recollect the basics of phase estimation whereby one assumes that a $K$-dimensional Hamiltonian $H$ with eigenvectors $\{\ket{\psi_i}\}_{i\in[K]}$, and an initial state $\ket{\psi} = \sum_{i=1}^K c_i \ket{\psi_i}$ are given. By controlling the time evolution operator $U_t=e^{-iHt}$ on a single ancilla qubit, its expected value can be estimated as the time series
\begin{equation}
\braket{\psi|U_t|\psi} = \sum_{k\in [K]} p_{k} e^{-itE_k},
\end{equation}
where $p_k=|c_k|^2$ and $\sum_{k\in[K]} p_k = 1$. Classical frequency analysis of $S(t) = \braket{\psi|U_t|\psi}$ allows us to obtain eigenvalues $E_k$ that correspond to the dominant eigenvectors in the initial state.

In contrast, eigengap estimation is a resource-frugal alternative that requires no controlled time evolution or additional ancilla qubits,
whereby time-evolved expected values of an observable $O$ as
\begin{equation}
    \label{eqEigengapEst}
    \braket{\psi|U_t^\dagger OU_t|\psi} = \sum_{j, k\in [K]} q_{j,k} e^{-it(E_k-E_j)},
\end{equation}
are estimated.
Here $q_{j,k} = c_j^*c_k \braket{\psi_j|O|\psi_k}$. We select $O$ suitably such that $q_{j,k} = |q_{j,k}| e^{-i\theta_{j,k}}$ where $\theta_{j,k}$ can be inferred and $ \sum_{j,k\in[K]} |q_{j,k}| = 1$. The special case $O=\ketbra{\psi}{\psi}$ yields the usual Loschmidt echo~\citep{goussev2012loschmidt}, which can be post-processed using statistical phase estimation techniques to obtain eigengaps, given $ \sum_{j,k\in[K]} q_{j,k} = 1$.

Above, continuous time evolution $U_t=e^{-iHt}$ can be simulated with a range of techniques, including conventional~\citep{trotter1959product,suzuki1976generalized,suzuki1991general} and randomized~\citep{campbell2019random, kiumi2025te} product formulas.
However, for phase estimation, it may be more efficient to apply a discrete-time evolution $U_t = e^{-itH}$ for $t \in \mathbb{N}$ under an effective Hamiltonian---a prominent example is qubitization where $U_t = e^{-itH_{\rm eff}}$ with $H_{\rm eff}=\cos^{-1}(H/\Lambda)$~\citep{low2019hamiltonian,childs2009relationship}, where $\Lambda$ is a normalization factor. Phase estimation then reveals eigenvalues of the effective Hamiltonian $H_{\rm eff}$, from which eigenvalues of the true Hamiltonian $H$ follow via $E_{\rm eff}=\pm\cos^{-1}(E/\Lambda)$. Qubitization enables state-of-the-art ground-state energy estimation~\citep{berry2019qubitization,low2025fast} and can, in principle, also be used to perform discrete-time evolution-based statistical phase estimation~\citep{ding2024quantum}. While our approach is similar in that we apply a discrete-time evolution under an effective Hamiltonian, rather than extracting eigenvalues directly, we estimate eigengaps in the effective Hamiltonian by post-processing the expected values of observables.

\subsection{Amplitude amplification as discrete-time evolution}
Given a projector $P$, it decomposes the Hilbert space into a good subspace $\mathcal{H}_G$ with $P\ket{\phi}=\ket{\phi}$ and a bad subspace with $P\ket{\phi}=\vec{0}$. Given an arbitrary input state $\ket{\psi}$, the probability of measuring it in the good subspace is 
 $a=\sin^2(\lambda)=\|P\ket{\psi}\|^2$ and we define the post-measurement states as $\ket{\psi_g}=\frac{P\ket{\psi}}{\braket{\psi|P|\psi}}$ and $ \ket{\psi_b}=\frac{(\mathbb{I}-P)\ket{\psi}}{\braket{\psi|\mathbb{I}-P|\psi}}$.

\begin{restatable}[Effective Hamiltonian of amplitude amplification]{proposition}{ham}
    We are given the Grover iterate $Q = -(\mathbb{I}-2\ketbra{\psi}{\psi})(\mathbb{I}-2P)$ for some arbitrary input state $\ket{\psi}$ and a projector $P$ such that $a = \sin^2(\lambda) = \lVert P\ket{\psi}\rVert^2$. Then $Q$ is equivalent to a discrete-time evolution under an effective Hamiltonian $H_{\rm eff}$ where
	\begin{equation*}
			Q = e^{-i H_{\rm eff}} \quad \text{with} \quad
			H_{\rm eff} = -2 \lambda\,Y_{\psi} + \pi\,(\mathbb{I}-P-\ketbra{\psi_b}{\psi_b}).
	\end{equation*}
	In the 2-dimensional subspace
	$\mathcal{H}_\psi = \mathrm{span}\{\ket{\psi_g},\ket{\psi_b}\}$, the operator $Y_{\psi} = i\ketbra{\psi_b}{\psi_g} - i\ketbra{\psi_g}{\psi_b}$ is an effective Pauli Y matrix, and therefore $Q$ acts as a discrete Pauli Y rotation. In the subspace $\mathcal{H}_\psi$, the effective Hamiltonian has two eigenvalues which are related to the amplitude to be estimated as
	\begin{equation*}
		\pm E_{\rm eff} = \pm 2 \lambda = \pm 2  \sin^{-1}(\sqrt{a}) = \pm \cos^{-1}(1-2a),		
	\end{equation*}
	where the last term uses the same convention as qubitization.
	\label{lemHam}
\end{restatable}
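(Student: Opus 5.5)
The plan is to decompose the Hilbert space into three mutually orthogonal pieces that are each invariant under both $P$ and $\ketbra{\psi}{\psi}$, and then verify $Q=e^{-iH_{\rm eff}}$ on each piece separately. Throughout I take $\ket{\psi_g},\ket{\psi_b}$ to be the \emph{normalized} post-measurement states, i.e.\ dividing by the square roots $\sqrt{\braket{\psi|P|\psi}}=\sin\lambda$ and $\sqrt{\braket{\psi|\mathbb{I}-P|\psi}}=\cos\lambda$, so that
\begin{equation*}
\ket{\psi}=\sin\lambda\ket{\psi_g}+\cos\lambda\ket{\psi_b}.
\end{equation*}
I first assume $0<a<1$, so both states are well defined. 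The three pieces are:
\begin{itemize}
\item[(i)] $\mathcal{H}_\psi=\mathrm{span}\{\ket{\psi_g},\ket{\psi_b}\}$;
\item[(ii)] $\mathcal{H}_G\ominus\mathrm{span}\{\ket{\psi_g}\}$, the part of the good subspace orthogonal to $\ket{\psi_g}$;
\item[(iii)] $\mathcal{H}_G^\perp\ominus\mathrm{span}\{\ket{\psi_b}\}$, the part of the bad subspace orthogonal to $\ket{\psi_b}$.
\end{itemize}
Since $\ket{\psi}\in\mathcal{H}_\psi$, any vector in (ii) or (iii) is orthogonal to $\ket{\psi}$, and $P$ is diagonal with respect to this splitting.

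On the complement pieces the verification is direct. For $\ket{v}$ in (ii), $(\mathbb{I}-2P)\ket{v}=-\ket{v}$ and $(\mathbb{I}-2\ketbra{\psi}{\psi})$ acts trivially, so $Q\ket{v}=\ket{v}$. On this piece both $Y_\psi$ and $\mathbb{I}-P-\ketbra{\psi_b}{\psi_b}$ vanish, so $H_{\rm eff}\ket{v}=0$ and $e^{-iH_{\rm eff}}\ket{v}=\ket{v}$, as required. For $\ket{v}$ in (iii), $(\mathbb{I}-2P)\ket{v}=\ket{v}$, the reflection about $\ket{\psi}$ again acts trivially, and the overall sign gives $Q\ket{v}=-\ket{v}=e^{-i\pi}\ket{v}$. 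Here $\mathbb{I}-P-\ketbra{\psi_b}{\psi_b}$ acts as the identity, so $H_{\rm eff}\ket{v}=\pi\ket{v}$, which matches.

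On $\mathcal{H}_\psi$, I write everything as $2\times2$ matrices in the ordered basis $(\ket{\psi_g},\ket{\psi_b})$:
\begin{itemize}
\item $\mathbb{I}-2P=\mathrm{diag}(-1,1)$;
\item $\mathbb{I}-2\ketbra{\psi}{\psi}$ has rows $(\cos2\lambda,\,-\sin2\lambda)$ and $(-\sin2\lambda,\,-\cos2\lambda)$, using the double-angle formulas.
\end{itemize}
Multiplying these and including the minus sign, $Q|_{\mathcal{H}_\psi}$ has rows $(\cos2\lambda,\,\sin2\lambda)$ and $(-\sin2\lambda,\,\cos2\lambda)$, a real rotation by angle $2\lambda$.

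In the same basis $Y_\psi$ is exactly the standard Pauli $Y$, so
\begin{equation*}
e^{2i\lambda Y_\psi}=\cos2\lambda\,\mathbb{I}+i\sin2\lambda\,Y_\psi .
\end{equation*}
Since $iY$ has rows $(0,1)$ and $(-1,0)$, this reproduces the matrix of $Q|_{\mathcal{H}_\psi}$.

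Because the two terms of $H_{\rm eff}$ have orthogonal supports (the second term vanishes on $\mathcal{H}_\psi$, and $Y_\psi$ vanishes off it), they commute. Hence the exponential factorizes blockwise, and the three block computations together give $Q=e^{-iH_{\rm eff}}$. The eigenvalues of $-2\lambda Y_\psi$ on $\mathcal{H}_\psi$ are $\pm2\lambda$, since $Y$ has spectrum $\pm1$. Finally, $2\lambda=2\sin^{-1}\sqrt{a}$, and $\cos2\lambda=1-2\sin^2\lambda=1-2a$ with $2\lambda\in[0,\pi]$ gives $2\lambda=\cos^{-1}(1-2a)$.

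The main obstacle is bookkeeping rather than depth. The points that need care are:
\begin{itemize}
\item the normalization of $\ket{\psi_g},\ket{\psi_b}$, which must include the square roots;
\item the sign convention of $Y_\psi$, which must match the direction of rotation of $Q$;
\item the degenerate cases $a\in\{0,1\}$, where one of $\ket{\psi_g},\ket{\psi_b}$ is undefined.
\end{itemize}
For the degenerate cases I would define the missing vector arbitrarily, or simply treat $\mathcal{H}_\psi$ as one-dimensional. One then checks directly that $Q\ket{\psi}=\ket{\psi}$ when $a=0$, giving eigenvalue $0=2\lambda$, and $Q\ket{\psi}=-\ket{\psi}$ when $a=1$, giving eigenvalue phase $\pi=2\lambda$, consistent with the claimed formula.
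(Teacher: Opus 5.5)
Your proof is correct and follows essentially the paper's route. Like the paper, you restrict $Q$ to $\mathcal{H}_\psi$, identify the resulting real $2\times 2$ rotation with $e^{2i\lambda Y_\psi}$, and use that on $\mathcal{H}_\psi^\perp$ the state reflection acts trivially while $-(\mathbb{I}-2P)$ gives phase $+1$ on the good part and $-1$ on the bad part; your explicit split into pieces (ii) and (iii) makes this last step precise. The only minor differences are these: the paper gets the $\cos^{-1}(1-2a)$ form from $\langle\psi|Q|\psi\rangle = 1-2a$ plus a qubitization remark, whereas you use the double-angle identity directly; and your care with the square-root normalization and the degenerate cases $a\in\{0,1\}$ goes beyond what the paper addresses.
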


We defer the proof to \cref{appLemHam} and note the following points. First, conventional amplitude estimation proceeds by applying a discrete-time evolution under the effective Hamiltonian $H_{\rm eff}$ as part of phase estimation. Second, applying $Q$ produces a nontrivial Pauli-$Y$ rotation $e^{2\lambda Y_{\psi}}$ with eigenphases $\pm\cos^{-1}(1-2a)$ in the subspace $\mathcal{H}_\psi$. Equivalently, projecting $R = U^\dagger(\mathbb{I}-2P)U$ onto the zero subspace gives $\braket{0|R|0} = 1-2a$. This provides the eigenvalues $e^{\pm i\arccos(1-2a)}$ of $U^\dagger(\mathbb{I}-2P)U(2\ketbra{0}{0}-\mathbb{I})$ per qubitization~\citep{low2017optimal}, which shares eigenvalues with $Q$ by unitary invariance. 

\begin{figure}
    \includegraphics[width=\linewidth]{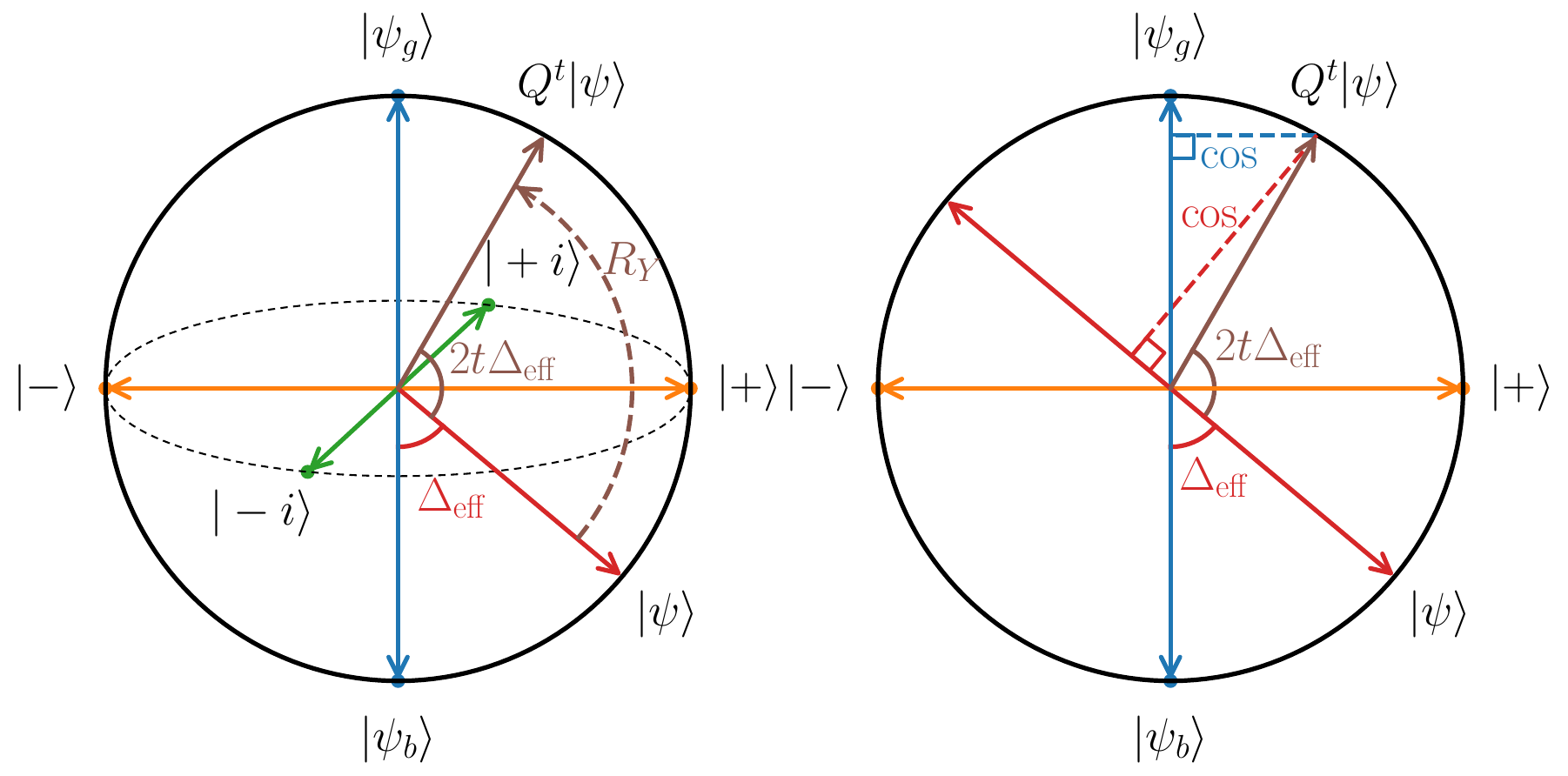}
    \caption{Bloch-sphere representation of amplitude amplification as discrete-time evolution in the two-dimensional subspace $\mathcal{H}_\psi$. The Grover iterate $Q$ acts as a rotation about the effective Pauli-$Y$ axis with angle $\Delta_{\rm eff}$ determined by the energy gap. Measurements along Pauli-like observables $O$ and $O'$ provide cosine signals generated by $\Delta_{\rm eff}$ that can be used for extraction.}
    \label{figEigengap}
\end{figure}

\begin{figure*}
    \includegraphics[width=\linewidth]{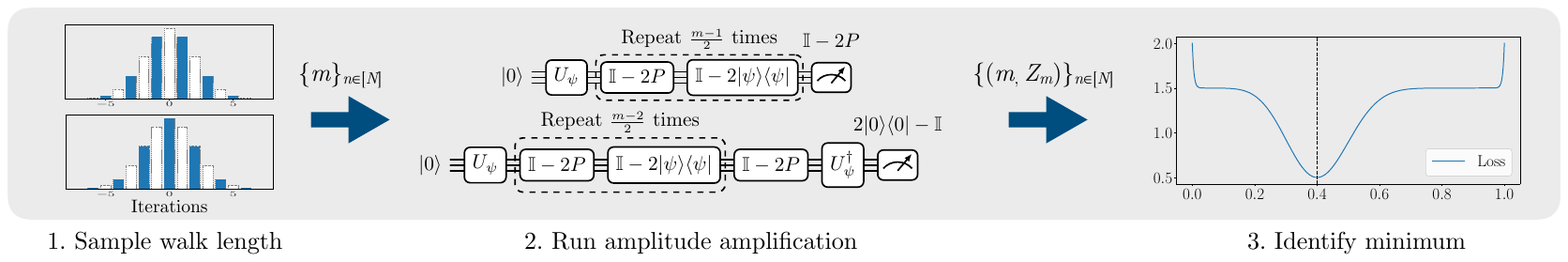}
    \caption{\emph{Algorithm.} Our algorithm follows a simple three-step protocol. First, a series of $\{m\}$ is sampled from a truncated discrete Gaussian. Second, we execute amplitude amplification circuits with $m$ applications of $U_\psi$. Lastly, we construct a target function based on the measurement result for which the peak lies on an $\epsilon$-close estimate $\widetilde a$ of the target amplitude $a= \lVert P\ket\psi\rVert^2$.}
    \label{figOver}
\end{figure*}

\subsection{Amplitude estimation as eigengap estimation}

Rather than estimating the desired amplitude via phase estimation, one can instead apply more resource-efficient eigengap estimation to extract the energy gap in the effective Hamiltonian. The main difference to conventional gap estimation is that we apply discrete-time evolution $U_t:=Q^t $ where $t\in\mathbb{N}$ and that we measure two Pauli-like observables $O = \mathbb{I} - 2P$ and the observable corresponding to a shifted Loschmidt echo $O' = 2\ketbra{\psi}{\psi} - \mathbb{I}$.
\begin{restatable}[Effective Hamiltonian eigengap estimation]{proposition}{eigap}
	Given the discrete-time evolution operator $U_t := Q^t $ under the effective Hamiltonian 
	$H_{\rm eff}$ from \cref{lemHam},
	the expected value of the Pauli-like observable $O = \mathbb{I} - 2P$
	yields the time-dependent expected value in \cref{eqEigengapEst}
	\begin{align*} \nonumber
		\braket{\psi|U_t^\dagger O U_t|\psi} &= \sum_{E_j, E_k\in \{\pm E_{\rm eff}\}} q_{j,k} e^{-i t ( E_j - E_k)}\\
		&= \frac{1}{2} e^{i(t+1/2)\Delta_{\rm eff}}  + \frac{1}{2} e^{-i(t+1/2)\Delta_{\rm eff}} 
	\end{align*}
	which depends on the effective Hamiltonian's energy gap  $\Delta_{\rm eff} =2 E_{\rm eff} = 4 \lambda = 4 \sin^{-1}(\sqrt{a})$. Alternatively, using the observable $O' = 2\ketbra{\psi}{\psi} - \mathbb{I}$, we obtain the time series
    \begin{equation*} \nonumber
		\braket{\psi|U_t^\dagger O' U_t|\psi} = \frac{1}{2} e^{it\Delta_{\rm eff}}  + \frac{1}{2} e^{-it\Delta_{\rm eff}}.
	\end{equation*}
    \label{lemEigap}
\end{restatable}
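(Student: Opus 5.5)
The plan is to reduce everything to the two-dimensional invariant subspace $\mathcal{H}_\psi=\mathrm{span}\{\ket{\psi_g},\ket{\psi_b}\}$ and work with explicit $2\times 2$ rotations.

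\textbf{Setup and the action of $Q^t$.} I take $\ket{\psi_g},\ket{\psi_b}$ to be the normalized post-measurement states, i.e.\ divided by the square roots $\sqrt{\braket{\psi|P|\psi}}$ and $\sqrt{\braket{\psi|\mathbb{I}-P|\psi}}$. Then
\[
\ket{\psi}=\sin\lambda\ket{\psi_g}+\cos\lambda\ket{\psi_b}.
\]
By \cref{lemHam}, $Q$ leaves $\mathcal{H}_\psi$ invariant and acts there as $e^{2i\lambda Y_\psi}$, a real rotation by angle $2\lambda$ in the $(\ket{\psi_b},\ket{\psi_g})$ plane. Rather than rely on the exponential form, I will check this directly by applying $\mathbb{I}-2P$ and then $-(\mathbb{I}-2\ketbra{\psi}{\psi})$ to the two basis vectors. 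This fixes the orientation so that the angle measured from $\ket{\psi_b}$ toward $\ket{\psi_g}$ increases. By induction,
\[
Q^t\ket{\psi}=\sin((2t+1)\lambda)\ket{\psi_g}+\cos((2t+1)\lambda)\ket{\psi_b},
\]
which is the standard Grover formula.

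\textbf{Observable $O=\mathbb{I}-2P$.} On $\mathcal{H}_\psi$ this observable is $-Z_\psi$, i.e.\ it is $-1$ on $\ket{\psi_g}$ and $+1$ on $\ket{\psi_b}$. Its expectation in $Q^t\ket{\psi}$ is therefore
\[
\cos^2((2t+1)\lambda)-\sin^2((2t+1)\lambda)=\cos(2(2t+1)\lambda)=\cos\bigl((t+\tfrac12)\Delta_{\rm eff}\bigr),
\]
using $\Delta_{\rm eff}=4\lambda$. Writing the cosine as a sum of two exponentials gives the second line of the claim.

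\textbf{Observable $O'=2\ketbra{\psi}{\psi}-\mathbb{I}$.} Its expectation is $2|\braket{\psi|Q^t|\psi}|^2-1$. The overlap of two unit vectors in the real plane separated by angle $2t\lambda$ is $\cos(2t\lambda)$. Hence the expectation is $2\cos^2(2t\lambda)-1=\cos(4t\lambda)=\cos(t\Delta_{\rm eff})$.

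\textbf{Matching the form of \cref{eqEigengapEst}.} To show the first equality in the claim, I expand in the eigenbasis of $Y_\psi$,
\[
\ket{\pm}=\bigl(\ket{\psi_g}\pm i\ket{\psi_b}\bigr)/\sqrt2,
\]
with eigenvalues $\pm E_{\rm eff}$ of $H_{\rm eff}$ (up to the sign convention). The coefficients are $c_\pm=(\sin\lambda\mp i\cos\lambda)/\sqrt2$, which have modulus $1/\sqrt2$ and relative phase $e^{\mp i(\pi/2-\lambda)}$. Since $-Z_\psi$ swaps $\ket{+}$ and $\ket{-}$, only the off-diagonal terms $q_{+,-}$ and $q_{-,+}$ survive, each of modulus $1/2$. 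Their phases $\theta_{j,k}=\pm\lambda\cdot 2=\pm\Delta_{\rm eff}/2$ (after the $-1$ and $i$ factors are collected) produce the half-step shift $t\mapsto t+\tfrac12$. For $O'$, the operator $2\ketbra{\psi}{\psi}-\mathbb{I}$ is a reflection whose off-diagonal matrix elements in the $\ket{\pm}$ basis exactly cancel the phases of $c_\pm^*c_\mp$. This yields real weights $q_{\pm,\mp}=1/2$ and no shift.

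\textbf{Main obstacle.} The substantive content is the rotation formula, which is straightforward. The step I expect to need the most care is the bookkeeping of signs and phases: the orientation of $e^{-iH_{\rm eff}}$ versus $e^{2i\lambda Y_\psi}$, and the phase $\theta_{j,k}$ that produces the $+1/2$ offset. I will cross-check these against the direct real-rotation computation above, which independently determines the final cosine expressions.
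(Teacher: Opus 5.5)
Your proof is correct, but its main computation takes a different route from the paper's. The paper's proof works entirely in the eigenbasis of $Q$. It writes $\ket{\psi}=\frac{-i}{\sqrt2}(e^{i\lambda}\ket{\psi_+}-e^{-i\lambda}\ket{\psi_-})$ and evolves each component by $e^{\pm itE_{\rm eff}}$. It then expands $\braket{\psi|Q^{-t}PQ^t|\psi}$ and the Loschmidt echo $\braket{\psi|Q^{-t}|\psi}\braket{\psi|Q^t|\psi}$ as four-term sums, using the matrix elements $\braket{\psi_\pm|P|\psi_\pm}$ (all equal to $1/2$) and the overlaps $\braket{\psi_\pm|\psi}$. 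Only at the end does it convert to $O=\mathbb{I}-2P$ and $O'=2\ketbra{\psi}{\psi}-\mathbb{I}$.

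You instead obtain both final cosines from the real Grover formula $Q^t\ket{\psi}=\sin((2t+1)\lambda)\ket{\psi_g}+\cos((2t+1)\lambda)\ket{\psi_b}$ plus two trigonometric identities. You use the eigenbasis only afterwards, to exhibit the $q_{j,k}$ form. Your route makes the origin of the half-step offset transparent: $O$ measures against the $\ket{\psi_b}$ axis, from which the state starts at angle $\lambda$, while $O'$ measures against $\ket{\psi}$ itself. It also gives a check on the final values that does not depend on the sign convention. The paper's route produces the spectral form of \cref{eqEigengapEst} directly, which is the eigengap framing it wants to stress.

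In your eigenbasis step, noting that $O$ restricted to $\mathcal{H}_\psi$ is $-Z_\psi$, which swaps $\ket{\pm}$, kills the diagonal terms at once. This is slightly cleaner than the paper's detour through $P$. For $O'$ you should likewise state explicitly that $\braket{\pm|O'|\pm}=2|c_\pm|^2-1=0$. Then only $q_{\pm,\mp}=2|c_+|^2|c_-|^2=1/2$ survive, which completes the identification.
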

To provide an intuitive sketch for this result, we visualize the dynamics confined to the subspace $\mathcal{H}_\psi$ as one on an effective qubit Bloch sphere in \cref{figEigengap}. Under $U_t = Q^t$, the state undergoes a rotation around the Pauli-$Y$ axis in this subspace with frequency set by the relative phases of the two eigenstates, or the energy gap $\Delta_{\rm eff}$.

The role of the observables is to probe different projections of this rotation. The operator $O = \mathbb{I}-2P$ projects along the $\{|\psi_g\rangle,|\psi_b\rangle\}$ axis equivalent to a Pauli-$Z$ measurement, while $O' = 2\ketbra{\psi}{\psi}-\mathbb{I}$ projects along the initial state direction. Together, these observables would allow extraction of the energy gap $\Delta_{\rm eff}$, which encodes the amplitude $a$.

We defer the formal proof to \cref{appLemEigap}. The use of the shifted Loschmidt echo measurement of $O'$, in addition to the ``standard'' amplitude amplification circuit, is reminiscent of the use of even degree Chebyshev polynomials in amplitude estimation in the context of quantum signal processing~\citep{rall2023amplitude}. Finally, we note that the last state reflection in $Q^t$ may be omitted by replacing $Q^t$ with $(\mathbb{I}-2P)Q^{t-1}$, as the final measurement is performed on $\ket{\psi}$ and the reflection by $\ket{\psi}$ has no effect.

\section{Gaussian-filtered ancilla-free amplitude estimation}

In this section, we present the Gaussian Least Squares Amplitude Estimation (GLSAE) algorithm and analyze its performance. We first describe the sampling strategy, circuit implementation, and post-processing procedure that produces the amplitude estimate. We then establish its theoretical guarantees, including its Heisenberg-limited scaling, optimal query--depth tradeoffs at low depths, as well as additional mechanisms to ensure reliable estimation across the full range of amplitudes.

\subsection{Algorithm and estimation procedure}
Our algorithm follows a simple three-step protocol as illustrated in \cref{figOver}:
\begin{enumerate}
    \item Sample iteration counts $m$ from a sampling filter function $\widetilde p_T$.
    \item Apply amplitude amplification circuits described in \cref{lemEigap} corresponding to the sampled iteration $m$ and obtain measurement outcome $Z_m$.
    \item Minimize a loss function $\bar{\mathcal{L}}$ to find the underlying eigengap $\Delta_{\rm eff} = 4\lambda$ and use $\lambda$ to obtain the outcome $a = \sin^2(\lambda)$.
\end{enumerate}

The expected value of the time-evolved observable measurement in \cref{lemEigap} is trivially related to the discrete-time
signal $S(m)=\cos(2\lambda m)$ for $m \in \mathbb{Z}$, i.e., averaging individual shots $Z_m\in\{\pm1\}$ of observable measurements yields the mean $\mathbb{E}[Z_m|m]=\cos(2\lambda m)$. Specifically, for odd $m$, we measure $O = \mathbb{I}{-}2P$ per \cref{lemEigap} and set $m = 2(t{+}\frac{1}{2})$, while for even $m$, we measure $O = 2\ketbra{\psi}{\psi}{ -} \mathbb{I}$ and set $m=2t$. The circuit depth is directly proportional to $m$, which we can limit by sampling the evolution times $m \le M = \mathcal{O}(T)$ from a truncated discrete Gaussian distribution $\widetilde{p}_T$ which has standard deviation $T$---for practical purposes where precision is set to around $10^{-5}-10^{-6}$, $M=5T$ is a reasonable approximation. In particular, we use the truncated discrete Gaussian as follows:
\begin{equation}
    \widetilde{p}_T(m) = \begin{cases}
        \frac{1}{\sqrt{2\pi}T\widetilde{\mathcal{Z}}}\exp\left(-\frac{m^2}{2T^2}\right) & 1 \le |m| \le 5 T\\
        0 & |m| > 5 T\\
        1 - \sum_{m'\ne 0} \widetilde p_T(m') & m=0
    \end{cases}
\end{equation}
where $\widetilde{\mathcal{Z}} = \sum_{m=-15 T}^{15 T} \frac{1}{\sqrt{2\pi}T}\exp\left(-\frac{m^2}{2T^2}\right)$. 

For the sampled evolution time $m$, we measure the corresponding amplitude amplification circuit of depth $\mathcal{O}(m)$ and record the measurement outcome $Z_m$---the measured $Z_m$-s would then act as a sparse, noisy cosine signal that induces a form of sparse discrete-time cosine transform~\citep{oppenheim2009discrete} over the discrete Gaussian to form a periodic Gaussian. 

Inspired by statistical phase estimation techniques~\citep{wang2023quantum,ding2023even,ding2023simultaneous}, we consider the loss function as the mean square error between the observed signal and the ideal cosine signal as
\begin{equation}
    \widetilde{\mathcal{L}}(\theta) = \frac{1}{N} \sum_{m\sim \widetilde{p}_T} \big(Z_m - \cos(2\theta m)\big)^2,
\end{equation}
whose minimum recovers the amplitude as $\lambda \approx \arg\min_{\theta \in [0, \pi/2]}\widetilde{\mathcal{L}}(\theta)$. We term this approach Gaussian Least Squares Amplitude Estimation (GLSAE), which is able to yield Heisenberg-limited scaling of total quantum resources (space-time volume). Moreover, by tuning the discrete Gaussian's standard deviation $T$ and truncation threshold $M \in \mathcal{O}(T)$ (equivalently, the maximum circuit depth), we obtain a trade-off between circuit depth and the number of samples $N$---this enables low-depth amplitude estimation. The full algorithm can be found as described in \cref{algoGLSAE}. We analyze the estimator’s performance and its behavior near the boundary in the following subsection.

\begin{algorithm}[t]
\caption{Gaussian Least Square Amplitude Estimation (GLSAE)}
\label{algoGLSAE}
\Indm
\KwIn{State $\ket{\psi}$, Projector $P$, Grover iterate $Q$, Gaussian width $T$, Maximum circuit depth $M$, Number of parallel queries $N$, Precision $\epsilon$}
\KwOut{Estimate of $a = \lVert P\ket{\psi}\rVert^2$}
\Indp
\DontPrintSemicolon
\vspace{1ex}
\nlnonumber \textbf{\hspace{-1.4em}I. Sample}\;
Sample $N$ iterations $m$ from truncated discrete Gaussian 
$$\widetilde{p}_T(m) = \begin{cases}
        \frac{1}{\sqrt{2\pi}T\widetilde{\mathcal{Z}}}\exp\left(-\frac{m^2}{2T^2}\right) & 1 \le |m| \le M\\
        0 & |m| > M\\
        1 - \sum_{m'\ne 0} \widetilde p_T(m') & m=0
    \end{cases}$$\;
\vspace{-3ex}
\nlnonumber \textbf{\hspace{-1.4em}II. Circuit execution}\;
\For {each sampled $m$ out of $N$ samples}{
    \uIf{$|m| = 2t + 1$, $t\in\mathbb{Z}_{\ge 0}$}{
        Measure observable $\mathbb{I}-2P$ on state $Q^t\ket{\psi}$ and obtain $Z_m \in \{\pm 1\}$
    }
    \ElseIf{$|m| = 2t$, $t\in\mathbb{N}$}{
        Measure observable $2\ketbra{\psi}{\psi}-\mathbb{I}$ on state $(\mathbb{I}-2P)Q^{t-1}\ket{\psi}$ and obtain $Z_m \in \{\pm 1\}$
    }
}
\vspace{1ex}
\nlnonumber \textbf{\hspace{-1.4em}III. Optimization}\;
Obtain estimate $\widetilde \lambda$ of $\lambda$ from
$$\begin{aligned}
    \widetilde \lambda &= \argmin_{\theta \in [0, \frac{\pi}{2}]}\widetilde{\mathcal{L}}(\theta) \\
    &= \argmin_{\theta \in [0, \frac{\pi}{2}]}\frac{1}{N} \sum_{m \sim \widetilde p_T} \left(Z_m - \cos(2\theta m)\right)^2
\end{aligned}$$
via a two-level grid search for $\chi^*$ across 
$$\theta \in \left\{\frac{\pi\chi}{2M}: \chi \in [M]\right\}$$
 followed by a more fine-grained search for $\eta^*$ across
 \[\theta \in \left\{ \frac{\pi\chi^*}{2M} + \frac{\eta\epsilon}{2} : \eta \in \left[-\frac{8\pi}{M\epsilon}..\frac{8\pi}{M\epsilon}\right]\right\}\]\;
\vspace{-3ex}
\uIf {$\widetilde \lambda \in [\frac{1}{T},\frac{\pi}{2}-\frac{1}{T}]$}{
\Return $\widetilde a = \sin^2(\widetilde \lambda)$
}
\Else{
    Perform $N$ additional measurements of $\|P\ket{\psi}\|^2$ and obtain 
    $Z' \sim {\rm Bernoulli}(a)$\;
    \Return $\widetilde a = \frac{1}{N} \sum_N Z'$\;
}
\end{algorithm}

\subsection{Performance analysis and theoretical guarantees}
We now analyze the performance of GLSAE and its dependence on the sampling and circuit parameters. When the sampling Gaussian width $T$ is sufficiently large, the induced loss landscape is sharply concentrated around the true parameter, enabling accurate recovery of $\lambda$. As the circuit depth decreases, however, this landscape becomes less well-resolved, particularly for amplitudes $a$ close to $0$ or $1$.

As the cosine signal produced in \cref{lemEigap} has two frequency peaks corresponding to minima of $\widetilde{\mathcal{L}}$ at $\theta=\pm\lambda$, a low standard deviation $T$ of our Gaussian may cause a significant broadening of these two peaks. When the broadening is comparable to the distance between the two peaks, the minimum of $\widetilde{\mathcal{L}}$ is difficult to discern in the presence of sampling noise due to finite repetition. Consequently, as formalized below, the estimable range of $a$ is constrained by $T$ (and consequently $M$) via the tuning $\beta$ and range $\zeta$ parameters. 

\begin{restatable}[Gaussian Least Squares Amplitude Estimation]{theorem}{main}
\label{thmMain}
Fix a tuning parameter $\beta \in [0,1]$ and a target precision $\epsilon > 0$. For some range parameter $\zeta \in \widetilde{\mathcal{O}}(\epsilon^{2-2\beta})$ dependent on circuit depth, there exists a quantum algorithm \emph{GLSAE} that outputs, with high probability, an estimate $\widetilde a$ of $a = \lVert P\ket{\psi}\rVert^2$, for any $a \in [\zeta, 1-\zeta]$, such that 
\begin{equation*}
\lvert \widetilde a - a \rvert \le 2\sqrt{a(1-a)}\epsilon + \epsilon^2.
\end{equation*}
Moreover, \emph{GLSAE} requires a maximum circuit depth $M \in \widetilde{\mathcal{O}}(\epsilon^{-1+\beta})$ and $N \in \widetilde{\mathcal{O}}(\epsilon^{-2\beta})$ parallel circuit samples.
\end{restatable}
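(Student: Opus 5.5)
The plan is to reduce everything to an error bound on $\widetilde\lambda$ and then push it through $a=\sin^2\lambda$. If $|\widetilde\lambda-\lambda|\le\epsilon$, then
\[
|\sin^2\widetilde\lambda-\sin^2\lambda| = |\sin(\widetilde\lambda+\lambda)\sin(\widetilde\lambda-\lambda)|.
\]
Expanding $\sin(2\lambda+\delta)=\sin 2\lambda\cos\delta+\cos 2\lambda\sin\delta$ with $|\delta|\le\epsilon$ gives
\[
|\widetilde a-a|\le 2\sqrt{a(1-a)}\,\epsilon+\epsilon^2,
\]
because $\sin2\lambda=2\sqrt{a(1-a)}$ and $|\sin\delta|\,|\sin\delta|\le\epsilon^2$. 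So the main task is to show that the argmin of $\widetilde{\mathcal L}$ lies within $\epsilon$ of $\lambda$ with high probability whenever $\lambda\in[1/T,\pi/2-1/T]$. Correspondingly $\zeta=\sin^2(1/T)\approx T^{-2}$. I would choose $T\sim\epsilon^{-1+\beta}$ up to logarithms, which gives $M=5T=\widetilde{\mathcal O}(\epsilon^{-1+\beta})$ and $\zeta=\widetilde{\mathcal O}(\epsilon^{2-2\beta})$.

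\textbf{Deterministic landscape.} Expanding the square gives $\widetilde{\mathcal L}(\theta)=\text{const}-\frac{2}{N}\sum Z_m\cos(2\theta m)+\frac1N\sum\cos^2(2\theta m)$. By \cref{lemEigap}, $\mathbb E[Z_m|m]=\cos(2\lambda m)$. The expectation of the cross term is therefore
\[
\sum_m \widetilde p_T(m)\cos(2\lambda m)\cos(2\theta m)=\tfrac12\bigl[g(\theta-\lambda)+g(\theta+\lambda)\bigr],
\]
where $g(x)=\sum_m\widetilde p_T(m)\cos(2xm)$. By Poisson summation and control of the truncation tail $e^{-25/2}$, $g$ is a periodized Gaussian $\approx\sum_k e^{-2T^2(x-k\pi)^2}$. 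The square term's expectation is $\frac12+\frac12 g(2\theta)$. The expected loss is then, up to constants,
\[
-g(\theta-\lambda)-g(\theta+\lambda)+\tfrac12 g(2\theta).
\]
For $\lambda\ge 1/T$ and $\pi/2-\lambda\ge 1/T$, the mirror peak at $-\lambda$ (and at $\pi-\lambda$) and the $g(2\theta)$ term only perturb the main well near $\lambda$. The well has curvature $\Theta(T^2)$, and away from it the expected loss exceeds the minimum by a constant gap $c>0$. The delicate point is that the perturbation shifts the minimizer by a bias of order $T^{-1}e^{-cT^2\lambda^2}$. At the boundary $\lambda\sim1/T$ this bias need not be $\le\epsilon$. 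I would handle it in one of two ways: take $\lambda\gtrsim\sqrt{\log(1/\epsilon)}/T$ (absorbed in the $\widetilde{\mathcal O}$ defining $\zeta$), or compute the bias explicitly and note that it is $\mathcal O(T^{-1}e^{-\Omega(T^2\lambda^2)})$.

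\textbf{Stochastic fluctuation.} Here I would bound $\sup_\theta|\widetilde{\mathcal L}(\theta)-\mathbb E\widetilde{\mathcal L}(\theta)|$ and its derivative uniformly over the grid. Each summand is bounded by $4$, and its $\theta$-derivative is bounded by $\mathcal O(M)$. Hoeffding's inequality plus a union bound over the $\mathcal O(M+1/(M\epsilon))$ grid points gives deviation $\sqrt{\log(M/\delta)/N}$. A constant deviation suffices to exclude spurious minima outside an $\mathcal O(1/T)$ window, so the coarse grid of spacing $\pi/(2M)$ lands in the correct well. The fine search then covers a window of width $\mathcal O(1/M)$ at resolution $\epsilon/2$.

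Inside the well I would use a local quadratic argument. The gradient noise has magnitude $\mathcal O(T/\sqrt N)$ and the curvature is $\Theta(T^2)$, so the minimizer error is $\mathcal O(1/(T\sqrt N))$. Setting this $\le\epsilon/2$ gives $N=\widetilde{\mathcal O}(1/(T^2\epsilon^2))=\widetilde{\mathcal O}(\epsilon^{-2\beta})$, which is the stated tradeoff.

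For completeness: if the estimate falls outside $[1/T,\pi/2-1/T]$, the fallback Bernoulli estimator with $N$ samples gives error $\sqrt{a(1-a)/N}$. This case is not needed for the stated range, but it keeps the output well defined.

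\textbf{Main obstacle.} The hard step is the uniform concentration of the gradient near $\lambda$ combined with the deterministic bias from the mirror peak. Together these must yield an $\epsilon$-accurate minimizer with only $N\sim 1/(T\epsilon)^2$ samples. My first attempt would be a second-order Taylor argument with the curvature lower bounded on a window of width $1/(2T)$ around $\lambda$, using matrix-free Bernstein bounds on $\frac1N\sum Z_m m\sin(2\theta m)$. Those terms have variance $\mathbb E[m^2]\approx T^2$, so the $\mathcal O(M)$ worst case would not cost extra.
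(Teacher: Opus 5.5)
Your outline is sound and gives the same $1/(T\sqrt N)$ accuracy, hence the same $M^2N\in\widetilde{\mathcal O}(\epsilon^{-2})$ tradeoff. However, the ``delicate point'' you single out does not exist: the expected loss has no bias. By your own formula its derivative at $\theta=\lambda$ is $-g'(0)-g'(2\lambda)+g'(2\lambda)=-g'(0)=0$. More directly, $\mathbb E[\mathcal L(\theta)]-\mathbb E[\mathcal L(\lambda)]=\sum_m\widetilde p_T(m)(\cos 2\lambda m-\cos 2\theta m)^2\ge 0$ for any sampling distribution, truncated or not. The square term $\tfrac12 g(2\theta)$ exactly cancels the pull of the mirror peak. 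This is why the paper uses least squares rather than the correlation objective of GMMAE (where your concern would be real), and why its ideal error satisfies $\mathcal E(\lambda)=0$. So your extra $\sqrt{\log(1/\epsilon)}/T$ restriction on $\lambda$ is unnecessary, though harmless since it stays within $\widetilde{\mathcal O}(\epsilon^{2-2\beta})$. In the paper, $\zeta=\sin^2(1/(4T))$ comes instead from needing $\mathcal E_2\ge 0$, i.e.\ $4\theta,4\lambda$ lying in the convex region $[1/T,2\pi-1/T]$ of $\Phi_T$.

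For the local step, the paper avoids gradient and Hessian concentration altogether. It applies Hoeffding to the differences $\mathcal L_m(\theta)-\mathcal L_m(\lambda)$, whose range is at most $8\sigma T|\theta-\lambda|$, giving fluctuation $\sigma T\delta|\theta-\lambda|$ with $\delta\sim\sqrt{\log(1/(\kappa\xi))/N}$. It pairs this with the sandwich $T^2(\theta-\lambda)^2\le\mathcal E(\theta)\le\tfrac52 T^2(\theta-\lambda)^2$: strong concavity of $\Phi_T$ at $0$ plus $\mathcal E_2\ge 0$ gives the lower bound, and $T^2$-smoothness gives the upper bound. Comparing $|\theta-\lambda|\le\kappa$ against $|\theta-\lambda|\ge 2\kappa$ then yields a single quadratic inequality, solved by $T\kappa\gtrsim\sigma\delta$. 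This is the integrated form of your ``gradient noise $T/\sqrt N$ versus curvature $T^2$'' heuristic, and it handles a grid argmin directly. Your Bernstein version could save the factor $\sigma$, but it needs uniform gradient concentration over a window.

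What your route adds is the separate localization step with a $\theta$-independent Hoeffding bound, and that step is not redundant. \cref{lemStrongCvx} gives the quadratic lower bound only for $|\theta-\lambda|\le 1/(4T)$. For distant $\theta$, the paper's noise term $\sigma T\delta|\theta-\lambda|$ can exceed $\mathcal E(\theta)\le 4$ (e.g.\ at $\beta=0$, where $T\delta\gg 1$). A constant-range bound like yours, which the paper only uses in the proof of \cref{corCerti}, is what actually rules out distant minima.
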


The proof for this theorem, detailed in \cref{appProofBookkeep,appProofMain}, utilizes strong convexity and smoothness properties of the periodic Gaussian produced from the discrete-time cosine transform---derived in \cref{appGauss}---to obtain quadratic upper and lower bounds on the loss landscape around its minimum. These properties, when coupled with Hoeffding bounds on the number of samples, yield a quadratic inequality governing the estimation error, which, when solved, reproduces the Zalka--Burchard lower bound of $M^2N \in \Omega(\epsilon^{-2})$~\citep{zalka1999grover,burchard2019lower} up to polylogarithmic factors found in low-depth amplitude estimation protocols~\citep{giurgicatiron2022low}, as well as prior bounds found in quantum metrology and phase estimation~\citep{giovannetti2006quantum,wang2019accelerated,tong2021tight}. Remarkably, our proof for discrete-time effective Hamiltonian evolution yields a single bound that holds uniformly across all circuit depths---in contrast to prior work, which analyzed separate regimes depending on circuit depth and spectral gap assumptions, even in the context of conventional Hamiltonian simulation~\citep{ding2023simultaneous,ding2024quantum}.

\cref{thmMain} depends on a tunable free parameter $\beta$, which affects the polynomial scaling of the runtime with respect to a target precision $\epsilon$ such that the number of queries to the state preparation unitary can be interpolated between the Heisenberg and standard quantum limit, i.e., precision scaling between $\widetilde{\mathcal{O}}(\epsilon^{-1})$ and $\widetilde{\mathcal{O}}(\epsilon^{-2})$. 
In contrast, the following corollary focuses on the special case when our approach applies to all $a$ and achieves the Heisenberg limit of both $\mathcal{O}(\epsilon^{-1})$ circuit depth and total $\widetilde{\mathcal{O}}(\epsilon^{-1})$ state preparation queries.
\begin{restatable}[Heisenberg-limited GLSAE]{corollary}{corHeis}
Fixing $\beta =0$, with high probability, \emph{GLSAE} outputs an estimate $\widetilde a$ of $a = \lVert P\ket{\psi}\rVert^2$, for any $a \in [0,1]$, such that 
\begin{equation*}
\lvert \widetilde a - a \rvert \le 2\sqrt{a(1-a)}\epsilon + \epsilon^2.
\end{equation*}
\emph{GLSAE} requires a maximum circuit depth $M \in \mathcal{O}(\epsilon^{-1})$ and $N \in \mathcal{O}(\log(\epsilon^{-1}))$ parallel circuit samples.
\label{corHeis}
\end{restatable}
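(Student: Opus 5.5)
The plan is to derive the corollary from \cref{thmMain} by setting $\beta=0$. Then the depth bound becomes $M\in\widetilde{\mathcal{O}}(\epsilon^{-1})$, the sample bound becomes $N\in\widetilde{\mathcal{O}}(1)$, and the range parameter becomes $\zeta\in\widetilde{\mathcal{O}}(\epsilon^{2})$. Two things must then be checked. First, the polylogarithmic factors have to be tightened so that $M\in\mathcal{O}(\epsilon^{-1})$ and $N\in\mathcal{O}(\log\epsilon^{-1})$. Second, the guarantee must be extended from $[\zeta,1-\zeta]$ to all of $[0,1]$. I would go back into the proof of \cref{thmMain} rather than use it as a black box. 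With $\beta=0$ we take $T=\Theta(\epsilon^{-1})$ and $M=5T$. The Gaussian tail beyond $5T$ is then a fixed constant-small perturbation of the periodic-Gaussian landscape. This means no $\log$ factor is needed in $M$, only the constant $5$.

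The first step is to set the success probability. The loss $\widetilde{\mathcal{L}}$ is an average of $N$ bounded terms, so by Hoeffding, uniformly over the grid used in Step III of \cref{algoGLSAE}, it deviates from its expectation by $\mathcal{O}(\sqrt{\log(M/\epsilon\delta)/N})$. The quadratic inequality from the strong convexity and smoothness bounds (\cref{appGauss}) then gives an error in $\lambda$ of order $\frac{1}{T}\cdot\mathrm{dev}$. For this to be at most $\epsilon$ with $T\sim\epsilon^{-1}$, we only need the deviation to be below a constant, which requires $N=\mathcal{O}(\log(1/\delta)+\log\epsilon^{-1})$. Choosing $\delta$ constant, or polynomially small in $\epsilon$, gives $N\in\mathcal{O}(\log\epsilon^{-1})$. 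Once $|\widetilde\lambda-\lambda|\le\epsilon$ holds, the accuracy claim is immediate. Writing $\sin^2(\lambda+e)-\sin^2\lambda=\sin(2\lambda)\sin e\cos e+\sin^2 e\cos 2\lambda$ and using $|\sin e|\le|e|$ gives $|\widetilde a-a|\le 2\sqrt{a(1-a)}\epsilon+\epsilon^2$.

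The second step is the boundary regime, which I expect to be the main obstacle. When $\lambda<1/T$ or $\lambda>\pi/2-1/T$, the two peaks at $\pm\lambda$, or their reflections about $\pi/2$, merge, and the landscape argument fails. \cref{algoGLSAE} handles this by checking whether $\widetilde\lambda\in[1/T,\pi/2-1/T]$ and otherwise falling back to direct sampling. I would argue two things.

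First, if the true $\lambda$ lies well inside the interval, the landscape analysis certifies that $\widetilde\lambda$ is $\epsilon$-close and also inside it, up to constant slack that can be absorbed by adjusting the threshold constant.

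Second, if $\widetilde\lambda$ falls outside the interval, then the merged-peak landscape still localizes $\lambda$ to within $\mathcal{O}(1/T)=\mathcal{O}(\epsilon)$ of the boundary, so $a\le\mathcal{O}(\epsilon^2)$ or $1-a\le\mathcal{O}(\epsilon^2)$. In that case the empirical Bernoulli mean from $N$ fresh shots must achieve error $2\sqrt{a(1-a)}\epsilon+\epsilon^2\gtrsim\epsilon^2$. A multiplicative Chernoff bound gives an error of about $\sqrt{a\log(1/\delta)/N}+\log(1/\delta)/N$. With $a=\mathcal{O}(\epsilon^2)$, the first term is $\mathcal{O}(\epsilon\sqrt{a})$, which is fine. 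The second term, $\log(1/\delta)/N$, is however only a constant and not $\mathcal{O}(\epsilon^2)$. This is the delicate point.

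My first attempt would therefore be to make the fallback consistent with the bound through the case analysis in $\zeta\sim\epsilon^2$. If $a\le c\epsilon^2$ with a small constant $c$, then with constant probability all $N$ shots return $0$. The estimate $\widetilde a=0$ then has error $a\le\epsilon^2$, and the probability of a nonzero shot is at most $Na=\mathcal{O}(\epsilon^2\log\epsilon^{-1})$, which is small. For intermediate $a$, between $c\epsilon^2$ and $C\epsilon^2$, I would instead show that the main estimator already succeeds, because $\zeta$ from \cref{thmMain} at $\beta=0$ is $\mathcal{O}(\epsilon^2)$ with a tunable constant. The two cases can then be stitched together by choosing the threshold $1/T$ appropriately.
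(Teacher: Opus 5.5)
Your bookkeeping ($T=\Theta(\epsilon^{-1})$, constant $\sigma$, $N=\mathcal{O}(\log(\epsilon^{-1}\xi^{-1}))$ from the union bound over the grid) agrees with the paper, and so does your conversion via $\sin^2(\lambda+e)-\sin^2\lambda=\sin(2\lambda+e)\sin e$. The paper simply plugs this $N$ into \cref{eqMaxDepth} and invokes Lemma~7 of \citet{brassard2002quantum}.

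One slip: from deviations of the raw loss, the quadratic lower bound gives $|\widetilde\lambda-\lambda|\lesssim\sqrt{\mathrm{dev}}/T$, not $\mathrm{dev}/T$. The linear form needs the Lipschitz-scaled Hoeffding bound on loss differences in \cref{lemHoeffBound}. Your conclusion is unaffected.

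At the boundary you depart from the paper. The paper ignores the fallback branch and argues only that $\zeta\in\widetilde{\mathcal{O}}(\epsilon^2)$ sits below the $\epsilon^2$ tolerance, so a near-boundary output of the main estimator is harmless. Routing these cases through the fallback of \cref{algoGLSAE} is more faithful to the algorithm. You also correctly see that with $N=\mathcal{O}(\log\epsilon^{-1})$ shots, only the all-zeros (or all-ones) event can deliver $\epsilon^2$ accuracy.

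There is, however, a genuine gap. Your cases cover $\lambda$ inside the range where \cref{thmMain} applies, and the event that $\widetilde\lambda$ is rejected. They never cover $\lambda<\frac{1}{4T}$ (or $\lambda>\frac{\pi}{2}-\frac{1}{4T}$). There \cref{thmMain} gives no control over $\widetilde\lambda$, and nothing you say stops it landing anywhere in $\mathcal{A}=[\frac{1}{T},\frac{\pi}{2}-\frac{1}{T}]$, e.g.\ near $\frac{\pi}{4}$. It would then be accepted, the fallback would never run, and your ``all $N$ shots return $0$'' argument would not apply.

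The missing ingredient is the certification argument of \cref{corCerti}. For $\lambda<\frac{1}{4T}$ it gives $\min_{\theta\in\mathcal{A}}\widetilde{\mathcal{L}}(\theta)>\max_{\theta<\frac{1}{4T}}\widetilde{\mathcal{L}}(\theta)$. This is driven by $\mathcal{E}_2>0.14$ on $\mathcal{A}$ versus $\mathcal{E}_2<5\times10^{-11}$ near $0$. It needs only a small constant Hoeffding accuracy $\delta$ in the sense of \cref{lemHoeffBound}, so it holds at $\beta=0$ with $N=\mathcal{O}(\log\epsilon^{-1})$. The paper's one-line argument tacitly relies on the same fact.

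Your stitching of the intermediate regime also needs repair. Showing that the main estimator is $\epsilon$-accurate there does not help, because an accurate $\widetilde\lambda<\frac{1}{T}$ is still rejected. Lowering the threshold is not an option either, since certification needs the margin between $\frac{1}{4T}$ and $\frac{1}{T}$. What closes the argument is the following chain:
\begin{enumerate}
\item The output $\widetilde a=0$ already meets the target bound whenever $\sqrt{a}\le(\sqrt{1-a}+\sqrt{2-a})\,\epsilon$, i.e.\ roughly $a\le(1+\sqrt{2})^2\epsilon^2$.
\item A rejected accurate estimate forces $\lambda<\frac{1}{T}+\epsilon$. Taking $T\ge\epsilon^{-1}$ then gives $a<4\epsilon^2$, inside that window. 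This choice is allowed because \cref{thmMain} only lower-bounds $T$, and it costs only a constant factor in $M$.
\item Since $Na=\mathcal{O}(\epsilon^2\log\epsilon^{-1})$, all fallback shots return $0$ with high probability. The case near $a=1$ is symmetric.
\end{enumerate}
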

We further note that in moderate-depth regimes where $\beta \in (0, \frac{1}{2}]$, we have $\zeta \in \mathcal{O}(\epsilon^{2-2\beta}) \subseteq \mathcal{O}(\epsilon)$, and thus GLSAE is capable of retrieving $a$ up to a slightly weaker guarantee of achieving an additive precision of $\mathcal{O}(\epsilon)$. 

In the very low-depth regime where $\beta \ge \frac{1}{2}$, while GLSAE can still be used to provide an estimate up to the amplitude-dependent precision shown in \cref{thmMain} when $a\in [\zeta, 1-\zeta]$, the range is much more limited. In this regime, the lowest depth that we can achieve for true amplitude $a$ is then $\widetilde{\mathcal{O}}(a^{-\frac{1}{2}})$, providing an improvement over~\citet{rall2023amplitude}'s requirement of $\widetilde{\mathcal{O}}(a^{-1})$. 

Despite this restricted range in \cref{thmMain}, we can still reliably identify when GLSAE operates outside its valid regime. In particular, although \cref{thmMain} assumes $a \in (\zeta, 1-\zeta)$, the output itself provides a certificate for correctness. We formalize this through the following corollary, which allows us to reject unreliable estimates based on the observed outcome.

\begin{restatable}[Acceptance certification for GLSAE]{lemma}{certi}
\label{corCerti}
Fix a tuning parameter $\beta \in [0,1]$, a target precision $\epsilon > 0$. Let $\widetilde a$ be the estimate returned by GLSAE with depth and sample counts set by $\beta$ and $\epsilon$, and define the following acceptance rule for an existing range parameter $\zeta \in \widetilde{\mathcal{O}}(\epsilon^{2-2\beta})$ dependent on circuit depth:
\begin{equation*}
\text{accept if } \widetilde a \in [16\zeta,\,1-16\zeta],\; \text{otherwise reject.}
\end{equation*}
With high probability, we have
$|\widetilde a - a| \le 2\sqrt{a(1-a)}\epsilon +\epsilon^2$ for accepted estimates.
\end{restatable}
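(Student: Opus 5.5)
The plan is to split on whether the true amplitude lies in the good range of \cref{thmMain}. When it does, \cref{thmMain} already gives the accuracy bound. When it does not, I will show that with high probability the output lands outside $[16\zeta,1-16\zeta]$ and is therefore rejected. By the symmetry $a\mapsto 1-a$ (equivalently $\lambda\mapsto\pi/2-\lambda$, swapping $P$ with $\mathbb{I}-P$), it is enough to treat the lower end.

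\textbf{Case $a\in[\zeta,1-\zeta]$.} Here \cref{thmMain} applies directly. With high probability $|\widetilde a-a|\le 2\sqrt{a(1-a)}\epsilon+\epsilon^2$. This holds whether or not the estimate is accepted, so in particular it holds for every accepted estimate.

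\textbf{Case $a<\zeta$.} This is the main obstacle: I must show that whenever an unreliable estimate could occur, it fails the acceptance rule.

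The fallback branch of \cref{algoGLSAE} returns an empirical mean of $N$ Bernoulli$(a)$ samples. By Hoeffding or Chernoff, $\widetilde a\le a+\mathcal{O}(\sqrt{\log(1/\delta)/N})$. A multiplicative Chernoff bound, combined with the scaling $\zeta\gtrsim 1/T^2$, gives $\widetilde a<16\zeta$ with high probability. So this branch is rejected. I will fix the constant $16$ by matching $\zeta\approx c/T^2$ against the threshold $\lambda<1/T$, using $\sin^2(1/T)\le 1/T^2$.

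Otherwise the least-squares branch returned some $\widetilde\lambda\ge 1/T$. Here I will reuse the loss-landscape analysis behind \cref{thmMain}. The expected loss is $\mathcal{L}(\theta)=\text{const}-2\,\mathbb{E}_m[\cos(2\lambda m)\cos(2\theta m)]+\mathbb{E}_m[\cos^2(2\theta m)]$. This is a combination of periodic Gaussians centered at $\pm\lambda$ and $0$, with width $\sim 1/T$. When $\lambda\lesssim\sqrt{\zeta}$, the two peaks at $\pm\lambda$ merge into a single well whose minimum sits near $0$, within $\mathcal{O}(1/T)$. The Hoeffding concentration of $\widetilde{\mathcal{L}}$ used in \cref{thmMain} then confines $\widetilde\lambda$ to $[0,c'/T]$. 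This yields $\widetilde a=\sin^2\widetilde\lambda\lesssim c'^2/T^2$.

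There is a gap to close: this bound must come out below $16\zeta$, or else that estimate must already satisfy the precision bound. I would choose the definition of $\zeta$ (its hidden constants) so that $16\zeta$ exceeds this confinement radius. If confinement fails but $\widetilde a$ is still small, I fall back on the trivial bound $|\widetilde a-a|\le 16\zeta$. Checking that this is consistent, or else simply arguing rejection, is where the constant $16$ should come from.

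\textbf{Conclusion.} A union bound over the failure events of both cases gives the high-probability statement: every accepted estimate satisfies $|\widetilde a-a|\le 2\sqrt{a(1-a)}\epsilon+\epsilon^2$.
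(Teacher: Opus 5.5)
Your case split matches the paper's, and the case $a\in[\zeta,1-\zeta]$ is fine. The gap is in the case $a<\zeta$. There the lemma rests entirely on your sentence that the concentration from \cref{thmMain} ``confines $\widetilde\lambda$ to $[0,c'/T]$'', and that machinery cannot deliver it, for two reasons.
\begin{itemize}
\item The lower bound in \cref{thmMain} on $\mathcal{E}=\mathcal{E}_1+\mathcal{E}_2$ comes from $\mathcal{E}_2\ge 0$. That requires $\lambda,\theta\in[\frac{1}{4T},\frac{\pi}{2}-\frac{1}{4T}]$, which is exactly what fails when $\lambda<\frac{1}{4T}$.
\item Its Hoeffding deviation $\sigma T\delta|\theta-\lambda|$ grows with distance. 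For $|\theta-\lambda|$ of order one it is of size $\sigma T\delta$, far larger than the bounded loss itself.
\end{itemize}
The location of the minimum is not the issue either: $\mathcal{E}(\lambda)=0$ is always the exact minimum. The problem is that the well is flat, so you need a quantitative gap that beats the sampling noise.

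There is also no trivial-bound escape. For $a<\zeta$, any accepted estimate has $|\widetilde a-a|\ge 16\zeta-a>15\zeta$. For $\beta>0$ this exceeds $2\sqrt{a(1-a)}\epsilon+\epsilon^2\in\widetilde{\mathcal{O}}(\epsilon^{2-\beta})$, since $\zeta\sim\epsilon^{2-2\beta}$. So in this case the lemma is equivalent to proving rejection with high probability.

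The paper supplies the missing step by comparing $\max_{\theta<1/(4T)}\widetilde{\mathcal L}(\theta)$ with $\min_{\theta\ge 1/T}\widetilde{\mathcal L}(\theta)$ when $\lambda<\frac{1}{4T}$.
\begin{itemize}
\item For $\theta<\frac{1}{4T}$: concavity of $\varphi_T$ on $[-\frac1T,\frac1T]$ makes the $j=0$ part of $\mathcal{E}_2$ nonpositive. For $T\ge 4/\pi$ the periodic tails then give $\mathcal{E}_2<5\times10^{-11}$.
\item For $\theta\ge\frac1T$: the $j=0$ part is at least $\frac12\varphi_T(\frac1T)-\varphi_T(\frac2T)\ge 0.167$, so $\mathcal{E}_2>0.14$.
\item $\mathcal{E}_1=1-\Phi_T(2(\theta-\lambda))$ is larger on the second set than on the first, because $|\theta-\lambda|<\frac{1}{4T}$ on the first and $|\theta-\lambda|>\frac{3}{4T}$ on the second.
\item Hoeffding is re-applied on the far set with a bound on $|\mathcal{D}_m|$ that does not depend on $|\theta-\lambda|$. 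The noise is then of order $\sigma\delta$ uniformly on the grid, and the constant gap $0.14$ dominates it with only polylogarithmically many samples.
\end{itemize}
Hence $\widetilde\lambda<\frac1T$ with high probability. Since $\sin^2(\frac1T)\le 16\sin^2(\frac{1}{4T})=16\zeta$, an accepted $\widetilde a\ge16\zeta$ would force $\widetilde\lambda\ge\frac1T$. This is where the constant $16$ comes from; your inequality $\sin^2(\frac1T)\le\frac{1}{T^2}$ bounds the wrong side.

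Your treatment of the Bernoulli branch is beside the point. The acceptance test is what triggers that branch, so the $\widetilde a$ in the lemma is $\sin^2\widetilde\lambda$ from the least-squares step. Your Chernoff claim there is also not uniform in the parameters. For example, when $16\zeta N\le1$ a single success already gives $\widetilde a\ge16\zeta$. That happens with probability about $Na$, which is a constant when $a\approx\zeta\approx 1/(16N)$.
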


We detail the proof in \cref{appProofCert}. This lemma corresponds to the last \emph{if} statement that calls for additional samples when the main output fails. Thus, even in the low-depth regime where the estimable range is limited, GLSAE remains self-certifying. Unreliable estimates are rejected, while accepted estimates retain the guarantees of \cref{thmMain}. 

Lastly, to resolve the ambiguity problem caused by the access to only cosine signals, and achieve estimation over the full range of amplitudes, we provide a slightly weaker guarantee that provides an estimate with uniform additive error $\epsilon$, rather than the amplitude-dependent error scaling typical of amplitude estimation as shown in \cref{thmMain} and \cref{corHeis}.  

\begin{restatable}[Full-range GLSAE]{theorem}{plus}
Fix a tuning parameter $\beta \in [0,1]$, a target precision $\epsilon > 0$. Then, with high probability, there exists a quantum algorithm \emph{GLSAE} that outputs an estimate $\widetilde a$ of $a = \lVert P\ket{\psi}\rVert^2$, for any $a \in [0,1]$, such that 
\begin{equation*}
\lvert \widetilde a - a \rvert \le \epsilon.
\end{equation*}
Moreover, \emph{GLSAE} requires a maximum circuit depth $M \in \widetilde{\mathcal{O}}(\epsilon^{-1+\beta})$ and $N \in \widetilde{\mathcal{O}}(\epsilon^{-2\beta})$ parallel circuit samples.
\label{thmPlus}
\end{restatable}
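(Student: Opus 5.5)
The plan is a case split on the run of \cref{algoGLSAE}, with \cref{thmMain} handling the interior of $[0,1]$ and \cref{corCerti} deciding which branch we are in. Only a uniform additive error $\epsilon$ is required, so we may run GLSAE at a slightly rescaled precision $\epsilon' = c\,\epsilon$ for a small constant $c$. Then $2\sqrt{a(1-a)}\epsilon' + \epsilon'^2 \le \epsilon' + \epsilon'^2 \le \epsilon$ whenever $\epsilon' \le 1/2$. Rescaling changes the depth and sample counts $M \in \widetilde{\mathcal{O}}(\epsilon^{-1+\beta})$ and $N\in\widetilde{\mathcal{O}}(\epsilon^{-2\beta})$ only by constants. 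Let $\zeta\in\widetilde{\mathcal{O}}(\epsilon^{2-2\beta})$ be the range parameter associated with $\epsilon'$.

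\emph{Accepted estimates.} Suppose the acceptance rule of \cref{corCerti} is met, i.e.\ the returned $\widetilde a \in[16\zeta,1-16\zeta]$; this is the same threshold as $\widetilde\lambda\in[1/T,\pi/2-1/T]$ up to fixing constants consistently. \cref{corCerti} then gives, with high probability, $|\widetilde a - a|\le 2\sqrt{a(1-a)}\epsilon'+\epsilon'^2\le\epsilon$, whatever the true $a$ is. No assumption on $a$ lying in $[\zeta,1-\zeta]$ is needed here, because the certification lemma already rules out accepted estimates coming from out-of-range amplitudes.

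\emph{Rejected estimates.} The algorithm falls back to $N$ direct Bernoulli$(a)$ measurements and returns their empirical mean. The key point is that rejection, with high probability, certifies that $a$ is near the boundary. If $a\in[\zeta',1-\zeta']$ for a suitable $\zeta'$ (say $\zeta' = 32\zeta$), then \cref{thmMain} gives $|\widetilde a-a|\le\epsilon'+\epsilon'^2$. Provided $\epsilon'\ll\zeta$, so that this error is below $16\zeta$, the estimate $\widetilde a$ lands in $[16\zeta,1-16\zeta]$ and is accepted. Hence rejection implies $\min(a,1-a)\le\zeta'=\widetilde{\mathcal{O}}(\epsilon^{2-2\beta})$.

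In this near-boundary case the Bernoulli variance is $a(1-a)\le\zeta'$. A Bernstein (not Hoeffding) bound then gives error
\[
\mathcal{O}\!\left(\sqrt{\zeta'\log(1/\delta)/N}+\log(1/\delta)/N\right).
\]
With $N=\widetilde\Theta(\epsilon^{-2\beta})$ this is $\widetilde{\mathcal{O}}(\sqrt{\epsilon^{2-2\beta}\epsilon^{2\beta}}+\epsilon^{2\beta}) = \widetilde{\mathcal{O}}(\epsilon+\epsilon^{2\beta})$.

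\emph{Main obstacle.} The bound above shows the trouble. For $\beta\ge1/2$ the variance term is fine, but the $1/N$ term $\epsilon^{2\beta}\le\epsilon$ also only works when $\beta\ge 1/2$. For $\beta<1/2$ the raw sample count is too small in absolute terms, though there $\zeta\le\epsilon$ anyway. My first attempt would handle $\beta<1/2$ differently. Since $\zeta'=\widetilde{\mathcal{O}}(\epsilon^{2-2\beta})\le\epsilon$ up to polylogs, simply returning $0$ or $1$ (whichever boundary $\widetilde\lambda$ is near) already has error at most $\epsilon$ after adjusting constants; alternatively one can boost $N$ by a polylog factor. Choosing the constants so that the rejection-implies-boundary step and the Bernstein step fit together consistently, including the $\epsilon'\ll\zeta$ requirement in the regime $\beta$ near $0$, is the delicate bookkeeping. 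I would resolve it by tracking the explicit constants in $\zeta$ from the proof of \cref{thmMain}. A final union bound over the two high-probability events completes the argument.
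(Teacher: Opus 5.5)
Your overall architecture matches the paper's: \cref{corCerti} covers accepted outputs, and rejected ones fall back to Bernoulli sampling, whose variance $a(1-a)=\mathcal{O}(\zeta)$ is small. The paper's proof simply writes ``if $a\le 16\zeta$ or $a\ge 1-16\zeta$'' and never shows that rejection forces this, so trying to prove it is the right instinct.

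\textbf{The gap.} The problem is how you prove it. You bound the error from \cref{thmMain} by $\epsilon'+\epsilon'^2$ and then need $\epsilon'\ll\zeta$. But $\zeta=\widetilde{\mathcal{O}}(\epsilon^{2-2\beta})$, so for $\beta<1/2$ one has $\zeta\ll\epsilon$ by a \emph{polynomial} factor. No tracking of constants can make $\epsilon'=c\epsilon\ll\zeta$ hold. As written, both your Bernstein branch and your ``return $0$ or $1$'' fallback rest on $\min(a,1-a)\le\zeta'$, which is unproved in exactly the regime you flagged.

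\textbf{The fix.} The loss comes from replacing $2\sqrt{a(1-a)}$ by $1$; near the boundary the error is about $2\sqrt{a}\,\epsilon'$. The proof of \cref{thmMain} forces $\epsilon'=2\kappa\le 1/(2T)\approx 2\sqrt{\zeta}$, since that is what its strong-convexity step needs. In angle form, $|\widetilde\lambda-\lambda|\le\epsilon'\le 1/(2T)$, while acceptance is $\widetilde\lambda\in[1/T,\pi/2-1/T]$. Hence every $\lambda\in[3/(2T),\pi/2-3/(2T)]$ is accepted, and rejection implies $a(1-a)\le\sin^2(3/(2T))=\mathcal{O}(\zeta)$ for \emph{every} $\beta\in[0,1]$.

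\textbf{The sampling step.} The remaining obstruction you see at $\beta<1/2$ is an artifact of Bernstein's range term $\log(1/\delta)/N$. Your claim that the sample count is ``too small in absolute terms'' there is false. The paper uses Chebyshev:
\[
\Pr\bigl[|\widetilde a-a|\ge\epsilon\bigr]\le \frac{a(1-a)}{N\epsilon^2}.
\]
So $N=\mathcal{O}(\zeta/\epsilon^2)=\widetilde{\mathcal{O}}(\epsilon^{-2\beta})$ already gives constant success probability. Taking the median over $\mathcal{O}(\log(1/\delta))$ such batches then gives high probability with no range term. Since $T\le 1/(2\epsilon)$ also gives $\zeta\gtrsim\epsilon^2$, this holds uniformly in $\beta$. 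You therefore need neither a case split nor a boundary-value fallback.

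If you prefer to keep your crude amplitude bound, it still yields $\min(a,1-a)<16\zeta+2\epsilon'$ after rejection. For $\beta<1/2$ this is at most $\epsilon$ once $c$ and $\epsilon$ are small, which rescues your fallback there; the angle-space argument is cleaner.

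For accepted outputs, the paper avoids rescaling by using $|\sin^2\theta-\sin^2\lambda|\le|\theta-\lambda|\le\epsilon$. Your $\epsilon'=c\epsilon$ device works equally well.
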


We detail the proof in \cref{appProofPlus}. To obtain the estimate when the main GLSAE subroutine fails, one simply performs an additional $N$ projective measurements of $\ket{\psi}$ onto $P$ for estimates rejected by \cref{corCerti}. This is essentially sampling from a Bernoulli distribution of expectation value $a$. At near-boundary terms within $\mathcal{O}(\epsilon^{2-2\beta})$ of $0$ and $1$, the variance of the Bernoulli distribution approaches zero, and $\mathcal{O}(\epsilon^{-2\beta})$ samples  suffice to produce an $\epsilon$-close estimate of $a$. This process covers when the main GLSAE routine fails, and overall achieves low-depth amplitude estimation over the full amplitude range.

\begin{figure*}
	\includegraphics[width=\linewidth]{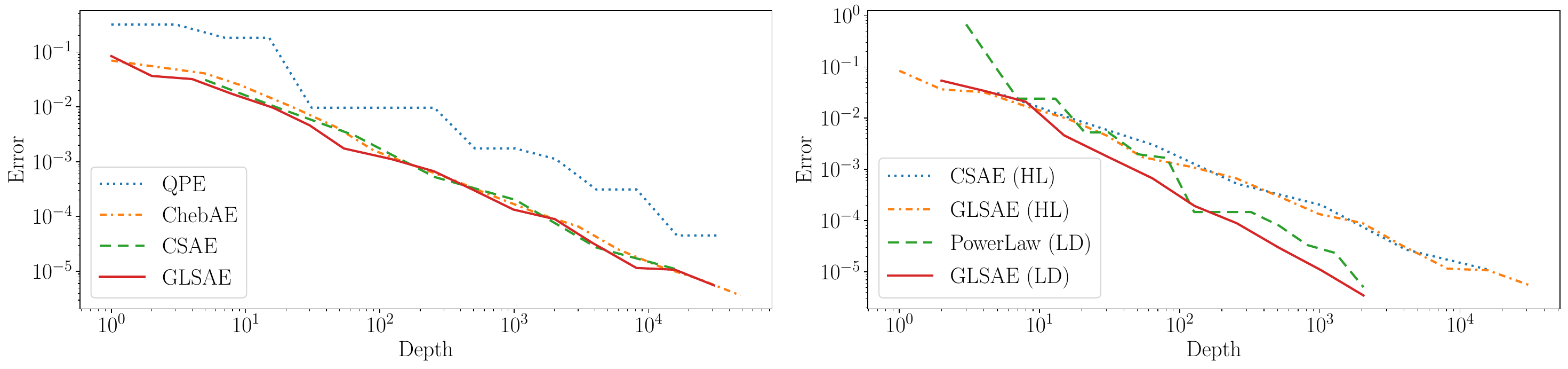}
	\caption{
		\emph{Comparison of the circuit depth of near-Heisenberg-limited amplitude estimation algorithms (left) and low-depth amplitude estimation (right).}
		GLSAE achieves state-of-the-art performance, comparable to ChebAE and CSAE. In the low-depth regime (right), we implement the case
		with circuit depth and sample complexity equivalent to uniform time sampling where $M, N \in \widetilde{\mathcal{O}}(\epsilon^{-2/3})$. Notably, our algorithm improves over prior art like Power Law AE in such low-depth regimes by half an order of magnitude, and much more prominently in near-constant-depth regimes.}
	\label{figComp}
\end{figure*}

\section{Numerical results and benchmarking}
We benchmark our algorithm against a broad range of prior amplitude estimation techniques, including ``textbook'' amplitude estimation that uses phase estimation~\citep{brassard2002quantum}, ChebAE from quantum signal processing~\citep{rall2023amplitude}, and CSAE as a modified version of ESPRIT~\citep{labib2024quantum}. We additionally benchmark against adaptations of other ideas in signal processing and statistical phase estimation, which we detail in \cref{appNum}.

As shown in \cref{figComp} (left), GLSAE achieves comparable performance to the prior state-of-the-art, ChebAE and CSAE, at the Heisenberg limit. Beyond empirical performances, we also highlight some comparisons of the actual execution of these algorithms. First, like CSAE, our algorithm admits high parallelization, in contrast to ChebAE, due to its non-iterative structure. Second, it features a substantially simpler construction compared to CSAE, which relies on costly classical post-processing to achieve Heisenberg-limited scaling by constructing a virtual uniform signal array via iterative outer products of measurement results sampled from an exponentially increasing sequence.

In the low-depth regime, \cref{figComp} (right) shows that GLSAE improves over prior methods such as Power Law AE~\citep{giurgicatiron2022low}, which has been shown to outperform QoPrime AE in comparable regimes~\citep{giurgicatiron2022low,giurgicatiron2022lowexp}. In particular, GLSAE approximately achieves a half-order-of-magnitude improvement. Additionally, unlike Power Law AE, GLSAE applies in both low-depth and Heisenberg-limited regimes. We also note that prior work by \citet{vu2025lowdepth} achieves comparable theoretical guarantees, but requires additional auxiliary qubits, whereas our method operates using only standard amplitude amplification circuits.

Finally, we examine the query--depth tradeoff of GLSAE at a fixed target precision in \cref{figInvar}. We compare different $(\mathcal{D},\mathcal{Q})$ pairs, where $\mathcal{D}$ denotes the maximum depth and $\mathcal{Q}$ the total number of queries, rather than the parallel circuit count $N$ and depth $M$ used in the main proof, to account for practically relevant quantum resources and align with prior methods. Indeed, as long as the product $\mathcal{D}\mathcal{Q}$ is fixed, the estimation error is approximately constant (identical height bars in \cref{figInvar}). In contrast to prior low-depth amplitude estimation methods, which require restrictive depth schedules~\citep{giurgicatiron2022low}, strong regularity assumptions~\citep{giurgicatiron2022low}, or constraints on estimator bias~\citep{rall2023amplitude,vu2025lowdepth}, GLSAE achieves the query--depth tradeoff $\mathcal{D}\mathcal{Q} \in \widetilde{\mathcal{O}}(\epsilon^{-2})$ with a efficient single-variable post-processing optimization protocol.

\begin{figure}
\includegraphics[width=\linewidth]{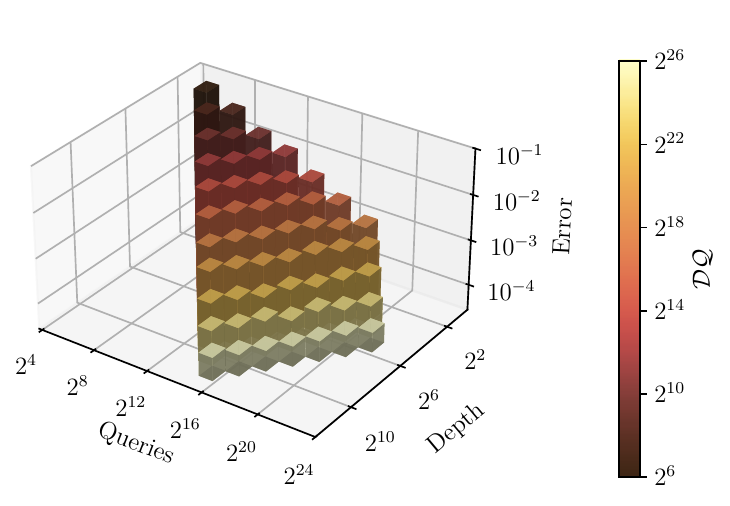}
\caption{\emph{Query--depth invariance of low-depth amplitude estimation.} We show query--depth invariance of running GLSAE on low-depth amplitude amplification circuits. Varying the depth $\mathcal{D}$ and query $\mathcal{Q}$ counts yields estimates of approximately identical precision as long as the product $\mathcal{D}\mathcal{Q}$ is fixed.}
\label{figInvar}
\end{figure}

\section{Discussion}
In this work, we recast amplitude estimation as an eigengap estimation problem and develop a Gaussian-filtered algorithm that achieves both Heisenberg-limited and low-depth resource scalings. Beyond their conceptual simplicity, these methods offer practical advantages over existing techniques, with our numerical results demonstrating comparable performance to prior algorithms that require more complicated classical post-processing. Furthermore, our low-depth algorithm improves performance when compared to prior art and supports a query--depth tradeoff in terms of runtime speedups.

From a theoretical perspective, our results clarify why modern amplitude estimation algorithms~\citep{suzuki2020amplitude,grinko2021iterative,venkateswaran2021quantum, aaronson2020quantum,rall2023amplitude,labib2024quantum} can avoid controlled Grover iterates and ancillary qubits, as well as using phase estimation as a subroutine. Instead of implementing Hadamard tests or textbook phase estimation~\citep{cleve1998quantum,nielsen2010quantum} to extract the phase, one can probe cosine signals of the eigengap generated by amplitude amplification circuits. Rather than estimating absolute eigenphases via an auxiliary qubit, the eigengap formulation extracts relative spectral information directly from expectation values of observables, allowing the gap, and, by extension, the amplitude, to be recovered via classical post-processing.

The advantage of removing the ancillary qubit required of the Hadamard test is the elimination of additional controlled operations to implement reflections about the projector $P$ and the initial state $\ketbra{\psi}{\psi}$. This is especially relevant when $P$ corresponds to a more complex object, such as the membership oracle for quantum counting~\citep{brassard1998quantum,aaronson2020quantum} or block encodings of molecular observables~\citep{steudtner2023fault}.

On the other hand, this benefit is less pronounced for the reflection about $\ketbra{\psi}{\psi}$ as the realization of the reflection operator $U(2\ketbra{0}{0} - \mathbb{I})U^\dagger$ typically already requires a multi-qubit Toffoli spanning the full register. However, while amplitude estimation requires this global reflection, this does not impede its suitability in early fault-tolerant regimes. Recent work~\citep{gosset2025multi} has shown that such multi-qubit Toffoli gates can be implemented with exponentially reduced $T$-count when probabilistic mixtures are allowed. This is compatible with our algorithm, as it already incorporates sampling over circuit depth.

A technical subtlety arises in our methods due to its reliance on cosine signals, which introduces an ambiguity in the underlying frequency when $\lambda$ is near $0$ or $\frac{\pi}{2}$. This affects only boundary regimes, where $a \le \zeta$ or $a \ge 1-\zeta$, but in this regime, the amplitude is effectively known \emph{a priori}. We can then use this knowledge to resolve the ambiguity with a small number of additional measurements. A similar issue appears in prior work by \citet{rall2023amplitude}, although the affected boundary regime is wider in their setting, making comparable corrections more challenging.

Other low-depth amplitude estimation methods~\citep{giurgicatiron2022low} do not explicitly exhibit this behavior, although they rely on additional assumptions or constructions. In particular, the approach of \citet{vu2025lowdepth} achieves comparable theoretical guarantees using aggregated unbiased amplitude estimation~\citep{cornelissen2023sublinear}, where the amplitude is encoded into a phase and subsequently re-encoded back into an amplitude. While effective, this procedure requires additional auxiliary registers for the amplitude-to-phase and phase-to-amplitude transformations~\citep{gilyen2019optimizing}. In contrast, our method operates directly within the amplitude amplification framework using only standard Grover reflections, avoiding such auxiliary constructions with increased qubit consumption.

Overall, our results suggest that the GLSAE algorithm, based on an eigengap formulation of amplitude estimation, offers a flexible and resource-efficient alternative to phase-estimation--based approaches, particularly in early fault-tolerant regimes where circuit depth and additional controlled operators are key constraints. We therefore expect our framework to be a useful and practical primitive for applications such as expected value estimation, mean estimation, observable estimation, and further related tasks \citep{montanaro2015quantum,kothari2023mean,aaronson2020quantum,cleve1998quantum,steudtner2023fault}.

\begin{acknowledgments}
The authors thank Hamza Jnane, Joshua Dai, Tom Bromley, and Simon Benjamin for valuable discussions. They also further thank Jona Erle for carefully reviewing the theorems and proofs in the original preprint and for identifying inconsistencies that have been corrected in the present manuscript.

PWH acknowledges support from the Engineering and Physical Sciences Research Council (EPSRC) Doctoral Training Partnership (EP/W524311/1) with a CASE Conversion Studentship in collaboration with Quantum Motion. PWH further acknowledges support from the Ministry of Education, Taiwan, for a Government Scholarship to Study Abroad (GSSA) and St. Catherine's College, University of Oxford, for an Alan Tayler Scholarship. BK thanks UKRI for the Future Leaders Fellowship Theory to Enable Practical Quantum Advantage (MR/Y015843/1). BK also acknowledges funding from the EPSRC project Robust and Reliable Quantum Computing (RoaRQ, EP/W032635/1). For the purpose of Open Access, the author has applied a CC BY public copyright license to any Author Accepted Manuscript version arising from this submission.
\end{acknowledgments}

\paragraph*{Code and Data Availability.}The source code and data for numerics can be found at \url{https://github.com/georgepwhuang/qae_with_spe}.
    
\bibliography{main}
    
\clearpage
\appendix

\crefalias{section}{appendix}
\crefalias{subsection}{appendix}

\counterwithin{theorem}{section}
\counterwithin{proposition}{section}
\counterwithin{lemma}{section}
\counterwithin{assumption}{section}
\counterwithin{equation}{section}
\counterwithin{definition}{section}

\onecolumngrid
\begin{center}
\noindent{\large\bfseries Appendices for ``Low-depth amplitude estimation via statistical eigengap estimation''}
\end{center}
\appendixtableofcontents
\vspace{3em}
\twocolumngrid

\section{Related work}
\label{appRelated}
We now provide a summary of advancements in amplitude estimation and phase estimation in recent years.

\subsection{Amplitude estimation}
Amplitude estimation~\citep{brassard2002quantum} is a staple in designing quantum algorithms with Grover-type speedups~\citep{grover1996fast}, due to its formulation as a quadratic speedup over classical Monte Carlo algorithms~\cite{montanaro2015quantum}. Since its introduction, amplitude estimation has been used to provide quantum algorithms for various applications such as finance~\citep{rebentrost2018quantum,stamatopoulos2020option,egger2021credit,woerner2019quantum,chakrabarti2021threshold}, quantum chemistry~\citep{kassal2008polynomial, knill2007optimal,obrien2022efficient,huang2025fullqubit}, game theory~\citep{vanapeldoorn2019quantum,huang2024quantum}, machine learning~\citep{wiebe2015quantum,wiebe2016quantum,kerenidis2019qmeans,kerenidis2020qcnn}, and optimization~\citep{kerenidis2020qgd,kerenidis2020qipm,apeldoorn2020quantum,sidford2023quantum}.

Following the original formulation, recent work has demonstrated that quantum amplitude estimation can achieve Heisenberg-limited performance without relying on quantum phase estimation. \citet{suzuki2020amplitude} first showed this by combining amplitude amplification with maximum likelihood estimation, removing the need for both controlled amplification operators and ancilla qubits. This work has been followed by multiple others that discard the use of phase estimation, including adaptive approaches~\citep{aaronson2020quantum} and iterative approaches~\citep{grinko2021iterative,rall2023amplitude}, as well as parallelizable approaches that apply classical post-processing~\citep{venkateswaran2021quantum,labib2024quantum,yang2025classical}. 

Current state-of-the-art among these methods includes ChebAE from quantum signal processing~\citep{rall2023amplitude} and CSAE from a modified version of the classical signal processing algorithm ESPRIT (Estimation of Signal Parameters via Rotational Invariance Techniques)~\citep{labib2024quantum}, which we include in the numerical comparisons to our algorithm. Chebyshev amplitude estimation, or ChebAE, modifies the Iterative Quantum Amplitude Estimation (IQAE)~\citep{grinko2021iterative} such that the algorithm iteratively finds the higher Chebyshev polynomial of $a$ implementable by Grover iterates or quantum signal processing (QSP) circuits~\citep{low2017optimal} for measurement to shrink the confidence interval of the amplitude estimate $\widetilde a$. Amplitude estimation from classical signal processing, or CSAE, on the other hand, opts for an approach where measurements on an exponentially incremental sequence are used to interpolate and generate virtual signals that form a uniform signal grid via iterative outer products of the measurement and virtual signals from the previous iteration~\citep{pillai1985new}. After generating the uniform linear array of signals, a modified version of the classical signal processing algorithm ESPRIT for direction of arrival problems~\citep{roy1989esprit,liu2015remarks} is applied to recover the associated phase $\lambda$ from the input signals.

Other efforts have been made in what is known as ``low-depth'' amplitude estimation algorithms. \citet{giurgicatiron2022low} provided two such algorithms that showed that it is possible to trade off some of the quadratic speedup obtained by amplitude estimation for a lower circuit depth, paving the way for implementation in the early fault-tolerant regime~\citep{campbell2021early, katabarwa2024early,giurgicatiron2022lowexp}. Their first algorithm, Power Law AE, uses power law schedules combined with \citet{suzuki2020amplitude}'s maximum likelihood estimation amplitude estimation to obtain low-depth circuits, providing an algorithm with practical functionality, but it has theoretical guarantees that are conditioned on strong regularity conditions that may not necessarily be present. The second algorithm, QoPrime AE amplitude estimation pieces estimates from several low-depth circuits together with the Chinese Remainder Theorem to obtain the value of $a$, but is restricted to discrete circuit depths of fractional powers of the inverse error. Though the QoPrime AE algorithm provides concrete theoretical guarantees, its performance is inferior to Power Law AE~\citep{giurgicatiron2022low, giurgicatiron2022lowexp}. Further, neither algorithm can be extended to work for the Heisenberg-limited regime. 

\citet{rall2023amplitude} later proposed a low-depth algorithm that iteratively constructs (low-depth) semi-Pellian polynomial transformations of the encoded amplitude $a$ by QSP and, from the measurement results, iteratively shrinks the confidence interval of the amplitude estimate $\widetilde a$. Their algorithm has the caveat where it only operates when the circuit depth is of $\Omega(a^{-1})$ where $a$ is the true amplitude. \citet{vu2025lowdepth} applied Monte Carlo estimation on top of separate individual runs of amplitude estimation with low circuit depth to aggregate the estimates, but requires the individual amplitude estimation algorithms to be unbiased~\citep{apeldoorn2023quantum,cornelissen2023sublinear,rall2023amplitude}. \citet{oshio2025near} showed that it is possible to retain near Heisenberg-limited performance at low depths by preparing global GHZ states and parallelizing QSP circuits~\citep{martyn2025parallel}, but at a cost of high simultaneous qubit consumption and high initial all-to-all entanglement.

Lastly, Gaussian filtering was previously discussed in the context of amplitude estimation by \citet{yang2025classical}, but the filter function is computed purely on classical computers and does not apply importance sampling of quantum circuit executions, and thus does not reach Heisenberg-limited scaling.

\subsection{Phase estimation}
While efforts have been made in amplitude estimation to remove phase estimation as a subroutine, phase estimation itself has been vastly simplified to versions that only require one ancilla qubit. These versions utilize the Hadamard test~\citep{kitaev1995quantum,cleve1998quantum} instead of the full circuit including QFT as an early fault-tolerant alternative~\citep{nelson2024assessment,kiss2025early}. While \citet{kitaev1995quantum}'s original algorithm also featured a single ancilla qubit, the algorithm required exact eigenstates as input and lacks the robustness feature in more recent phase estimation algorithms. \citet{lin2022heisenberg-limited} proposed a phase estimation algorithm that achieved the Heisenberg limit by constructing the cumulative distribution function of the spectral measure, and identifies ``jumps'' in the continuity as eigenvalues of the evolved Hamiltonian. Further work has been extended on what is now known as ``statistical phase estimation'' by introducing an improved Fourier series approximation and randomized Hamiltonian simulation~\citep{wan2022randomized}. By introducing Gaussian filter functions to sample evolution times, \citet{wang2023quantum} were able to achieve low-depth circuits for phase estimation that traded the speedup of phase estimation with circuit depth. Alternative approaches~\citep{ding2023even,ni2023low} inspired by robust phase estimation~\citep{higgins2009demonstrating,kimmel2015robust,belliardo2020achieving,russo2021evaluating,matsuzaki2021direct} that applied multi-level search for eigenvalues were able to produce similar low-depth results. Further development included algorithms that achieved further speedups on the dependency of overlap of the initial state~\citep{dong2022ground-state}, as well as its low-depth counterpart~\citep{wang2025efficient}.

Similar approaches have also been applied to a generalized version of the phase estimation that involved extracting multiple eigenvalues from a single input state and Hamiltonian evolution operator. \citet{somma2019quantum} was the first to approach the task with the Hadamard test and treated the problem as a time-series analysis. Later works approached the generalized task by combining ideas from classical signal processing, including the matrix pencil method~\citep{dutkiewicz2022heisenberg,hua1990matrix,sarkar1995using}, ESPRIT~\citep{stroeks2022spectral, li2023adaptive, roy1989esprit,ding2024esprit}, and compressed sensing~\citep{yi2024quantum, castaldo2025heisenberg,candes2006near-optimal,candes2006robust}. Building upon Gaussian filtering and combining with least squares minimization, \citet{ding2023simultaneous} were able to provide low-depth algorithms for the multiple eigenvalue estimation problem, which was later improved by \citet{ding2024quantum} with a search method similar to orthogonal pursuit matching~\citep{pati1993orthogonal}.

Lastly, eigengap estimation and ancilla-free phase estimation algorithms that discard the use of controlled unitaries and the ancilla qubit to have been explored by preparing equal superpositions as the initial state~\citep{russo2021evaluating,matsuzaki2021direct}, injecting Haar random unitaries and obtaining the survival probability~\citep{zintchenko2016randomized}, combining real-time and imaginary-time evolution~\citep{yang2024phase}, utilizing classical shadows~\citep{chan2025algorithmic}, introducing Gaussian filters~\citep{yang2024resource}, and adapting phase retrieval algorithms~\citep{clinton2024quantum}, with various caveats, limitations, and runtime sacrifices attached to each algorithm.

\section{Eigengap-based formulation of amplitude estimation}
In this section, we expand on the details of the theoretical background of the main text and discuss relevant proofs. 

\subsection{Proof of amplitude amplification as discrete-time evolution}
\label{appLemHam}
As mentioned in the main text, amplitude amplification can be written as a discrete-time evolution operator under some effective Hamiltonian $H_{\rm eff}$ such that for $t \in \mathbb{N}$, 
\begin{equation}
    Q^t = \big(-(\mathbb{I}-2\ketbra{\psi}{\psi})(\mathbb{I}-2P)\big)^t = e^{-itH_{\rm eff}}.
\end{equation}
The following proposition from the main text and proof obtains $H_{\rm eff}$ and its eigenvalues.

\ham*
\begin{proof}
The state $\ket{\psi}$ can be decomposed into a good state and a bad state as follows:
\begin{equation}
    \ket{\psi} = \sin(\lambda)\ket{\psi_g} + \cos(\lambda)\ket{\psi_b},
\end{equation}
where
\begin{equation}
    \ket{\psi_g} = \frac{P\ket{\psi}}{\sqrt{\braket{\psi|P|\psi}}}, \quad
    \ket{\psi_b} = \frac{(\mathbb{I}-P)\ket{\psi}}{\sqrt{\braket{\psi|(\mathbb{I}-P)|\psi}}}.
\end{equation}

Amplitude amplification is implemented by the Grover iterate (up to a global phase)
\begin{equation}
    Q = -(\mathbb{I}-2\ketbra{\psi}{\psi})(\mathbb{I}-2P),
\end{equation}
where repeated application induces a coherent rotation in the two-dimensional subspace $\mathcal{H}_\psi = \mathrm{span}\{\ket{\psi_g},\ket{\psi_b}\}$~\citep{brassard2002quantum}.

Restricting to $\mathcal{H}_\psi$, the operator $Q$ acts as a real-valued rotation,
\begin{equation}
\begin{array}{c@{\;}c}
  & \begin{array}{cc}
      \scriptstyle\ket{\psi_g}\qquad & \,\scriptstyle\ket{\psi_b}
    \end{array} \\[0.4ex]
  \begin{array}{c}
    \scriptstyle\ket{\psi_g} \\
    \scriptstyle\ket{\psi_b}
  \end{array}
  &
  \left[
  \begin{array}{cc}
    \cos(2\lambda)  & \sin(2\lambda) \\
    -\sin(2\lambda) & \cos(2\lambda)
  \end{array}
  \right]
\end{array}
\end{equation}
which may be identified as a Pauli-Y rotation with angle $-2\lambda$. In the orthogonal complement space $\mathcal{H}_{\psi}^{\perp}$, we see that $\mathbb{I}-2\ketbra{\psi}{\psi}$ acts trivially, while $-(\mathbb{I}-2P)$ applies a phase $-1$ to the bad subspace $\mathcal{H}_B$. With this, we have constructed the spectrum for $Q$ in both $\mathcal{H}_{\psi}$ and $\mathcal{H}_{\psi}^\perp$. We can thus interpret $Q$ as the discrete-time propagator generated by an effective Hamiltonian $H_{\rm eff}$. Writing
\begin{equation}
    Y_{\psi}
    = i\ketbra{\psi_b}{\psi_g} - i\ketbra{\psi_g}{\psi_b},
\end{equation}
we may express $Q = e^{-iH_{\rm eff}}$ with
\begin{equation}
    H_{\rm eff} = -2\lambda\,Y_{\psi}
        + \pi\,(\mathbb{I}-P-\ketbra{\psi_b}{\psi_b}).
\end{equation}
The spectrum of $H_{\rm eff}$ consists of eigenvalues $\{\pm 2\lambda,\pi,0\}$, with the nontrivial dynamics confined to $\mathcal{H}_\psi$. Under this construction, we see that amplitude amplification corresponds to discrete-time evolution under the effective Hamiltonian $H_{\rm eff}$, and that $Q$ has eigenvalues $e^{\pm 2i\lambda}$ in subspace $\mathcal{H}_\psi$. Substituting $a = \sin^2(\lambda)$, we see that the eigenvalues are $e^{\pm 2i\sin^{-1}{\sqrt{a}}}$. 

To obtain the eigenvalue in the form of imaginary exponents of arccosines, we apply $Q$ to the $\ket{\psi}$ such that 
\begin{align}
Q\ket{\psi} &= -(\mathbb{I}-2\ketbra{\psi}{\psi})(\mathbb{I}-2P)\ket{\psi} \nonumber\\
&= -\ket{\psi} + 2 \ket{\psi} + 2P \ket{\psi} -4 \braket{\psi|P|\psi} \ket{\psi}\nonumber\\
&= (1-2a)\ket{\psi} + 2 (P\ket{\psi} - \braket{\psi|P|\psi}\ket{\psi})
\end{align}
where the latter term is orthogonal to $\ket{\psi}$. We then see that the eigenvalues of $Q$ are $e^{\pm i\cos^{-1}(1-2a)},$ corresponding to conventions in literature on qubitization.

To provide these results in the qubitization picture~\citep{low2019hamiltonian}, we can construct the reflection operator $R = U^\dagger (\mathbb{I} - 2P) U$. By applying qubitization over all qubits such that the entire Hilbert space is reduced to an effective 2-level system, we obtain the qubitized operator $S$ such that 
\begin{equation}
    S = R(2\ketbra{0}{0}-\mathbb{I}) = U^\dagger (\mathbb{I} - 2P) U(2\ketbra{0}{0}-\mathbb{I}).
\end{equation}
To obtain the eigenvalues of the qubitized values, we first note that
\begin{equation}
    \braket{0|U^\dagger (\mathbb{I} - 2P) U|0} = 1 -2 \braket{0|U^\dagger PU|0} = 1 - 2a.
\end{equation}
Thus, the eigenvalues of qubitized operator $S$ is then $e^{\pm i \cos^{-1}(1-2a)}$. Noting that $S = U^\dagger Q U$, we see that by unitary invariance of eigenvalues, the eigenvalues of $Q$ are also $e^{\pm i \cos^{-1}(1-2a)}$.
\end{proof}

\subsection{Proof of amplitude estimation as eigengap estimation}
\label{appLemEigap}

Given the construction of amplitude amplification as a discrete-time evolution operator, where the eigenvalues of the effective Hamiltonian in the nontrivial subspace $\mathcal{H}_{\psi}$ being $\pm E_{\rm eff} = \pm 2\lambda$, the eigengap in this subspace is thus $\Delta_{\rm eff} = E_{\rm eff} - (- E_{\rm eff}) = 4\lambda$. This can be extracted by eigengap estimation techniques, which forego the use of phase estimation, which requires at least an additional ancilla qubit and controlled time evolutions, aligning with amplitude estimation protocols without phase estimation~\citep{suzuki2020amplitude,aaronson2020quantum,grinko2021iterative,venkateswaran2021quantum,labib2024quantum}.
\eigap*
\begin{proof}
From the fact that $Q$ is a Pauli Y rotation in the subspace $\mathcal{H}_{\psi}$, we can see that $Q$ has two eigenvectors in the subspace $\mathcal{H}_{\psi}$ as follows:
\begin{equation}
    \ket{\psi_\pm} = \frac{1}{\sqrt{2}} (\ket{\psi_g} \pm i\ket{\psi_b})
\end{equation}
and with the eigenvalues of $e^{\pm iE_{\rm eff}}$, respectively. We now express $\ket{\psi}$ as the two eigenvectors as follows~\citep{brassard2002quantum}:
\begin{equation}
    \ket{\psi} = \frac{-i}{\sqrt{2}} \left(e^{i\lambda}\ket{\psi_+} - e^{-i\lambda}\ket{\psi_-}\right)
\end{equation}
We then see that
\begin{equation}
    Q^t\ket{\psi} = \frac{-i}{\sqrt{2}}(e^{i(tE_{\rm eff}+\lambda)}\ket{\psi_+} - e^{-i(tE_{\rm eff}+\lambda)}\ket{\psi_-})
\end{equation}
Upon projective measurement with $P$, we can then obtain
\begin{align}
    &\braket{\psi|Q^{-t}PQ^t|\psi} \nonumber\\
    &=
    \begin{aligned}[t]
    &\frac{1}{2} e^{i((tE_{\rm eff}+\lambda) -(tE_{\rm eff}+\lambda))}\braket{\psi_+|P|\psi_+}\\
    &- \frac{1}{2} e^{i((tE_{\rm eff}+\lambda)-(-(tE_{\rm eff}+\lambda)))}\braket{\psi_-|P|\psi_+}\\
    &- \frac{1}{2}e^{i((-(tE_{\rm eff}+\lambda))-(tE_{\rm eff}+\lambda))}\braket{\psi_+|P|\psi_-} \\
    &+\frac{1}{2}e^{i((-(tE_{\rm eff}+\lambda))-(-(tE_{\rm eff}+\lambda)))}\braket{\psi_-|P|\psi_-}
    \end{aligned}\nonumber\\
    &\begin{multlined}[b]=\frac{1}{2}\braket{\psi_+|P|\psi_+}- \frac{1}{2} e^{i(t\Delta_{\rm eff}+2\lambda)}\braket{\psi_-|P|\psi_+}\\
    -\frac{1}{2}e^{-i(t\Delta_{\rm eff}+2\lambda)}\braket{\psi_+|P|\psi_-} +\frac{1}{2}\braket{\psi_-|P|\psi_-}
    \end{multlined}
\end{align}
Then, calculating the four inner products, we obtain
\begin{multline}
    \braket{\psi_+|P|\psi_+} = \braket{\psi_-|P|\psi_+} = \braket{\psi_+|P|\psi_-} \\
    = \braket{\psi_-|P|\psi_-} =\frac{1}{2} \braket{\psi_g|P|\psi_g} =\frac{1}{2}.
\end{multline}
We thus obtain 
\begin{equation}
    \braket{\psi|Q^{-t}PQ^t|\psi} =\frac{1}{2}- \frac{1}{4} e^{i(t\Delta_{\rm eff}+2\lambda)} - \frac{1}{4}e^{-i(t\Delta_{\rm eff}+2\lambda)}
\end{equation}
Replacing the projective measurement with the observable $O = \mathbb{I}-2P$, we obtain 
\begin{equation}
    \braket{\psi|Q^{-t}OQ^t|\psi} = \frac{1}{2} e^{i(t\Delta_{\rm eff}+2\lambda)} + \frac{1}{2}e^{-i(t\Delta_{\rm eff}+2\lambda)}.
\end{equation}
Similarly, if we use a Loschmidt-echo-like projective measurement, we would obtain 
\begin{align}
    &\braket{\psi|Q^{-t}|\psi}\braket{\psi|Q^t|\psi} \nonumber\\
    &\begin{multlined}[b]=\frac{1}{2}\braket{\psi_+|\psi}\braket{\psi|\psi_+}- \frac{1}{2} e^{i(t\Delta_{\rm eff}+2\lambda)}\braket{\psi_-|\psi}\braket{\psi|\psi_+}\\
    - \frac{1}{2}e^{-i(t\Delta_{\rm eff}+2\lambda)}\braket{\psi_+|\psi}\braket{\psi|\psi_-} +\frac{1}{2}\braket{\psi_-|\psi}\braket{\psi|\psi_-}
    \end{multlined}
\end{align}
Calculating the inner products, we obtain 
\begin{align}
    \braket{\psi_+|\psi} = \frac{-i}{\sqrt{2}} e^{i\lambda},\quad \braket{\psi_-|\psi} = \frac{i}{\sqrt{2}} e^{-i\lambda}.
\end{align}
Plugging in the results, we obtain 
\begin{align}
    &\braket{\psi|Q^{-t}|\psi}\braket{\psi|Q^t|\psi} = \frac{1}{2} +\frac{1}{4} e^{it\Delta_{\rm eff}}
    + \frac{1}{4}e^{-it\Delta_{\rm eff}}.
\end{align}
Again replacing the projective measurement with the observable $O' = 2\ketbra{\psi}{\psi}-\mathbb{I}$, we obtain 
\begin{align}
    &\braket{\psi|Q^{-t}(2\ketbra{\psi}{\psi}-\mathbb{I})Q^t|\psi} = \frac{1}{2} e^{it\Delta_{\rm eff}}
    + \frac{1}{2}e^{-it\Delta_{\rm eff}}.
\end{align}
\end{proof}

\section{Properties of the periodic Gaussian}
\label{appGauss}
In this appendix, we establish several functional properties of the periodic Gaussian that are required for the analysis of our algorithms. In particular, we show that the periodic Gaussian satisfies strong convexity and smoothness conditions as shown in \cref{figGaussProp}. In our proofs in \cref{appThmMain}, we use the smoothness property to provide quadratic upper bounds and strong concavity to provide quadratic lower bounds for relevant regions.
\begin{figure}
    \includegraphics[width=\linewidth]{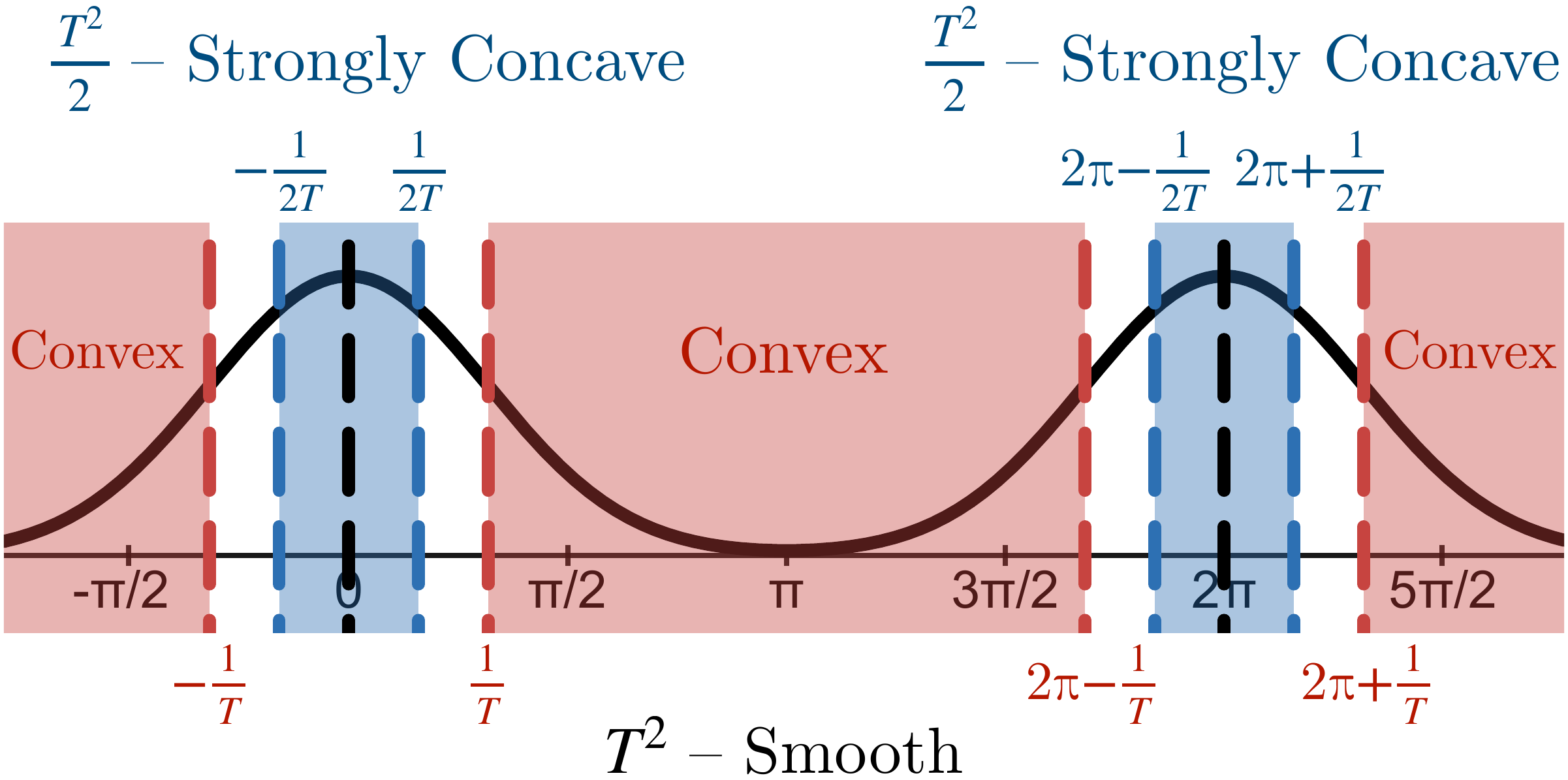}
    \caption{\emph{Regional functional properties of the periodic Gaussian.} In the periodic Gaussian with standard derivation $T\ge2/\pi$, up to periodicity $2\pi$, the region $[\frac{1}{T}, 2\pi -\frac{1}{T}]$ (red) is convex (\cref{lemCvxGauss}), the region $[-\frac{1}{2T}, \frac{1}{2T}]$ (blue) is $\frac{T^2}{2}$-strongly concave (\cref{lemStrongCvx}), and the entire function is $T^2$-smooth (\cref{lemSmoothGauss}).}
    \label{figGaussProp}
\end{figure}

We first define a simple Gaussian
\begin{equation}
\varphi_T(x) = \exp\left(-\frac{x^2T^2}{2}\right).
\end{equation}
Based on this Gaussian, we define a periodic Gaussian
\begin{equation}
\Phi_T(x) = \frac{1}{\mathcal{Z}}\sum_{j\in\mathbb{Z}}\varphi_T(x+2j\pi),
\end{equation}
where 
\begin{equation}
\mathcal{Z} = \sum_{j\in\mathbb{Z}} \varphi_T(2j\pi)=\sum_{j\in\mathbb{Z}}\exp\left(-\frac{(2j\pi)^2T^2}{2}\right),
\end{equation}
such that $\Phi_T(2j\pi) = 1$ for all $j\in\mathbb{Z}$.
Note that $\Phi_T(x)$ is the discrete-time Fourier/cosine transform of the discrete Gaussian
\begin{equation}
p_T(m) = \frac{1}{\sqrt{2\pi}T\mathcal{Z}}\exp\left(-\frac{m^2}{2T^2}\right)
\label{eqGaussPrior}
\end{equation}
where 
\begin{align}
    \Phi_T(k) = &\sum_{m=-\infty}^{\infty} p_T(m) \cos(km) \nonumber\\
    &= p_T(0) + 2\sum_{m=1}^{\infty} p_T(m) \cos(km)\nonumber\\
    &= \frac{1}{\mathcal{Z}}\sum_{j\in\mathbb{Z}}\exp\left(-\frac{(k+2j\pi)^2T^2}{2}\right).
\end{align} 
Thus, we can also find that 
\begin{equation}
\mathcal{Z} = \sum_{m=-\infty}^{\infty} \frac{1}{\sqrt{2\pi}T}\exp\left(-\frac{m^2}{2T^2}\right).
\end{equation}

We summarize the important lemmas used in future proofs as below (and in \cref{figGaussProp}):
\begin{restatable*}[Range of convexity of the periodic Gaussian]{lemma}{cvxGauss}
Let $T \ge 1/\pi$. Then $\Phi_T$ is convex on the interval $[\frac{1}{T}, 2\pi-\frac{1}{T}]$.
\label{lemCvxGauss}
\end{restatable*}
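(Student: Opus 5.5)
The approach is to differentiate the periodic Gaussian term by term and show that every summand of $\Phi_T''$ is nonnegative on the stated interval. The condition $T\ge 1/\pi$ is needed only so that $\frac{1}{T}\le 2\pi-\frac{1}{T}$, i.e.\ so that the interval is nonempty.

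\textbf{Step 1: second derivative of a single Gaussian.} For $\varphi_T(y)=\exp(-y^2T^2/2)$ we have $\varphi_T'(y)=-T^2 y\,\varphi_T(y)$ and
\begin{equation*}
\varphi_T''(y)=T^2\left(T^2y^2-1\right)\varphi_T(y).
\end{equation*}
Hence $\varphi_T''(y)\ge 0$ exactly when $|y|\ge 1/T$.

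\textbf{Step 2: differentiate the periodic sum term by term.} On any compact set, the series $\sum_{j}\varphi_T^{(k)}(x+2j\pi)$ for $k=0,1,2$ converge uniformly. Each term is a polynomial times a Gaussian in $j$, so this follows from the Weierstrass M-test. Term-by-term differentiation is therefore legitimate, giving
\begin{equation*}
\Phi_T''(x)=\frac{1}{\mathcal{Z}}\sum_{j\in\mathbb{Z}}\varphi_T''(x+2j\pi),
\end{equation*}
with $\mathcal{Z}>0$.

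\textbf{Step 3: every shift lies outside $(-1/T,1/T)$.} Fix $x\in[\frac{1}{T},2\pi-\frac{1}{T}]$ and set $y_j=x+2j\pi$. We check each range of $j$.
\begin{itemize}
\item For $j=0$: $y_0=x\ge 1/T$.
\item For $j=-1$: $y_{-1}=x-2\pi\le -1/T$.
\item For $j\ge 1$: $y_j\ge 2\pi+1/T$.
\item For $j\le -2$: $y_j\le -2\pi-1/T$.
\end{itemize}
So $|y_j|\ge 1/T$ for all $j$. By Step 1 every summand is nonnegative, hence $\Phi_T''(x)\ge 0$ on the whole interval. Since $\Phi_T$ is $C^2$, this gives convexity there.

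There is no real obstacle here. The only points needing care are the justification of term-by-term differentiation in Step 2 and the bookkeeping of the shifts $j=0,-1$ in Step 3, which are the terms that come closest to the concave window $(-1/T,1/T)$.
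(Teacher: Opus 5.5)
Your proof is correct and follows the paper's argument. Both write $\Phi_T''(x)=\frac{1}{\mathcal{Z}}\sum_{j}\varphi_T''(x+2j\pi)$ and observe that every shift $x+2j\pi$ lies outside $(-\frac{1}{T},\frac{1}{T})$, where $\varphi_T''\ge 0$. Your justification of term-by-term differentiation and your explicit checks of the $j=0$ and $j=-1$ shifts fill in details the paper leaves implicit.
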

\begin{restatable*}[Smoothness of the periodic Gaussian]{lemma}{smooth}
Let $T\ge 2/\pi$, then $\Phi_T$ is $T^2$-smooth.
\label{lemSmoothGauss}
\end{restatable*}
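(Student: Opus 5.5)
We need to show $|\Phi_T''(x)|\le T^2$ for all $x$, given $T\ge 2/\pi$. Our plan is to bound the second derivative in two complementary ways and combine them.

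\textbf{Step 1: Fourier series bound.} Since $\Phi_T$ is the cosine transform of the discrete Gaussian $p_T$, we can differentiate the series $\Phi_T(k)=\sum_m p_T(m)\cos(km)$ term by term, which is justified by Gaussian decay of $p_T$. This gives
\[
|\Phi_T''(x)| \le \sum_{m} p_T(m)\, m^2 .
\]
So it suffices to show that the discrete second moment $\sum_m m^2 p_T(m)$ is at most $T^2$.

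\textbf{Step 2: Poisson summation.} Apply Poisson summation to $g(y)=\frac{1}{\sqrt{2\pi}T}\,y^2 e^{-y^2/(2T^2)}$. Its Fourier transform at frequency $2\pi j$ is
\[
\big(T^2 - (2\pi j)^2 T^4\big)\,e^{-(2\pi j)^2T^2/2}.
\]
Hence
\[
\sum_m m^2 p_T(m)=\frac{1}{\mathcal{Z}}\sum_{j}\big(T^2-4\pi^2j^2T^4\big)e^{-2\pi^2j^2T^2},
\]
with $\mathcal{Z}=\sum_j e^{-2\pi^2 j^2T^2}$ by the same Poisson identity already recorded in the paper. Equivalently, this is $-\Phi_T''(0)$ computed from the periodic-Gaussian representation.

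\textbf{Step 3: Compare with $T^2$.} The expression above equals
\[
T^2 - \frac{T^4}{\mathcal{Z}}\sum_j 4\pi^2 j^2 e^{-2\pi^2 j^2 T^2} \;\le\; T^2 .
\]
The subtracted sum is nonnegative, so the bound holds. The hypothesis $T\ge 2/\pi$ is then not even needed for this step; it only guarantees that $\mathcal{Z}$ is close to $1$.

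\textbf{Main obstacle and fallback.} The main thing to get right is the Poisson summation constants: the normalizations $\sqrt{2\pi}T$ and $\mathcal{Z}$ must match the paper's definition of $\Phi_T$. If that goes wrong, the fallback is to work directly with the periodic sum $\Phi_T=\mathcal{Z}^{-1}\sum_j\varphi_T(x+2\pi j)$. Using $\varphi_T''(y)=T^2(y^2T^2-1)\varphi_T(y)$ and $|u^2-1|e^{-u^2/2}\le 1$ (the maximum of $(u^2-1)e^{-u^2/2}$ is $2e^{-3/2}<1$), on a fundamental domain $|x|\le\pi$ we bound the $j=0$ term by $T^2\varphi$-type quantities. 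The tail terms with $|x+2\pi j|\ge\pi$ are then controlled geometrically using $T\ge 2/\pi$, since $T\pi\ge 2$ makes $e^{-\pi^2T^2/2}$ small. Combined with $\mathcal{Z}\ge1$, this yields $|\Phi_T''|\le T^2$.
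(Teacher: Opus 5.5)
Your argument is correct, and it takes a different and cleaner route than the paper.

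\textbf{The paper's route.} The paper works only in the image representation $\Phi_T=\mathcal{Z}^{-1}\sum_j\varphi_T(\cdot+2\pi j)$. It locates candidate critical points of $\Phi_T''$ (at $0$, at $\pi$, and possibly near $\sqrt{3}/T$) and bounds $|\Phi_T''|$ at each by comparing the image tails with integrals. This is where $T\ge 2/\pi$ is used: it makes the terms $\varphi_T''(x+2\pi j)$, $j\neq 0$, nonnegative and numerically small, which produces the constants $0.0055$, $0.95$ and $0.905$.

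\textbf{Your route.} You pass to the dual cosine series, in which $\Phi_T$ is the characteristic function of the symmetric discrete Gaussian. Then $|\Phi_T''(x)|\le\sum_m m^2p_T(m)=-\Phi_T''(0)$ for every $x$, so the location of the extremum comes for free. Evaluating $-\Phi_T''(0)$ through the image sum gives the exact identity $-\Phi_T''(0)=T^2-\frac{4\pi^2T^4}{\mathcal{Z}}\sum_j j^2e^{-2\pi^2j^2T^2}$. The leading term is exactly $T^2$ because $\mathcal{Z}^{-1}\sum_j e^{-2\pi^2j^2T^2}=1$, and the correction has a definite sign.

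\textbf{What each buys.} Your argument needs no critical-point bookkeeping. By contrast, the paper's step that $\Phi_T''(0)$ is the global minimum is asserted from $\Phi_T''(0)<0$ alone. Your argument also shows the lemma holds for every $T>0$, not just $T\ge 2/\pi$. The paper's case analysis gives only numerical side information, such as strict bounds like $\Phi_T''<0.95\,T^2$ away from the origin, which this lemma does not need.

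\textbf{The fallback.} You can drop it. As sketched it could not stand on its own: at $x=0$ the $j=0$ term already has magnitude exactly $T^2$. You would therefore need the sign of the image terms on $|x|\le\pi$ and explicit numerics on the positive side, which is essentially the paper's proof.
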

\begin{restatable*}[Range of convexity of the negated periodic Gaussian]{lemma}{strongCvx}
\label{lemStrongCvx}
Let $T\ge 2/\pi$, then $1-\Phi_T$ is $\frac{T^2}{2}$-strongly convex on the interval $[-\frac{1}{2T}, \frac{1}{2T}]$.
\end{restatable*}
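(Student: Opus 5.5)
The plan is to verify the second-order condition directly: it suffices to show $\Phi_T''(x) \le -\frac{T^2}{2}$ for all $x\in[-\frac{1}{2T},\frac{1}{2T}]$. Since the series defining $\Phi_T$ and its termwise derivatives converge uniformly (Gaussian decay), we may differentiate termwise:
\begin{equation*}
\Phi_T''(x) = \frac{T^2}{\mathcal{Z}}\sum_{j\in\mathbb{Z}} g\big((x+2j\pi)^2T^2\big), \qquad g(u) = (u-1)e^{-u/2}.
\end{equation*}
We split this sum into the $j=0$ term, which supplies the strong concavity, and the $j\neq 0$ tail, which we show is negligible. We also need an upper bound on $\mathcal{Z}$.

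\emph{The $j=0$ term.} Here $u = x^2T^2 \in [0,\frac14]$. Since $g'(u) = \frac{1}{2}e^{-u/2}(3-u) > 0$ on this range, $g$ is increasing, so
\begin{equation*}
g(u) \le g(\tfrac14) = -\tfrac34 e^{-1/8} \approx -0.662.
\end{equation*}

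\emph{The tail $j\neq 0$.} We have $|x+2j\pi|T \ge 2\pi|j|T - \frac12 \ge 4|j| - \frac12 \ge 3.5$, using $T\ge 2/\pi$. For positive $u$ we have $g(u)\le u e^{-u/2}$, and this function is decreasing for $u\ge 2$. Hence each tail term is at most $u_j e^{-u_j/2}$ with $u_j = (4|j|-\frac12)^2$. The $|j|=1$ terms contribute at most $2\cdot 12.25\, e^{-6.125}\approx 0.054$, and higher $|j|$ are superexponentially smaller. So the tail totals at most about $0.06$.

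\emph{The normalization.} We have $1 \le \mathcal{Z} = \sum_j e^{-2\pi^2 j^2T^2} \le 1 + 2\sum_{j\ge1} e^{-8j^2} \le 1 + 10^{-3}$.

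\emph{Combining.} These bounds give
\begin{equation*}
\Phi_T''(x) \le \frac{T^2}{\mathcal{Z}}\,(-0.662 + 0.06) \le -\frac{0.60}{1.001}\,T^2 \le -\frac{T^2}{2}.
\end{equation*}
Therefore $(1-\Phi_T)'' \ge \frac{T^2}{2}$ on the interval, which is the claimed $\frac{T^2}{2}$-strong convexity.

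The main obstacle is keeping the tail estimate rigorous with clean constants, in particular checking the monotonicity region of $u e^{-u/2}$ and summing the geometric-type series. The margin between $0.662$ and $0.5$ is comfortable, so crude bounds will suffice. One point requires care: because the numerator is negative, dividing by $\mathcal{Z}\ge 1$ weakens the bound, so an explicit upper bound on $\mathcal{Z}$ (not just $\mathcal{Z}\ge1$) is genuinely needed.
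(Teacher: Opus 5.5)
Your proof is correct. It shares the paper's skeleton:
\begin{itemize}
\item the second-derivative criterion;
\item the $j=0$ term contributing $\tfrac34 e^{-1/8}T^2\approx 0.66T^2$;
\item a small periodic tail;
\item an explicit upper bound on $\mathcal{Z}$ (the paper gets $\mathcal{Z}<1.0008$, and, as you note, this is genuinely needed).
\end{itemize}

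The real difference is how the whole interval is covered. The paper first proves that $\Phi_T'''\ge 0$ on $[0,\frac{1}{2T}]$. It does this by pairing $\varphi_T'''(x+2j\pi)$ with $\varphi_T'''(x-2j\pi)$, using oddness of $\varphi_T'''$ and its monotonicity on $(-\infty,-\sqrt{3+\sqrt6}/T]$. Hence $\Phi_T''$ is increasing there and its worst case is the endpoint. The paper then evaluates only $-\Phi_T''(\frac{1}{2T})$, bounding the $j>0$ and $j<0$ tails separately (about $0.0055T^2$ and $0.03T^2$), and appeals to evenness for negative $x$.

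You instead bound each summand by its own supremum over the interval. For $j=0$ you use monotonicity of $g$ on $[0,\frac14]$. For the tail you use the uniform estimate $|x+2j\pi|T\ge 4|j|-\frac12$ together with $ue^{-u/2}$ being decreasing for $u\ge 2$. This needs no information about $\Phi_T'''$ and handles both signs of $x$ at once.

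Your route is shorter and more elementary. Its cost is a slightly cruder tail ($\approx 0.054$ versus $\approx 0.036$) and final constant ($\approx 0.599$ versus $0.62$), which is immaterial given the margin over $\frac12$. The paper's route yields the extra structural fact that $\Phi_T''$ is monotone on $[0,\frac{1}{2T}]$, which the lemma itself does not require.
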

Readers interested only in the primary algorithmic contributions may skip forward to \cref{appThmMain}. The remainder of this appendix is included for completeness and for verification of the three lemmas above only.

\subsection{Local convexity}

We now provide some results on the convexity and smoothness of $\Phi_T$ for the ease of proof in later sections. Due to periodicity, results that specify a range are applicable with a periodicity of $2\pi$. For example, when we say convexity holds for and interval $[\frac{1}{T}, 2\pi - \frac{1}{T}]$, the results also hold for $[2j\pi + \frac{1}{T}, 2(j+1)\pi - \frac{1}{T}]$ for $j \in \mathbb{Z}$, To do so, we calculate the first four derivatives of $\varphi_T$ as follows:
\begin{align}
    \varphi_T'(x) &= -T^2x\varphi_T(x),\label{eqGauss1stDer}\\
    \varphi_T''(x) &= (T^4x^2 - T^2)\varphi_T(x),\label{eqGauss2ndDer}\\
    \varphi_T'''(x) &= -T^4x(T^2x^2-3)\varphi_T(x),\label{eqGauss3rdDer}\\
    \varphi_T''''(x) &= (T^8x^4 - 6T^6x^2+3T^4)\varphi_T(x).\label{eqGauss4thDer}
\end{align}

We first show a region of the periodic Gaussian that is convex, to both aid in the proof of later properties, and for the main proof.
\cvxGauss
\begin{proof}
From \cref{eqGauss2ndDer}, we have $\varphi_T''(x) = (T^4x^2 - T^2)\varphi_T(x)$. As $\varphi_T(x)>0$, we can see that $\varphi_T''(x) \ge 0$ if $x\in\mathbb{R}\setminus[-\frac{1}{T},\frac{1}{T}]$, given that $T^4x^2-T^2 \ge 0$. As derivatives are linear transforms, we note that
\begin{equation}
\Phi_T''(x) = \frac{1}{\mathcal{Z}}\sum_{j\in\mathbb{Z}}\varphi_T''(x+2j\pi).
\end{equation}
We can then see that if $x \in [\frac{1}{T}, 2\pi-\frac{1}{T}]$, then $\varphi_T''(x+2j\pi)\ge0$ for all $j \in \mathbb{Z}$, hence $\Phi_T''(x) \ge 0$, then $\Phi_T$ is convex. 
\end{proof}

\subsection{Smoothness}
We now show the smoothness of the periodic Gaussian to provide quadratic upper bounds on the function.
\begin{definition}[Smoothness of a function~\citep{nesterov2004introductory}]
Let $f : \mathbb R^d \to \mathbb R$, and $\beta>0$. We say that $f$ is $\beta$-smooth if it is differentiable
and if $\nabla f : \mathbb R^d \to \mathbb R^d$ is $\beta$-Lipschitz:
\begin{equation*}
\forall x,y \in \mathbb R^d
, \left\lVert\nabla f(y) -\nabla f(x)\right\rVert \le\beta \left\lVert y-x\right\rVert. 
\end{equation*}
\end{definition}
\begin{lemma}[Properties of $\beta$-smooth functions~\citep{nesterov2004introductory}]
\label{lemSmoothProp}
If $f$ is $\beta$-smooth and convex, then for all $x, y$:
\begin{equation*}
f(y) \le f(x) + \langle\nabla f(x), y-x\rangle + \frac{\beta}{2}\lVert y-x \rVert^2.
\end{equation*}
Further, let $\alpha \in [0, 1]$, then for all $x, y$
\begin{multline*}
\alpha f(x) + (1-\alpha) f(y) - f\left(\alpha x+ (1-\alpha) y\right) \\
\le \frac{\alpha(1-\alpha)\beta}{2}\lVert y-x \rVert^2
\end{multline*}
\end{lemma}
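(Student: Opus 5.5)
The plan is to prove the first inequality directly from the Lipschitz condition on $\nabla f$ via the fundamental theorem of calculus along a line segment. The second inequality then follows by applying the first one twice, based at the convex combination point. Convexity is not actually needed for either bound; only $\beta$-smoothness is used. The convexity hypothesis is simply carried along as in the cited statement of \citet{nesterov2004introductory}.

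\emph{First inequality.} Fix $x,y$ and set $g(s)=f(x+s(y-x))$ for $s\in[0,1]$. Since $\nabla f$ is $\beta$-Lipschitz, it is continuous, so $g$ is continuously differentiable with $g'(s)=\langle\nabla f(x+s(y-x)),y-x\rangle$. Hence
\begin{equation*}
f(y)-f(x)-\langle\nabla f(x),y-x\rangle=\int_0^1\langle\nabla f(x+s(y-x))-\nabla f(x),\,y-x\rangle\,ds .
\end{equation*}
By Cauchy--Schwarz and the Lipschitz property, the integrand is at most $\beta s\lVert y-x\rVert^2$. Integrating over $s\in[0,1]$ gives $\frac{\beta}{2}\lVert y-x\rVert^2$, which is the first claim.

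\emph{Second inequality.} Let $z=\alpha x+(1-\alpha)y$. Apply the first inequality with base point $z$, once toward $x$ and once toward $y$:
\begin{align*}
f(x)&\le f(z)+\langle\nabla f(z),x-z\rangle+\tfrac{\beta}{2}\lVert x-z\rVert^2,\\
f(y)&\le f(z)+\langle\nabla f(z),y-z\rangle+\tfrac{\beta}{2}\lVert y-z\rVert^2 .
\end{align*}
Note that $x-z=(1-\alpha)(x-y)$ and $y-z=\alpha(y-x)$. Multiply the first line by $\alpha$, the second by $1-\alpha$, and add.

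The gradient terms cancel, since $\alpha(1-\alpha)(x-y)+(1-\alpha)\alpha(y-x)=0$. The quadratic terms combine to
\begin{equation*}
\tfrac{\beta}{2}\bigl(\alpha(1-\alpha)^2+(1-\alpha)\alpha^2\bigr)\lVert y-x\rVert^2=\tfrac{\alpha(1-\alpha)\beta}{2}\lVert y-x\rVert^2 .
\end{equation*}
Rearranging gives the stated bound on $\alpha f(x)+(1-\alpha)f(y)-f(z)$.

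There is no real obstacle in this argument. The only point needing care is justifying the integral representation, which requires $g'$ to be integrable. This holds because Lipschitz continuity of $\nabla f$ makes $g'$ continuous on $[0,1]$.
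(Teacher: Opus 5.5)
Your proof is correct. The paper gives no argument of its own for this lemma, only the citation to \citet{nesterov2004introductory}, and your route is the standard proof in that source: the integral form of the descent inequality, then applying it twice from $z=\alpha x+(1-\alpha)y$ with weights $\alpha$ and $1-\alpha$ so that the gradient terms cancel. Your remark that convexity is never used is worth keeping. The paper later applies the first inequality to $1-\Phi_T$ and the second to $\Phi_T$, and neither function is globally convex, so those applications depend on exactly the convexity-free version you prove.
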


\smooth
\begin{proof}
From the definition of smoothness of a function and the Lipschitz property, we can find that the maximum value of the second derivative serves as a good $\beta$. Thus, the remainder of the proof is dedicated to bounding $\lvert\Phi_T''\rvert$. The high-level proof idea is to check critical points of $\Phi_T''$ and provide upper bounds on their magnitude.

To find local maxima/minima in $\Phi_T''$, we find the roots for the third derivative $\Phi_T'''$, indicating the existence of critical points, where again we can check by linearity of derivatives.
\begin{equation}
\Phi_T'''(x) = \frac{1}{\mathcal{Z}}\sum_{j\in\mathbb{Z}}\varphi_T'''(x+2j\pi).
\end{equation}
We first note that $\varphi_T'''$ is a odd function such that $\varphi_T'''(x)= -\varphi_T'''(-x)$, hence $x=0$ is a critical point. Pairing $\Phi_T'''(x+2j\pi)$ and $\Phi_T'''(x-2(j+1)\pi)$ where $j \in \mathbb{Z}_{\ge 0}$, such that $\Phi_T'''(x+2j\pi) = -\Phi_T'''(x-2(j+1)\pi)$, we find that $x=\pi$ is also a critical point. 

We also observe that from the third and fourth derivatives of $\varphi(x)$ in \cref{eqGauss3rdDer,eqGauss4thDer}, we see that $x=\frac{\sqrt{3}}{T}$ is a local maximum of $\varphi_T''(x)$ in $[\frac{1}{T},\infty)$. Plugging in $T \ge 2/\pi$, we can see that for all valid $T$, $\varphi(x)$ is monotonically decreasing for $x \in [\frac{\sqrt{3}}{2}\pi, \infty)$. In such ranges, we can use integrals to provide upper bounds.

\noindent \textit{Case 1: Critical point at $\Phi_T''(0)$.}

Recall that 
\begin{equation}
    \Phi_T''(0) = \frac{1}{\mathcal{Z}}\sum_{j\in\mathbb{Z}}\varphi_T''(2j\pi).
\end{equation}
Given that $T \ge 2/\pi$, for all $j \in \mathbb{Z}\setminus\{0\}$, $T^4(2\pi j)^2-T^2>0 $, thus, $\varphi_T''(2j\pi) > 0$, while $\varphi_T''(0) = -T^2$. 
We now show that 
\begin{equation}
    \sum_{j\in\mathbb{Z}\setminus\{0\}}\varphi_T''(2j\pi) < T^2.
\end{equation}
Given that we can factorize out $T^2$ from $\varphi_T''(x)$, we let $\varrho_T(x) = T^{-2}\varphi_T''(x)= (T^2x^2-1)\varphi_T(x)$.
We then see that
\begin{equation}
\sum_{j\in\mathbb{Z}\setminus\{0\}}\varphi_T''(2j\pi) = 2\sum_{j\in\mathbb{N}}\varphi_T''(2j\pi)= 2\sum_{j\in\mathbb{N}}T^2 \varrho_T(x).
\end{equation}

Given the monotonically decreasing properties of $\varphi_T''(2j\pi)$, and by extension, $\varrho_T(x)$, we can upper bound the tail with an integral and find
\begin{align}
&\sum_{j\in\mathbb{N}} \varrho_T(2j\pi) = \varrho_T(2\pi) + \sum_{j\in\mathbb{N}\setminus\{1\}} \varrho_T(2j\pi)\nonumber\\
&=(4\pi^2T^2-1)e^{-2\pi^2T^2} + \sum_{j\in\mathbb{N}\setminus\{1\}} (4\pi^2j^2T^2-1)e^{-2j^2\pi^2T^2}\nonumber\\
&\le(4\pi^2T^2-1)e^{-2\pi^2T^2} + \int_{j=1}^N (4\pi^2j^2T^2-1)e^{-2j^2\pi^2T^2}dj\nonumber\\
&=(4\pi^2T^2-1)e^{-2\pi^2T^2} + e^{-2\pi^2T^2}= 4\pi^2T^2e^{-2\pi^2T^2},
\end{align}
and
\begin{equation}
\sum_{j\in\mathbb{N}} \varrho_T(2j\pi) \ge \int_{j=1}^N (4\pi^2j^2T^2-1)e^{-2j^2\pi^2T^2}dj = e^{-2\pi^2T^2}.
\end{equation}
By the squeeze theorem, we find that 
\begin{equation}
\lim_{T\to \infty}\sum_{j\in\mathbb{N}} \varrho_T(2j\pi) = 0 
\end{equation}
and by extension,
\begin{equation}
\lim_{T\to \infty} \lvert\Phi_T(0)\rvert = \frac{T^2}{\mathcal{Z}} \le T^2.
\end{equation}

On the other end, we want to check if the bounds also apply for all $T\ge 2/\pi$. Taking the derivative of $4\pi^2T^2 e^{-2\pi^2T^2}$ over $T$, we find that the function is monotonically decreasing from $T = \frac{1}{\sqrt{2}\pi}$, so plugging in $T = 2/\pi$, we find
\begin{equation}
    \label{eqNatTail}
    \sum_{j\in\mathbb{N}}\varrho_T(2j\pi) \le \frac{16}{e^8} < 0.0055.
\end{equation}
Hence, we can find that 
\begin{equation}
\sum_{j\in\mathbb{Z}\setminus\{0\}}\varphi_T''(2j\pi) \le 0.0055T^2,
\end{equation}
and by extension 
\begin{align}
    \lvert\Phi_T''(0)\rvert &= \frac{1}{\mathcal{Z}} \left\lvert -T^2 + \sum_{j\in\mathbb{Z}\setminus\{0\}}\varphi_T''(2j\pi)\right\rvert \nonumber\\
    &\le \left\lvert -T^2 + \sum_{j\in\mathbb{Z}\setminus\{0\}}\varphi_T''(2j\pi)\right\rvert \le T^2
\end{align}

\noindent \textit{Case 2: Critical point at $\Phi_T''(\pi)$.}

Given that $x = \pi$ falls within $[\frac{1}{T},2\pi -\frac{1}{T}]$, as with the proof of \cref{lemCvxGauss}, we find that $\varphi_T''(x+2j\pi) \ge 0$ for all $j\in \mathbb{Z}$. Noting that $\Phi_T''((2j+1)\pi) =\Phi_T''(-(2j+1)\pi)$, we can find that 
\begin{equation}
\Phi_T''(\pi) = \frac{1}{\mathcal{Z}}\sum_{j\in\mathbb{Z}}\varphi_T''((2j+1)\pi) \le 2\sum_{j\in\mathbb{Z}_{\ge 0}}\varphi_T''((2j+1)\pi).
\end{equation}
Again letting $\varrho_T(x) = T^{-2}\varphi_T''(x)$, we can write the following:
\begin{align}
&\sum_{j\in\mathbb{Z}_{\ge 0}}\varrho_T''((2j+1)\pi) = \varrho_T(\pi) + \sum_{j\in\mathbb{N}}\varrho_{T}((2j+1)\pi)\nonumber\\
&\le (\pi^2T^2-1)e^{-\frac{\pi^2T^2}{2}} + \int_{j=0.5}^\infty(4\pi^2j^2T^2-1)e^{-2j^2\pi^2T^2}dj \nonumber\\
&= \frac{2\pi^2T^2-1}{2e^{0.5\pi^2T^2}}
\end{align}
Taking the derivative of the above over $T$, we find that the function is monotonically decreasing from $T = \sqrt{\frac{5}{2}}\frac{1}{\pi}$, so plugging in $T = 2/\pi$, we find
\begin{equation}
    \sum_{j\in\mathbb{Z}_{\ge 0}}\varrho_T((2j+1)\pi) \le \frac{7}{2e^2} < 0.475.
\end{equation}
Hence 
\begin{equation}
\Phi_T''(\pi) < 0.95 T^2 < T^2.
\end{equation}

\noindent \textit{Case 3: Additional critical points.}

Upon observing \cref{eqGauss3rdDer}, we note that there are 3 roots of $\varphi_T'''(x)$, $x = 0, \pm\frac{\sqrt{3}}{T}$. This hints that there is a potential third case of critical points around $x =  \frac{\sqrt{3}}{T}$ and $x = 2 \pi - \frac{\sqrt{3}}{T}$ for $\Phi_T''(x)$. From Case 1, we know that $\Phi''(0) < 0$, and therefore must be a global minimum. (And $\Phi''(2\pi) < 0$ for that matter.) Then, given that $x=\pi$ is also a critical point and would not be a saddle point due to symmetry, if $x=\pi$ is a local maximum, then the third set of critical points would be saddle points, and we need not worry. However, if $x=\pi$ is a local minimum, then this set of critical points would then be local maxima.

We focus on the range where a possible local maximum can appear in $x\in(0, \pi)$. We can obtain the following:
\begin{align}
&\Phi_T''(x) = \frac{1}{\mathcal{Z}}\sum_{j\in\mathbb{Z}}\varphi_T''(x+2j\pi)\nonumber\\
&\le \begin{multlined}[t]
    T^2(\varrho_T(x) + \varrho_T(x - 2\pi)) \\
    + T^2\sum_{j\in\mathbb{N}}(\varrho_T(x + 2j\pi)+\varrho_T(x - 2(j+1)\pi))
\end{multlined}
\end{align}
Factoring out the $T^2$, and from \cref{eqNatTail}, we can bound 
\begin{align}
&\begin{multlined}[t]
    (\varrho_T(x) + \varrho_T(x - 2\pi)) \\
    + \sum_{j\in\mathbb{N}}(\varrho_T(x + 2j\pi)+\varrho_T(x - 2(j+1)\pi))
\end{multlined}\nonumber\\
&=\begin{multlined}[t]
    (\varrho_T(x) + \varrho_T(2\pi-x)) \\
    + \sum_{j\in\mathbb{N}}(\varrho_T(x + 2j\pi)+\varrho_T(2(j+1)\pi-x))
\end{multlined}\nonumber\\
&\le 2\varrho_T\left(\frac{\sqrt{3}}{T}\right)+ \sum_{j\in\mathbb{N}}2\varrho_T(2j\pi)\nonumber\\
&\le \frac{4}{e^{1.5}} + \frac{32}{e^8} \le 0.905
\end{align}
Hence 
\begin{equation}
\Phi_T''(\pi) < 0.905 T^2 < T^2.
\end{equation}

\end{proof}

\subsection{Local strong concavity}
We now show the strong concavity in local regions of the periodic Gaussian to provide quadratic lower bounds of the function.
\begin{definition}[Strong convexity of a function~\citep{nesterov2004introductory}]
Let $f : \mathbb R^d \to \mathbb R$, and $\mu>0$. We say that $f$ is $\mu$-strongly convex if
\begin{equation*}
\forall x,y \in \mathbb R^d
, \left\langle\nabla f(y) -\nabla f(x), y-x\right\rangle \ge\mu \left\lVert y-x\right\rVert^2. 
\end{equation*}
\end{definition}
\begin{lemma}[Properties of $\mu$-strongly convex functions~\citep{nesterov2004introductory}]
\label{lemStrongCvxProp}
If and only if $f$ is $\mu$-strongly convex, for all $x$, 
\begin{equation*}
g(x) = f(x) -\frac{\mu}{2}\lVert x \rVert^2
\end{equation*}
is convex. 
For all $x, y$,
\begin{equation*}
f(y) \ge f(x) + \langle\nabla f(x), y-x\rangle + \frac{\mu}{2}\lVert y-x \rVert^2.
\end{equation*}
Further, let $\alpha \in [0, 1]$, then for all $x, y$
\begin{multline*}
\alpha f(x) + (1-\alpha) f(y) - f\left(\alpha x+ (1-\alpha) y\right) \\
\ge \frac{\alpha(1-\alpha)\mu}{2}\lVert y-x \rVert^2
\end{multline*}
\end{lemma}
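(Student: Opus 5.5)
The plan is to reduce all three claims to standard first-order facts about differentiable convex functions, using only the gradient-monotonicity definition of $\mu$-strong convexity given above. Throughout, $f$ is differentiable. The paper applies this lemma on intervals such as $[-\frac{1}{2T},\frac{1}{2T}]$, so I will quantify over $x,y$ in a convex domain rather than all of $\mathbb{R}^d$; every step below only uses points on segments between $x$ and $y$, so this costs nothing.

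\emph{Step 1 (equivalence with convexity of $g$).} Set $g(x)=f(x)-\frac{\mu}{2}\lVert x\rVert^2$, so that $\nabla g(x)=\nabla f(x)-\mu x$. Then
\begin{equation*}
\langle \nabla g(y)-\nabla g(x),y-x\rangle=\langle\nabla f(y)-\nabla f(x),y-x\rangle-\mu\lVert y-x\rVert^2 .
\end{equation*}
Hence $f$ is $\mu$-strongly convex if and only if $\nabla g$ is monotone. It remains to show that, for a differentiable $g$, monotonicity of $\nabla g$ is equivalent to convexity.
\begin{itemize}
\item[(i)] Monotone implies convex: restrict to the line $h(t)=g(x+t(y-x))$. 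Monotonicity gives that $h'(t)=\langle\nabla g(x+t(y-x)),y-x\rangle$ is nondecreasing in $t$, and a one-variable function with nondecreasing derivative is convex. Convexity along every segment is convexity of $g$.
\item[(ii)] Convex implies monotone: use the first-order characterization $g(y)\ge g(x)+\langle\nabla g(x),y-x\rangle$. This follows by letting $t\downarrow 0$ in $g(x+t(y-x))\le (1-t)g(x)+tg(y)$. Writing the same inequality with $x$ and $y$ swapped and adding the two gives monotonicity.
\end{itemize}
I expect this step to be the main, if mild, obstacle. The rest is algebra, but this step requires the one-dimensional argument that a nondecreasing derivative implies convexity, for instance via the mean value theorem applied on $[0,t]$ and $[t,1]$.

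\emph{Step 2 (quadratic lower bound).} Apply the first-order inequality from (ii) to the convex function $g$, then substitute $g=f-\frac{\mu}{2}\lVert\cdot\rVert^2$ and $\nabla g(x)=\nabla f(x)-\mu x$. The leftover quadratic terms are
\begin{equation*}
\tfrac{\mu}{2}\lVert y\rVert^2-\tfrac{\mu}{2}\lVert x\rVert^2-\mu\langle x,y-x\rangle=\tfrac{\mu}{2}\lVert y-x\rVert^2 ,
\end{equation*}
which yields $f(y)\ge f(x)+\langle\nabla f(x),y-x\rangle+\frac{\mu}{2}\lVert y-x\rVert^2$.

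\emph{Step 3 (interpolation inequality).} Let $z=\alpha x+(1-\alpha)y$. Apply Step 2 twice, based at $z$ and evaluated at $x$ and at $y$, and combine the results with weights $\alpha$ and $1-\alpha$.
\begin{itemize}
\item[(a)] The gradient terms cancel, because $\alpha(x-z)+(1-\alpha)(y-z)=0$.
\item[(b)] The displacements are $x-z=(1-\alpha)(x-y)$ and $y-z=\alpha(y-x)$.
\item[(c)] The quadratic terms therefore sum to $\frac{\mu}{2}\bigl(\alpha(1-\alpha)^2+(1-\alpha)\alpha^2\bigr)\lVert y-x\rVert^2=\frac{\alpha(1-\alpha)\mu}{2}\lVert y-x\rVert^2$.
\end{itemize}
This gives exactly the claimed bound. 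The smoothness analogue in \cref{lemSmoothProp} would follow by the same computation with the inequalities reversed.
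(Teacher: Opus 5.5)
Your proposal is correct. The paper gives no proof of its own here and simply defers to Nesterov, and your argument is the standard textbook derivation behind that citation: monotonicity of $\nabla g$ for $g=f-\frac{\mu}{2}\lVert x\rVert^2$ is equivalent to convexity of $g$, the first-order lower bound follows by substitution, and the interpolation inequality follows from two first-order bounds at $z=\alpha x+(1-\alpha)y$ whose gradient terms cancel. Restricting the quantifiers to a convex domain is also the right reading, since the paper only applies the lemma on intervals such as $[-\frac{1}{2T},\frac{1}{2T}]$, after checking a second-derivative bound there.
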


\strongCvx
\begin{proof}
We want to show that on the interval $[-\chi, \chi]$, $1-\Phi_T$ is $\mu$-strongly convex, and find values for $\mu$ and $\chi$ under the condition that $T\ge 2/\pi$.

To show $\mu$ strong convexity for $1-\Phi_T$, we show that $1-\Phi_T(x) - \frac{\mu}{2}x^2$ is convex. Taking the second derivative, we can see that we want to find a range of $x$ and a value of $\mu$ such that
\begin{equation}
-\Phi_T''(x) - \mu > 0 \Rightarrow -\Phi_T''(x) > \mu > 0.
\end{equation}
Thus, we now attempt to find a lower bound for $-\Phi_T''(x)$ on the interval $[-\chi, \chi]$. 

From \cref{lemCvxGauss}, we know that for $x\in[\frac{1}{T}, 2\pi - T]$, $-\Phi_T''(x) \le 0$. Hence, we require $\chi < \frac{1}{T}$ to satisfy $-\Phi_T''(\chi) > 0$. For simplicity, we consider the case of $\chi = \frac{1}{2T}$.

We first show that $\Phi_T''(x)$ is monotonically increasing in the range $[0, \frac{1}{2T}]$, which we show by proving $\Phi_T'''(x)\ge 0$ in that region. By linearity, we once again see that 
\begin{align}
    &\Phi_T'''(x) = \frac{1}{\mathcal{Z}}\sum_{j\in\mathbb{Z}} \varphi_T'''(x + 2j\pi)\\
    &=\frac{1}{\mathcal{Z}}\left(\varphi_T'''(x)+\sum_{j\in\mathbb{N}} \left(\varphi_T'''(x + 2j\pi) +  \varphi_T'''(x - 2j\pi)\right)\right).
\end{align}
First we see that $\varphi_T'''(x) \ge 0$ when $x\in[0, \frac{1}{2T}]$ by solving the inequality $x(T^2x^2-3)\ge0$. Further, we note that from the fourth derivative $\varphi_T''''(x)$, we see that $\varphi_T'''(x)$ is positive and monotonically increasing in $\left(-\infty, -\frac{\sqrt{3+\sqrt{6}}}{T}\right]$ and thus $\left(-\infty, -\frac{\sqrt{3+\sqrt{6}}\pi}{2}\right]$. Thus, for all $j$, if $x\in[0, \frac{1}{2T}]$,
\begin{align}
    &\varphi_T'''(x + 2j\pi) +  \varphi_T'''(x - 2j\pi) \nonumber\\
    &= -\varphi_T'''(-x - 2j\pi) +  \varphi_T'''(x - 2j\pi)\nonumber\\
    &\ge -\varphi_T'''(-x - 2j\pi) +  \varphi_T'''(- 2j\pi)\ge 0.
\end{align}

We now show that for $-\Phi_T''(\frac{1}{2T}) > \frac{T^2}{2}$. Expanding the function, we obtain 
\begin{multline}
    -\Phi_T''\left(\frac{1}{2T}\right) = -\frac{1}{\mathcal{Z}}\varphi_T''\left(\frac{1}{2T}\right) - \frac{1}{\mathcal{Z}}\sum_{j\in\mathbb{N}}\varphi_T''\left(2j\pi + \frac{1}{2T}\right) \\
    - \frac{1}{\mathcal{Z}}\sum_{j\in\mathbb{N}}\varphi_T''\left(-2j\pi + \frac{1}{2T}\right)
\end{multline}
We first provide an upper bound for $\mathcal{Z}$. Recall that 
\begin{align}
\mathcal{Z} &= \sum_{j\in\mathbb{Z}}\varphi_T(2j\pi) = 1 + 2\varphi_T(2\pi) + 2\sum_{j\in{\mathbb{N}\setminus\{1\}}}\varphi_T(2\pi j)\nonumber\\
&\le 1 + 2e^{-2\pi^2T^2}+2\int_{j=1}^\infty e^{-2\pi^2j^2T^2} dj\nonumber\\
&< 1+ 2e^{-2\pi^2T^2} + \frac{1}{2\pi^2T^2}e^{-2\pi^2T^2}\nonumber\\
&\le1+ \frac{17}{8}e^{-8}<1.0008,
\end{align}
where we apply Mill's inequality~\citep{gordon1941values} for the second inequality. Then, we see that 
\begin{equation}
-\varphi_T''\left(\frac{1}{2T}\right)=\frac{3T^2}{4e^{0.125}}>0.66T^2
\end{equation}
Next, as we know that $\frac{1}{\mathcal{Z}}\sum_{j\in\mathbb{N}}\varphi_T''\left(2j\pi + \frac{1}{2T}\right)$ is positive, so we can upper bound this by
\begin{equation}
\sum_{j\in\mathbb{N}}\varphi_T''\left(2j\pi + \frac{1}{2T}\right) \le \sum_{j\in\mathbb{N}}\varphi_T''\left(2j\pi \right).
\end{equation}
We know by \cref{eqNatTail} that
\begin{equation}
    \sum_{j\in\mathbb{N}}\varphi_T''\left(2j\pi \right) = T^2 \sum_{j\in\mathbb{N}}\varrho_T(2j\pi) \le \frac{16T^2}{e^8} < 0.0055T^2.
\end{equation}
On the other side, we can bound
\begin{align}
&\sum_{j\in\mathbb{N}}\varphi_T''\left(-2j\pi + \frac{1}{2T}\right) = \sum_{j\in\mathbb{N}}\varphi_T''\left(2j\pi - \frac{1}{2T}\right) 
\nonumber\\
&\le \sum_{j\in\mathbb{N}}\varphi_T''\left((2j-0.25)\pi\right) \le T^2\sum_{j\in\mathbb{N}}\varrho_T''\left((2j-0.25)\pi\right)
\end{align}
Hence, we can then bound
\begin{align}
&\sum_{j\in\mathbb{N}}\varrho_T''\left((2j-0.25)\pi\right)\nonumber\\
&\le \varrho_T''\left(\frac{7}{4}\pi\right) + \int_{j=0.875}^\infty (4\pi^2j^2T^2-1)e^{-2j^2\pi^2T^2}dj\nonumber\\
&=\frac{49\pi^2T^2+13}{16}e^{-\frac{49\pi^{2} T^{2}}{32}}
\end{align}
Differentiating the above by $T$, we find that the function is monotonically decreasing from $T=\frac{\sqrt{19}}{7\pi} <\frac{2}{\pi}$. Hence, the maximum in our range can be found at $T=\frac{2}{\pi}$, and we bound
\begin{align}
\frac{49\pi^2T^2+13}{16}e^{-\frac{49\pi^{2} T^{2}}{32}}\le\frac{209}{16e^{6.125}} < 0.03
\end{align}
Thus, overall, we obtain
\begin{equation}
-\Phi_T''\left(\frac{1}{2T}\right) > \frac{0.66-0.0055-0.03}{1.0008}T^2 > 0.62T^2 > \frac{T^2}{2}
\end{equation}
\end{proof}

\section{Proofs of theoretical guarantees}
\label{appThmMain}

To find the appropriate sampling filter function, we require an even discrete function, as the estimates of the signal $S(m)$ are only accessible when $m$ is a positive integer, and a function that decays as $m$ grows, such that we can truncate the cases where $m$ is larger, and limit the length of the circuit. A suitable candidate, as noted by \citet{ding2024quantum}, is the discrete Gaussian established in \cref{appGauss} as follows:
\begin{equation}
    p_T(m) = \frac{1}{\sqrt{2\pi}T\mathcal{Z}}\exp\left(-\frac{m^2}{2T^2}\right)
\end{equation}
where $\mathcal{Z} = \sum_{m=-\infty}^{\infty} \frac{1}{\sqrt{2\pi}T}\exp\left(-\frac{m^2}{2T^2}\right)$. This serves as the probability distribution for sampling the evolution iteration $m$ for the amplitude amplification circuit. Recall that the discrete-time cosine transform~\citep{oppenheim2009discrete} of the $p_T$ is then 
\begin{align}
    \Phi_T(k)&= \sum_{m=-\infty}^{\infty} p_T(m) \cos(km)  \nonumber\\
    &= \frac{1}{\mathcal{Z}}\sum_{j\in\mathbb{Z}}\exp\left(-\frac{(k+2j\pi)^2T^2}{2}\right).
\end{align} 

In practice, we generate the discrete Gaussian by approximating $\mathcal{Z}$ with $\widetilde{\mathcal{Z}} = \sum_{m=-\rho T}^{\rho T} \frac{1}{\sqrt{2\pi}T}\exp\left(-\frac{m^2}{2T^2}\right)$ and apply a truncated version of the discrete Gaussian where 
\begin{equation}
    \widetilde{p}_T(m) = \begin{cases}
        \frac{1}{\sqrt{2\pi}T\widetilde{\mathcal{Z}}}\exp\left(-\frac{m^2}{2T^2}\right) & 1 \le |m| \le \sigma T\\
        0 & |m| > \sigma T\\
        1 - \sum_{m'\ne 0} \widetilde p_T(m') & m=0
    \end{cases}
    \label{eqTruncGauss}
\end{equation}
with $\sigma \le \rho$ to limit the maximum length at $M = \lfloor\sigma T\rfloor$ and approaching the Heisenberg limit. 

\begin{figure*}
    \includegraphics[width=0.9\linewidth]{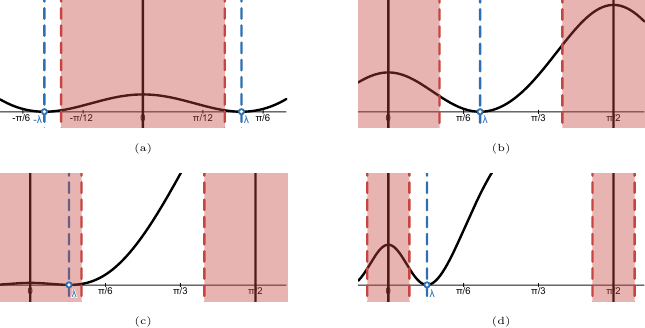}
    \caption{\emph{Failure regions for GLSAE.} We draft the ideal loss function for GLSAE without the measurement errors to show why GLSAE fails in regions close to 0 and $\pi/2$. In panel (a), we see that when plotted, the ideal loss function (black solid line) has symmetric minima at $\theta = \pm \lambda$ (blue dashed lines) formed from overlapping Gaussian peaks from the periodic Gaussians. In low depth situations, when the standard deviation $T$ of the discrete Gaussian is small, the minimum can still be found when $\lambda$ is far from either 0 or $\pi/2$, as shown in panel (b). However, when $\lambda$ is close to either 0 or $\pi/2$ (in the red shaded area), the minimum is not easy to discern, and worse so under noise. When the standard deviation $T$ of the discrete Gaussian is high enough, then the overlap can be reduced, and the invalid region (red shaded area) shrinks, such that we can still discern the peak even under noise, as known in panel (d).}
    \label{figGLSAEregion}
\end{figure*}

\begin{figure*}
    \includegraphics[width=0.90\linewidth]{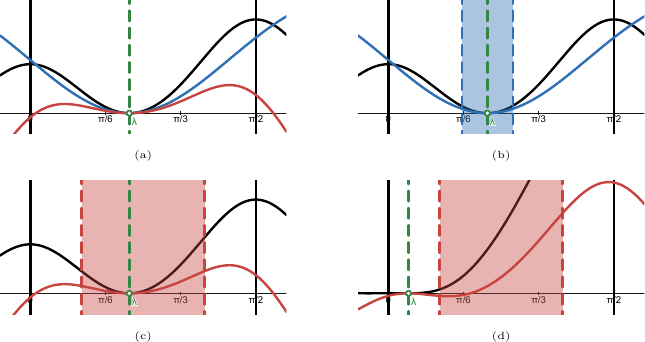}
    \caption{\emph{Analysis of the loss function failure region.} We illustrate the decomposition of the ideal loss function for GLSAE without the measurement errors to show how to quantify the regions where GLSAE fails. In panel (a), we decompose the ideal error function $\mathcal{E}(\theta)$ (black) as a sum of a negated periodic Gaussian $\mathcal{E}_1(\theta)$ (blue) and an indictor function $\mathcal{E}_2(\theta)$ (red) of the convexity of $\Phi_T$. In the blue shaded region of panel (b), strong convexity of $\mathcal{E}_1(\theta)$ can be found and used to provide quadratic lower bounds on $\mathcal{E}_1(\theta)$. In panel (c), the red shaded area indicates the range where $\Phi_T(4\theta)$ is convex, and hence the quadratic lower bounds hold for $\mathcal{E}(\theta)$ and the minimum is easily discernible. In panel (d), $\lambda$ is outside of the red shaded area where $\mathcal{E}_2(\theta)<0$, thus the quadratic lower bounds for $\mathcal{E}(\theta)$ may not hold. In this case, the minimum is not easily discernible, more so after noise is added.}
    \label{figGLSAEdecomp}
\end{figure*}

Using the truncated Gaussian, we define the loss function 
\begin{equation}
    \mathcal{L}_m(\theta) = \left(Z_m-\cos(2\theta m)\right)^2,
\end{equation}
where the total empirical loss can be defined as
\begin{equation}
    \widetilde{\mathcal{L}}(\theta) = \frac{1}{N}\sum_{m\sim \widetilde{p}_T}\left(Z_m-\cos(2\theta m)\right)^2.
\end{equation}
In an abuse of notation, we drop the subscript for $\mathcal{L}_m$ when $m$ is arbitrary, e.g., when $\mathcal{L}_m$ is wrapped in an expected value.
Calculating the expected value of the loss, we obtain
\begin{align}
    \mathbb{E}[\mathcal{L}(\theta)]
    &=  \mathbb{E}\left[\left(Z_m-\cos(2\theta m)\right)^2\right]\nonumber\\
    &=  \mathbb{E}[Z_m^2] - 2 \mathbb{E}[Z_m \cos(2\theta m)] + \mathbb{E}[\cos^2(2\theta m)]\nonumber\\
    &= 1 - 2 \mathbb{E}[\cos(2\lambda m) \cos(2\theta m)] + \mathbb{E}[\cos^2(2\theta m)]\nonumber\\
    &= \begin{multlined}[t]\frac{3}{2} - \mathbb{E}[\cos(2(\theta-\lambda)m)] - \mathbb{E}[\cos(2(\theta+\lambda)m)]\\
    + \frac{1}{2}\mathbb{E}[\cos(4\theta m)].
    \end{multlined}
\end{align}
where the second-to-last equality is obtained by observing  $\mathbb{E}[Z_m^2] = 1$ due to $Z_m = \pm 1$, as well as $\mathbb{E}[Z_m|m] = \cos(2\lambda m)$, and the last equality is obtained by expressing a product of cosines as a sum of cosines.

We now define an ideal error function $\mathcal{E}$ where the signal $\cos(2\lambda m)$ of a given time point is provided without sampling or noise, and no truncation is performed on the discrete Gaussian. In this case, the sampling probability takes the form of \cref{eqGaussPrior}. We then obtain the following:
\begin{align}
    &\mathcal{E}(\theta) = \sum_{m = -\infty}^{\infty} p_T(m) (\cos(2\lambda t)-\cos(2\theta m))^2\nonumber\\
    &= \begin{multlined}[t]
        \sum_{m = -\infty}^{\infty} p_T(m) \bigg(1+ \frac{1}{2}\cos(4\lambda m)- \cos(2(\theta-\lambda)m)\\
        - \cos(2(\theta+\lambda)m)
    + \frac{1}{2}\cos(4\theta m)\bigg)
    \end{multlined}\nonumber\\
    &= 1 + \frac{\Phi_T(4\lambda)}{2}  - \Phi_T(2(\theta-\lambda)) - \Phi_T(2(\theta+\lambda)) +\frac{\Phi_T(4\theta)}{2}
\end{align}
The main proof idea is to use the smoothness and convexity properties shown in \cref{appGauss} to provide upper and lower quadratic bounds. We can decompose $\mathcal{E}(\theta)$ into two terms 
\begin{align}
\mathcal{E}_1(\theta) &= 1- \Phi_T(2(\theta-\lambda))\nonumber\\
\mathcal{E}_2(\theta) &=\frac{\Phi_T(4\lambda)}{2}   - \Phi_T(2(\theta+\lambda)) +\frac{\Phi_T(4\theta)}{2},
\label{eqDecomp}
\end{align}
where we observe that $\mathcal{E}_1(\theta)$ is a negated periodic Gaussian centered at $2(\theta-\lambda)$ and $\mathcal{E}_2(\theta)$ is an indicator of the convexity of $\Phi_T$ in the range $[2\min(\theta,\lambda), 2\max(\theta,\lambda)]$. Note that the minimum of $\mathcal{E}_1(\theta)$ occurs when $\theta = \lambda$ up to periodicity $\pi$ of $\theta$ and is the solution we are looking for, and where we can apply smoothness and strong convexity arguments to provide quadratic upper and lower bounds.

However, since we can only access a noisy version of $\mathcal{E}(\theta)$ without isolating $\mathcal{E}_1(\theta)$, to ensure that the quadratic bounds of $\mathcal{E}_1(\theta)$ also hold for $\mathcal{E}(\theta)$, we should have $\mathcal{E}_2(\theta) \ge 0$, or where $\Phi_T$ is convex. Per \cref{lemCvxGauss}, we know that said range occurs when 
\begin{equation}
\lambda, \theta \in \left(\frac{1}{4T}, \frac{\pi}{2}-\frac{1}{4T}\right).
\end{equation}
Thus, when 
\begin{equation}
\lambda \lor \theta \in \left[0, \frac{1}{4T}\right) \cup \left(\frac{\pi}{2}-\frac{1}{4T}, \frac{\pi}{2}\right],
\end{equation}
$\mathcal{E}_2(\theta) <0$ and the quadratic bounds no longer hold for $\mathcal{E}(\theta)$. We illustrate this in \cref{figGLSAEdecomp}.

\subsection{Truncation and sampling errors}
\label{appProofBookkeep}
Before providing the main proof using the quadratic bounds as mentioned, we need to consider the effects of both the truncation error and the sampling error. We further define 
\begin{align}
\mathcal{F}(\theta) &= \mathcal{E}(\theta) + \frac{1}{2} - \frac{\Phi_T(4\lambda)}{2}\nonumber\\
&= \frac{3}{2} - \Phi_T(2(\theta-\lambda)) - \Phi_T(2(\theta+\lambda)) +\frac{\Phi_T(4\theta)}{2}
\end{align}
so that the terms correspond to $\mathbb{E}[\mathcal{L}(\theta)]$.

We now provide bounds on the error caused by the truncation of the discrete Gaussian $\widetilde{p}_T$ from \cref{eqTruncGauss} similar to Refs.~\citep{ding2024robust,ding2024quantum}.
\begin{lemma}[Truncation error of the loss from the discrete Gaussian]
\label{lemTruncErr}
Given $\delta > 0$, if $\sigma\in \Omega(\log^{\frac{1}{2}}(\frac{1}{\delta})))$ and $\rho\ge\sigma$, 
we have
\begin{equation}
\left\lvert\mathbb{E}\left[\mathcal{L}(\theta)\right] - \mathcal{F}(\theta)\right\rvert \le \delta^2
\end{equation}
\end{lemma}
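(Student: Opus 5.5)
The plan is to write both $\mathbb{E}[\mathcal{L}(\theta)]$ and $\mathcal{F}(\theta)$ as expectations of one bounded function $g_\theta(m)$ against two different distributions, and then bound the difference by the total variation distance between them. Define
\[
g_\theta(m)=\tfrac{3}{2}-\cos(2(\theta-\lambda)m)-\cos(2(\theta+\lambda)m)+\tfrac{1}{2}\cos(4\theta m).
\]
The expectation computation preceding the lemma gives $\mathbb{E}[\mathcal{L}(\theta)]=\sum_m \widetilde p_T(m)\,g_\theta(m)$. The discrete cosine transform identity $\Phi_T(k)=\sum_m p_T(m)\cos(km)$ gives $\mathcal{F}(\theta)=\sum_m p_T(m)\,g_\theta(m)$. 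Since $|g_\theta(m)|\le 4$ (in fact $g_\theta(m)=(\cos 2\lambda m-\cos 2\theta m)^2+$ a bounded remainder, but the crude bound is enough), we get
\[
\left|\mathbb{E}[\mathcal{L}(\theta)]-\mathcal{F}(\theta)\right|\le 4\sum_m\left|\widetilde p_T(m)-p_T(m)\right|.
\]
This reduces the lemma to bounding the $\ell_1$ distance between $\widetilde p_T$ and $p_T$ by $\delta^2/4$.

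The $\ell_1$ distance splits into three contributions.

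\textbf{Tail $|m|>\sigma T$.} Here $\widetilde p_T=0$, so the contribution is $\sum_{|m|>\sigma T}p_T(m)$. I will bound this by a Gaussian tail integral plus one boundary term. Using Mill's inequality as in the proof of \cref{lemStrongCvx}, together with $\mathcal{Z}\ge 1$ (approximately), this gives $O(e^{-\sigma^2/2})$.

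\textbf{Bulk $1\le|m|\le\sigma T$.} Here
\[
\widetilde p_T(m)-p_T(m)=p_T(m)\left(\frac{\mathcal{Z}}{\widetilde{\mathcal{Z}}}-1\right),
\]
so the contribution is at most $|\mathcal{Z}/\widetilde{\mathcal{Z}}-1|$. Because $\widetilde{\mathcal{Z}}$ omits only the terms with $|m|>\rho T$, the relative gap satisfies $\mathcal{Z}-\widetilde{\mathcal{Z}}\le O(e^{-\rho^2/2})\le O(e^{-\sigma^2/2})$, using $\rho\ge\sigma$ and $\widetilde{\mathcal{Z}}\ge$ a constant.

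\textbf{Atom $m=0$.} The mass $\widetilde p_T(0)$ is defined to absorb whatever is left so that $\widetilde p_T$ sums to one. Its discrepancy $|\widetilde p_T(0)-p_T(0)|$ therefore equals the net discrepancy of the other two pieces, which are already bounded.

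Adding the three pieces gives $\sum_m|\widetilde p_T(m)-p_T(m)|\le C e^{-\sigma^2/2}$ for an absolute constant $C$. Choosing $\sigma\ge\sqrt{2\log(4C/\delta^2)}=\Theta(\log^{1/2}(1/\delta))$ then yields the claimed $\delta^2$ bound.

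The main obstacle is the normalization bookkeeping. The definition of $\widetilde p_T$ divides by $\sqrt{2\pi}T\widetilde{\mathcal{Z}}$, whereas $\mathcal{Z}$ in \cref{eqGaussPrior} appears in a form whose normalization convention must be matched carefully. I also need to check that $\widetilde p_T(0)$ stays nonnegative, and that the constant lower bound on $\widetilde{\mathcal{Z}}$ holds for all $T$ in the relevant range ($T\ge 2/\pi$). I would handle this by working directly with unnormalized Gaussian weights and tracking the ratio $\mathcal{Z}/\widetilde{\mathcal{Z}}\in[1,1+O(e^{-\rho^2/2})]$.
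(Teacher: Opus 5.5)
Your argument is correct and is essentially the paper's proof. Both reduce the discrepancy to the normalization mismatch $\lvert\mathcal{Z}/\widetilde{\mathcal{Z}}-1\rvert$ and the tail mass $\sum_{|m|>\sigma T}p_T(m)$, with the $m=0$ atom absorbing the leftover probability; positivity of $\widetilde p_T(0)$, which you flagged, follows directly from $\rho\ge\sigma$. The only difference is packaging: the paper applies the triangle inequality to the three cosine terms and rewrites each $\mathbb{E}_{\widetilde p_T}[\cos(\gamma m)]-\Phi_T(\gamma)$ as a normalization term plus a tail term, whereas you bound everything at once by $\sup_m|g_\theta(m)|\cdot\lVert\widetilde p_T-p_T\rVert_1$, which costs only constants.

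Your bookkeeping is actually more careful than the paper's write-up in two places: you use the tail rate $e^{-\sigma^2/2}$ where the paper states $e^{-\sigma^2}$, and you use a constant lower bound on $\widetilde{\mathcal{Z}}$ where the paper drops $\widetilde{\mathcal{Z}}$ from the denominator. Either way, only the constant hidden in $\sigma\in\Omega(\log^{1/2}(1/\delta))$ is affected.
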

\begin{proof}
We wish to find the truncation errors caused by two truncations: the first being the truncation error from truncating Gaussian samples, and the second from obtaining the normalization coefficient via a truncated sum.

From the triangle inequality, we can find that 
\begin{align}
\left\lvert\mathbb{E}\left[\mathcal{L}(\theta)\right] - \mathcal{F}(\theta)\right\rvert &\le \left\lvert\mathbb{E}[\cos(2(\theta-\lambda)m)] - \Phi(2(\theta-\lambda))\right\rvert \nonumber\\
&+ \left\lvert\mathbb{E}[\cos(2(\theta+\lambda)m)] - \Phi(2(\theta+\lambda))\right\rvert \nonumber\\
&+ \frac{1}{2}\left\lvert\mathbb{E}[\cos(4\theta m)] - \Phi(4\theta)\right\rvert
\end{align}
Given that the above 3 terms have the same structure, we assume a general $\gamma$ where we find
\begin{align}
&\operatorname*{\mathbb{E}}_{m\sim \widetilde{p}_T}\left[\cos(\gamma m)\right]
\nonumber\\
&=\sum_{|m|\le \sigma T} \widetilde p_T(m)\cos(\gamma m)\nonumber\\
&= \sum_{|m|\le \sigma T} \frac{\mathcal{Z}}{\widetilde{\mathcal{Z}}}p_T(m)\cos(\gamma m)+\left(1-\sum_{|m|\le \sigma T}\frac{\mathcal{Z}}{\widetilde{\mathcal{Z}}}p_T(m)\right)
\nonumber\\
& \begin{multlined}[b]
=\sum_{|m|\le \sigma T} \frac{\mathcal{Z}}{\widetilde{\mathcal{Z}}}p_T(m)\cos(\gamma m)\\\qquad+\left(\frac{\widetilde {\mathcal{Z}}-\mathcal{Z}}{\widetilde {\mathcal{Z}}}+\sum_{|m|> \sigma T}\frac{\mathcal{Z}}{\widetilde{\mathcal{Z}}}p_T(m)\right).
\end{multlined}
\end{align}
We can then observe that
\begin{align}
&\left\lvert\mathbb{E}\left[\cos(\gamma m)\right] - \Phi_T(\gamma)\right\rvert\nonumber\\
&=\left\lvert\sum_{|m|\le \sigma T} \widetilde p_T(m)\cos(\gamma m) - \sum_{m=-\infty}^\infty p_T(m)\cos(\gamma m)\right\rvert\nonumber\\
&\begin{multlined}\le \bigg\lvert\frac{\mathcal{Z}-\widetilde{\mathcal{Z}}}{\widetilde{\mathcal{Z}}}\sum_{m=-\infty}^\infty p_T(m)\cos(\gamma m) + \frac{\widetilde {\mathcal{Z}}-\mathcal{Z}}{\widetilde {\mathcal{Z}}}\\
+\sum_{|m|> \sigma T}\frac{\mathcal{Z}}{\widetilde{\mathcal{Z}}}p_T(m)(1-\cos(\gamma m)) \bigg\rvert
\end{multlined}\nonumber\\
&\le \left\lvert\frac{\mathcal{Z}-\widetilde{\mathcal{Z}}}{\widetilde{\mathcal{Z}}} (\Phi_T(\gamma)-1) \right\rvert
+\sum_{|m|> \sigma T}\left\lvert\frac{\mathcal{Z}}{\widetilde{\mathcal{Z}}}p_T(m)(1-\cos(\gamma m)) \right\rvert\nonumber\\
&\le \left\lvert\frac{\mathcal{Z}-\widetilde{\mathcal{Z}}}{\widetilde{\mathcal{Z}}} \right\rvert
+2\sum_{|m|> \sigma T}\frac{\mathcal{Z}}{\widetilde{\mathcal{Z}}}p_T(m).
\end{align}

Giving us the truncation error from truncating Gaussian samples as the second term, and the truncation in estimating the normalization coefficient as the first term.
The second term can be further upper-bounded by showing that
\begin{align}
\sum_{|m| > \sigma T} \frac{\mathcal{Z}}{\widetilde{\mathcal{Z}}}p_T(m) 
&= \sum_{|m| > \sigma T} \frac{1}{\sqrt{2\pi}T\widetilde{\mathcal{Z}}}e^{-\frac{m^2}{2T^2}}\nonumber\\
&\le \sum_{|m| > \sigma T} \frac{1}{\sqrt{2\pi}T}e^{-\frac{m^2}{2T^2}}\nonumber\\
&\le 2\int_{m = \sigma T}^{\infty}\frac{1sdxc}{\sqrt{2\pi}T}e^{-\frac{m^2}{2T^2}}dt\nonumber\\
&= \int_{s = \sigma }^{\infty}\sqrt{\frac{2}{\pi}}e^{-\frac{s^2}{2}}ds \le e^{-\sigma^2} \le \frac{2\delta^2}{15}
\end{align}
Similar to the above, the first term can be bounded as follows:
\begin{align}
\frac{\mathcal{Z}-\widetilde {\mathcal{Z}}}{\widetilde {\mathcal{Z}}} = \sum_{|m| > \rho T} \frac{1}{\sqrt{2\pi}T\widetilde{\mathcal{Z}}}e^{-\frac{m^2}{2T^2}}\le e^{-\rho^2} \le e^{-\sigma^2} \le \frac{2\delta^2}{15}.
\end{align}

Thus, for all $\gamma$, we can obtain 

\begin{equation}
\left\lvert\mathbb{E}\left[\cos(\gamma m)\right] - \Phi(\theta)\right\rvert \le\frac{2\delta^2}{5},
\end{equation}
and hence
\begin{equation}
\left\lvert\mathbb{E}\left[\mathcal{L}(\theta)\right] - \mathcal{F}(\theta)\right\rvert \le \delta^2.
\end{equation}
\end{proof}

We now obtain bounds regarding the number of samples $N$ via the Hoeffding bound of the difference of losses similar to Refs.~\citep{ding2024robust,ding2024quantum}. Let
\begin{equation}
    \mathcal{D}_m(\theta, \lambda) = \mathcal{L}_m(\theta)-\mathcal{L}_m(\lambda).
\end{equation}
We then obtain the following lemma.
\begin{lemma}[Sampling error from measurements]
\label{lemHoeffBound}
Let $0\le \xi \le 1$, $\delta \ge 0$, $\lambda \in [0, \frac{\pi}{2}]$ and $\kappa>0$. We define a grid $G$ in range $[0, \frac{\pi}{2}]$ with spacing $\kappa$ and let $J = \lfloor\frac{\pi}{2\kappa}\rfloor$ such that $G = \{\frac{k}{\kappa}:k \in [J]\cup\{0\}\}$. For all possible values of $\theta \in G$, if $N \in \mathcal{O}(\frac{1}{\delta^2} \log (\frac{1}{\kappa\xi}))$, then
\begin{equation*}
    \Pr\left[\left\lvert\frac{1}{N}\sum_{m}(\mathcal{D}_m(\theta, \lambda)) - \mathbb{E}[\mathcal{D}(\theta, \lambda)]\right\rvert\le\sigma T\delta\lvert\theta-\lambda\rvert\right] \ge 1-\xi.
\end{equation*}
\end{lemma}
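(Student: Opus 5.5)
The plan is Hoeffding's inequality applied to the bounded random variables $\mathcal{D}_m(\theta,\lambda)$ at each grid point, followed by a union bound over the grid $G$. Fix $\theta\in G$. The $N$ samples $(m,Z_m)$ are i.i.d.: $m\sim\widetilde p_T$, and $Z_m$ is the $\pm1$ outcome of the corresponding circuit. Hence the variables $\mathcal{D}_m(\theta,\lambda)=\mathcal{L}_m(\theta)-\mathcal{L}_m(\lambda)$ are i.i.d. with mean $\mathbb{E}[\mathcal{D}(\theta,\lambda)]$. The key step is a range bound that scales with $|\theta-\lambda|$, because this is what gives the deviation $\sigma T\delta|\theta-\lambda|$ rather than a constant.

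Expanding the squares, $Z_m^2$ cancels and
\begin{equation*}
\mathcal{D}_m(\theta,\lambda) = \big(\cos(2\lambda m)-\cos(2\theta m)\big)\big(2Z_m-\cos(2\theta m)-\cos(2\lambda m)\big).
\end{equation*}
The second factor has modulus at most $4$. The first factor satisfies $|\cos(2\lambda m)-\cos(2\theta m)|\le 2|m||\theta-\lambda|$ by the mean value theorem. Since $\widetilde p_T$ is supported on $|m|\le \sigma T$, this gives $|\mathcal{D}_m|\le 8\sigma T|\theta-\lambda|$ almost surely. So $\mathcal{D}_m$ lies in an interval of length $c\,\sigma T|\theta-\lambda|$ with $c=16$.

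Hoeffding's inequality then gives
\begin{equation*}
\Pr\left[\left|\tfrac{1}{N}\textstyle\sum_m \mathcal{D}_m - \mathbb{E}\mathcal{D}\right| > \sigma T\delta|\theta-\lambda|\right]\le 2\exp\!\left(-\frac{2N\delta^2}{c^2}\right).
\end{equation*}
The dependence on $\sigma T|\theta-\lambda|$ cancels between the deviation and the range. When $\theta=\lambda$ the statement is trivial, since $\mathcal{D}\equiv0$.

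For the union bound, the grid has $J+1=\mathcal{O}(1/\kappa)$ points. Requiring $2(J+1)e^{-2N\delta^2/c^2}\le\xi$ gives $N\ge \frac{c^2}{2\delta^2}\log\frac{2(J+1)}{\xi}$, which is $\mathcal{O}\big(\frac{1}{\delta^2}\log\frac{1}{\kappa\xi}\big)$. This establishes the bound with probability at least $1-\xi$ simultaneously for all $\theta\in G$.

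The only delicate point is getting the $|\theta-\lambda|$-proportional range, via the factorization and the truncation $|m|\le\sigma T$. The rest is standard. I would also note that the grid is presumably meant to be $\{k\kappa\}$, and that only its cardinality $\mathcal{O}(1/\kappa)$ is used.
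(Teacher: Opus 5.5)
Your proposal is correct and follows the paper's proof essentially step for step. The shared steps are the factorization of $\mathcal{D}_m(\theta,\lambda)$, the bound $|\mathcal{D}_m|\le 8\sigma T|\theta-\lambda|$ from the Lipschitz property of cosine together with the truncation $|m|\le\sigma T$, Hoeffding's inequality with range $16\sigma T|\theta-\lambda|$, and a union bound over the $J+1$ grid points. Your version is in fact slightly more careful than the paper's: you keep the factor $2$ of the two-sided Hoeffding bound, you treat the degenerate case $\theta=\lambda$, and you read the grid as $\{k\kappa\}$. None of these changes the $\mathcal{O}\bigl(\delta^{-2}\log(1/(\kappa\xi))\bigr)$ sample bound.
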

\begin{proof}
Expanding the difference of losses, we obtain 
\begin{align}
    \mathcal{D}_m(\theta, \lambda) &= \mathcal{L}_m(\theta)-\mathcal{L}_m(\lambda)\nonumber\\
    &= \begin{multlined}[t]
        (1-2Z_m\cos(2\theta m)+\cos^2(2\theta m))\\
    - (1-2Z_m\cos(2\lambda m)+\cos^2(2\lambda m))
    \end{multlined}\nonumber\\
    &= \begin{multlined}[t]
        \cos^2(2\theta m) - \cos^2(2\lambda m) \\+ 2Z_m\cos(2\lambda m) - 2Z_m\cos(2\theta m)
    \end{multlined}\nonumber\\
    &\begin{multlined}[b]
    =(\cos(2\theta m) + \cos(2\lambda m) - 2Z_m)\\\times(\cos(2\theta m) - \cos(2\lambda m))
    \end{multlined}
\end{align}
Noting the inequality $\lvert\cos(2\theta m)) + \cos(2\lambda m) - 2Z_m\rvert\le 4$ and the Lipschitz property of cosines, we obtain the following:
\begin{equation}
    \lvert\mathcal{D}_m(\theta, \lambda)\rvert \le 4 (\cos(2\lambda m) - \cos(2\theta m)) \le 8 \sigma T \lvert\theta-\lambda\rvert
\end{equation}
where the last inequality is obtained by noting that for all $m$, $m \le \sigma T$. From this bound, we note that $\mathcal{D}_m(\theta, \lambda) \in [-8 \sigma T \lvert\theta-\lambda\rvert, 8 \sigma T \lvert\theta-\lambda\rvert]$, which we can use as the bounds for Hoeffding's inequality, where we can obtain
\begin{multline}
    \Pr\left[\left\lvert\frac{1}{N}\sum_{m}(\mathcal{D}_m(\theta, \lambda)) - \mathbb{E}[\mathcal{D}(\theta, \lambda)]\right\rvert\ge \eta \right] \\
    \le \exp\left(-\frac{2N\eta^2}{(16\sigma T \lvert\theta - \lambda|)^2}\right).
\end{multline}
We note that $|G| = J +1 = \lfloor\frac{\pi}{2\kappa}\rfloor +1$, thus by union bound for $\theta \in G$,
\begin{multline}
    \Pr\left[\bigcup_{\theta \in G }\left\{\left\lvert\frac{1}{N}\sum_{m}(\mathcal{D}_m(\theta, \lambda)) - \mathbb{E}[\mathcal{D}(\theta, \lambda)]\right\rvert\ge \eta \right\}\right] \\
    \le (J+1)\exp\left(-\frac{2N\eta^2}{(16\sigma T \lvert\theta - \lambda|)^2}\right).
\end{multline}
Upper-bounding the entire thing with $\xi$, we obtain 
\begin{equation}
\eta  \ge \frac{16\sigma T\lvert\theta - \lambda|}{\sqrt{2N}}\ln^{1/2}\left(\frac{J+1}{\xi}\right).
\end{equation}
Let $\delta = \frac{16}{\sqrt{2N}}\ln^{1/2}\left(\frac{J+1}{\xi}\right)$, then rearranging the terms, we see that
\begin{equation}
N \in \mathcal{O}\left(\frac{1}{\delta^2}\ln\left(\frac{1}{\kappa\xi}\right)\right),
\end{equation}
and that $\eta = \sigma T\delta\lvert\theta-\lambda\rvert$.
\end{proof}

\subsection{Proof of GLSAE guarantees}
\label{appProofMain}
We can now prove our main result. The high level proof idea is to provide upper bounds for $\mathcal{L}(\theta)$ when $\lvert\theta-\lambda\rvert \le \kappa$ and lower bounds for $\widetilde{\mathcal{L}}(\theta)$ when $\lvert\theta-\lambda\rvert \ge 2\kappa$ to ensure that minimizing $\mathcal{L}(\theta)$ provides a close enough estimate of $\lambda$. To do this, we connect $\widetilde{\mathcal{L}}(\theta)$ to the ideal error function $\mathcal{E}(\theta)$ via bounding the sampling error and truncation errors derived in the previous section. Lastly, we provide quadratic upper and lower bounds on $\mathcal{E}(\theta)$ when applicable to provide the $M^2N \in \widetilde{\mathcal{O}}(\epsilon^{-2})$ invariance by setting $\epsilon = 2\kappa$.
\main*
\begin{proof}
    Recall that $M = \lfloor \sigma T\rfloor$. For ease of proof, we view the algorithm through the lens of $T$ instead of $M$. Further recall that the values of $\theta$ are obtained via grid search with spacing $\kappa$.
    
    We first show that we can construct cases where $\max_{\lvert\theta-\lambda\rvert \le \kappa}\mathcal{L}(\theta) \le \min_{\lvert\theta-\lambda\rvert \ge 2\kappa} \mathcal{L}(\theta)$.

    From \cref{lemHoeffBound}, we can see that with probability greater than $1-\xi$,
    \begin{equation}
        \left\lvert\left(\widetilde{\mathcal{L}}(\theta) - \widetilde{\mathcal{L}}(\lambda)\right) - \left(\mathbb{E}[\mathcal{L}(\theta)] - \mathbb{E}[\mathcal{L}(\lambda)]\right)\right\rvert \le \sigma T \delta \lvert\theta-\lambda\rvert
    \end{equation}
    Rearranging the terms, we find that 
    \begin{align}
        &\widetilde{\mathcal{L}}(\theta) \le \widetilde{\mathcal{L}}(\lambda) + \mathbb{E}[\mathcal{L}(\theta)] - \mathbb{E}[\mathcal{L}(\lambda)] + \sigma T \delta \lvert\theta-\lambda\rvert\nonumber\\
        &\widetilde{\mathcal{L}}(\theta) \ge \widetilde{\mathcal{L}}(\lambda) + \mathbb{E}[\mathcal{L}(\theta)] - \mathbb{E}[\mathcal{L}(\lambda)] - \sigma T \delta \lvert\theta-\lambda\rvert
    \end{align} 
    From \cref{lemTruncErr}, we obtain
    \begin{equation}
    -2\delta'^2 \le (\mathbb{E}[\mathcal{L}(\theta)]-\mathbb{E}[\mathcal{L}(\lambda)]) - (\mathcal{F}(\theta)-\mathcal{F}(\lambda)) \le 2\delta'^2.
    \end{equation}
    Let $\delta' = \delta$ such that $N \in \mathcal{O}(\frac{1}{\delta^2} \log (\frac{1}{\kappa\xi}))$ and $\sigma\in \Omega(\log^{\frac{1}{2}}(\frac{1}{\delta})))$.
    From the definition of $\mathcal{F}$, we note that $\mathcal{F}(\theta)-\mathcal{F}(\lambda) = \mathcal{E}(\theta)-\mathcal{E}(\lambda) = \mathcal{E}(\theta)$. Hence, we can write 
    \begin{align}
        \label{eqLossHoff}
        \widetilde{\mathcal{L}}(\theta) \le \widetilde{\mathcal{L}}(\lambda) + \mathcal{E}(\theta) + \sigma T \delta \lvert\theta-\lambda\rvert +2\delta^2\nonumber\\
        \widetilde{\mathcal{L}}(\theta) \ge \widetilde{\mathcal{L}}(\lambda) + \mathcal{E}(\theta) - \sigma T \delta \lvert\theta-\lambda\rvert -2\delta^2
    \end{align}
    
    We now discuss the following two cases:
    
    \noindent\textsc{Case 1: $\lvert\theta-\lambda\rvert \le \kappa$.}
    
    In this case, we provide an upper bound for $\mathcal{E}(\theta)$ by showing smoothness bounds for both $\mathcal{E}_1(\theta)$ and $\mathcal{E}_2(\theta)$.
    When $T\ge 2/\pi$, by smoothness properties in \cref{lemSmoothProp,lemSmoothGauss},
    \begin{align}
        \mathcal{E}_1(\theta)  &= 1 - \Phi_T(2(\theta-\lambda)) \nonumber\\
        &\le 1 - \Phi_T(0) - \Phi_T'(0)\times(2(\theta-\lambda)) +\frac{T^2}{2} (2(\theta-\lambda))^2 \nonumber\\
        &= \frac{T^2}{2} (2(\theta-\lambda))^2 = 2 T^2(\theta-\lambda)^2
    \end{align}
    Further, again by smoothness properties in \cref{lemSmoothProp,lemSmoothGauss},
    \begin{align}
        \mathcal{E}_2(\theta) &=\frac{\Phi_T(4\lambda)}{2} - \Phi_T(2(\theta+\lambda)) +\frac{\Phi_T(4\theta)}{2}\nonumber\\
        &\le \frac{T^2}{8} (2(\theta-\lambda))^2 = \frac{1}{2} T^2(\theta-\lambda)^2
    \end{align}
    Combining the results with the sampling and truncation error bounds in \cref{eqLossHoff}, we obtain an upper bound on $\widetilde{\mathcal{L}}(\theta)$ such that
    \begin{align}
        \widetilde{\mathcal{L}}(\theta) &\le \widetilde{\mathcal{L}}(\lambda)+\frac{5}{2}T^2(\theta-\lambda)^2 + \sigma T \delta |\theta-\lambda| + 2\delta^2\nonumber\\
        &\le \widetilde{\mathcal{L}}(\lambda)+\frac{5}{2}T^2\kappa^2 + \sigma T \delta \kappa + 2\delta^2
    \end{align}

    \noindent\textsc{Case 2: $\lvert\theta-\lambda\rvert \ge 2\kappa$.}
    
    In this case, we provide a lower bound for $\mathcal{E}(\theta)$.
    Recall that by convexity from \cref{lemCvxGauss}, when $x \in [\frac{1}{T}, 2\pi-\frac{1}{T}]$, $\Phi_T(x)$ is convex. Thus when both $\theta$ and $\lambda$ are in the range $[\frac{1}{4T}, \frac{\pi}{2}-\frac{1}{4T}]$, we have
    \begin{equation}
        \mathcal{E}_2(\theta) = \frac{\Phi_T(4\lambda)}{2}  - \Phi_T(2(\theta+\lambda)) +\frac{\Phi_T(4\theta)}{2} \ge 0.
    \end{equation}
    Further, when $T\ge 2/\pi$ and if $2\kappa \le \frac{1}{2T}$, by strong convexity from \cref{lemStrongCvx},
    \begin{align}
        \mathcal{E}_1(\theta) &= 1 - \Phi_T(2(\theta-\lambda)) \nonumber\\
        &\ge 1 - \Phi_T(0) - \Phi_T'(0)\times2(\theta-\lambda) + \frac{T^2}{4}(2(\theta-\lambda))^2\nonumber\\
        &= \frac{T^2}{4}(2(\theta-\lambda))^2 \ge T^2(\theta-\lambda)^2
    \end{align}
    Combining the results with sampling and truncation error bounds in \cref{eqLossHoff}, we obtain an lower bound on $\widetilde{\mathcal{L}}(\theta)$ such that if $2\kappa \ge \frac{\sigma\delta}{T}$, then
    \begin{align}
        \widetilde{\mathcal{L}}(\theta) &\ge \widetilde{\mathcal{L}}(\lambda)+T^2(\theta-\lambda)^2 - \sigma T \delta |\theta-\lambda| - 2\delta^2\nonumber\\
        &\ge \widetilde{\mathcal{L}}(\lambda)+4T^2\kappa^2 - 2\sigma T \delta \kappa - 2\delta^2.
    \end{align}

    We can then see that
    \begin{align}
        \max_{\lvert\theta-\lambda\rvert \le \kappa}\widetilde{\mathcal{L}}(\theta) &\le \widetilde{\mathcal{L}}(\lambda)+\frac{5}{2}T^2\kappa^2 + \sigma T \delta\kappa + 2\delta^2\nonumber\\
        \min_{\lvert\theta-\lambda\rvert \ge 2\kappa} \widetilde{\mathcal{L}}(\theta) &\ge \widetilde{\mathcal{L}}(\lambda)+4T^2\kappa^2-2\sigma T \delta \kappa - 2\delta^2.
    \end{align}
    Solving for
    \begin{equation}
    \frac{5}{2}T^2\kappa^2 + \sigma T \delta\kappa + 2\delta^2\le 4T^2\kappa^2-2\sigma T \delta \kappa - 2\delta^2,
    \end{equation}
    we obtain the inequality
    \begin{equation}
    \frac{3}{2}T^2\kappa^2-3\sigma T \delta \kappa - 4\delta^2 \ge 0,
    \end{equation}
    and 
    \begin{equation}
    T \ge \frac{3\sigma \delta\kappa +\delta\kappa\sqrt{9\sigma^2+24}}{3\kappa^2} = \frac{\delta}{\kappa} \frac{3\sigma +\sqrt{9\sigma^2+24}}{3},
    \end{equation}
    in order to fulfill $\max_{\lvert\theta-\lambda\rvert \le \kappa}\widetilde{\mathcal{L}}(\theta) \le \min_{\lvert\theta-\lambda\rvert \ge 2\kappa} \widetilde{\mathcal{L}}(\theta)$.

    Rearranging the terms and plugging in the relevant factors, we find
    \begin{equation}
        T \in \mathcal{O}\left(\frac{1}{\sqrt{N}\kappa}\log^{\frac{1}{2}}\left(\frac{1}{\kappa\xi}\right)\log^{\frac{1}{2}}\left(\frac{N}{\log(\kappa^{-1}\xi^{-1})
        }\right)\right).
    \end{equation}
    
    The above setting of $T$ ensures that for all $|\theta-\lambda| \ge 2\kappa$, the loss function is larger than that of all $|\theta-\lambda| \le \kappa$. However, there still exists an annulus where $\kappa \le |\theta-\lambda| \le 2\kappa$, where the behavior is undefined. By setting $\epsilon = 2\kappa$, we let 
    \begin{equation}
        \theta^* = \argmin_{\theta \in G} \widetilde{\mathcal L}(\theta)
    \end{equation}
    then per inequality above, we know that $\theta^*$ must lie within $\epsilon = 2\kappa$ of $\lambda$.

    To recover the asymptotics of $M$ and $N$ in terms of $\epsilon$ as in the main theorem statement, we plug in $M= \lfloor \sigma T\rfloor$ and $\epsilon = 2\kappa$ to obtain 
    \begin{equation}
        \label{eqMaxDepth}
        M \in \mathcal{O}\left(\frac{1}{\sqrt{N}\epsilon}\log^{\frac{1}{2}}\left(\frac{1}{\xi\epsilon}\right)\log\left(\frac{N}{\log(\xi^{-1}\epsilon^{-1})
        }\right)\right).
    \end{equation}
    We fix a tuning parameter $\beta \in [0,1]$ such that $N\in\widetilde{\mathcal{O}}(\epsilon^{-2\beta})$. We can then find that $M\in\widetilde{\mathcal{O}}(\epsilon^{-1+\beta})$.
    
    Lastly, we note that as the $\epsilon$ is the bound for the phase difference $\lvert \theta - \lambda\rvert$, we still need to relate it to the amplitude $\widetilde a = \sin^2(\theta)$. From Lemma 7 of \citet{brassard2002quantum}, we see that 
    \begin{equation}
        \lvert\widetilde{a} - a| \le 2\sqrt{a(1-a)} |\theta-\lambda| + |\theta-\lambda|^2 \le 2\sqrt{a(1-a)} \epsilon + \epsilon^2
    \end{equation}

    From the valid range of $\theta$ and $\lambda$ from maintaining $\mathcal{E}_2(\theta)\ge 0$, we require $\theta, \lambda \in [\frac{1}{4T}, \frac{\pi}{2} - \frac{1}{4T}]$ so the valid range for estimating $a$ is then $[\sin^2(\frac{1}{4T}), \cos^2(\frac{1}{4T})]$. Setting $\zeta = \sin^2(\frac{1}{4T})$, we find that
    \begin{equation}
        \sin^2\left(\frac{1}{4T}\right) \le \frac{1}{16T^2} \in \widetilde{\mathcal{O}}(\epsilon^{2-2\beta}).
    \end{equation}
\end{proof}

We further provide specific results regarding the runtime at the Heisenberg limit range where $M\in \mathcal{O}(\epsilon^{-1})$.
\corHeis*
\begin{proof}
    Let $N \in \mathcal{O}(\log(\xi^{-1}\epsilon^{-1}))$. \cref{eqMaxDepth} produces $M \in \mathcal{O}(\epsilon^{-1})$. In this regime, we note that $\zeta \in \widetilde{\mathcal O} (\epsilon^2)$, and thus, we have $\zeta \ge \epsilon$. As a result, even when $a \in [0, \zeta)$, the estimate $\widetilde{a}$ found would be either $0$ or $\epsilon$, both of which are valid outputs. The same follows for $a \in (1-\zeta, 1]$. Thus, when $\beta = 0$, the range of estimation for $a$ is over the entire $[0, 1]$.
\end{proof}

\subsection{Acceptance certification}
\label{appProofCert}
Since \cref{thmMain} assumes that the target amplitude $a$ lies in the interval $[\zeta, 1-\zeta]$, applying GLSAE reliably requires prior knowledge that $a$ satisfies this condition. In practice, such prior information may not be available, making it unclear whether a given GLSAE output should be trusted.

To address this limitation, we introduce in this appendix an acceptance criterion based solely on the observed GLSAE output. This criterion allows one to decide directly from the readout values whether to accept or reject the estimate.

\begin{table}[h]
\centering
\begin{tabular}{ccc}
\toprule
 & $\theta \in \left[\frac{1}{4T}, \frac{\pi}{2}-\frac{1}{4T}\right]$ & $\theta \notin \left[\frac{1}{4T}, \frac{\pi}{2}-\frac{1}{4T}\right]$ \\
\midrule
$\lambda \in \left[\frac{1}{4T}, \frac{\pi}{2}-\frac{1}{4T}\right]$ & Accept & Reject \\ 
$\lambda \notin \left[\frac{1}{4T}, \frac{\pi}{2}-\frac{1}{4T}\right]$ & Reject & Reject \\
\bottomrule
\end{tabular}
\end{table}

From the above table, it follows that any estimate $\theta$ falling outside the interval $\left[\frac{1}{4T}, \frac{\pi}{2}-\frac{1}{4T}\right]$ can be safely rejected. However, if $\theta$ lies within this interval, the situation is more ambiguous, as it may correspond either to the valid regime $\lambda \in \left[\frac{1}{4T}, \frac{\pi}{2}-\frac{1}{4T}\right]$ or to the invalid regime $\lambda \notin \left[\frac{1}{4T}, \frac{\pi}{2}-\frac{1}{4T}\right]$, in which case the output should still be rejected.

Therefore, we require a stricter rejection criterion that eliminates all invalid cases within this regime, even at the expense of occasionally rejecting valid estimates. This can be achieved by enlarging the rejection region, or equivalently, by defining a more restrictive acceptance region $\mathcal{A}$.

Specifically, we seek an acceptance region $\mathcal{A}$ such that, whenever $\lambda < \frac{1}{4T}$, the objective function satisfies
\begin{equation}
    \min_{\theta \in \mathcal{A}} \widetilde{\mathcal{L}}(\theta) > \max_{\theta < \frac{1}{T}} \widetilde{\mathcal{L}}(\theta),
\end{equation}
as well as for $\lambda > \frac{\pi}{2} - \frac{1}{4T}$.

This condition ensures that, during the minimization step of GLSAE, any estimate $\theta$ lying within the acceptance region $\mathcal{A}$ cannot correspond to a true amplitude $a \notin (\zeta, 1-\zeta)$. In the following, we show that the range
\begin{equation}
    \mathcal{A} = \left[\frac{1}{T}, \frac{\pi}{2}-\frac{1}{T}\right],
\end{equation}
and, by extension, the range of $[16\zeta, 1-16\zeta]$ for amplitudes serves as a good candidate for our purposes.

\certi*
\begin{proof}
    We define $\mathcal{A}$ to be the range $\left[\frac{1}{T}, \frac{\pi}{2}-\frac{1}{T}\right]$. To show a upper bound for the case where $\theta < \frac{1}{4T}$, we start from the Hoeffding and truncation bounds from \cref{eqLossHoff}, we obtain
    \begin{align}
        \widetilde{\mathcal{L}}(\theta) &\le \widetilde{\mathcal{L}}(\lambda) + \mathcal{E}(\theta) + \sigma T \delta \lvert\theta-\lambda\rvert +2\delta^2\nonumber\\
        &\le \widetilde{\mathcal{L}}(\lambda) + \mathcal{E}_1(\theta) + \mathcal{E}_2(\theta) + \frac{\sigma\delta}{4}  +2\delta^2
    \end{align}
    We continue by upper-bounding $\mathcal{E}_2(\theta)$ in this regime. Given that $\lambda,\theta < \frac{1}{4T}$, the continuous Gaussian $\varphi_T(x)$ is convex so 
    \begin{equation}
    \frac{\varphi_T(4\lambda)}{2}  - \varphi_T(2(\theta+\lambda)) +\frac{\varphi_T(4\theta)}{2} < 0
    \end{equation}

    Recall from \cref{eqDecomp} that
    \begin{align}
        \mathcal{E}_2(\theta) &=\frac{\Phi_T(4\lambda)}{2}   - \Phi_T(2(\theta+\lambda)) +\frac{\Phi_T(4\theta)}{2}\nonumber\\
        &=\begin{multlined}[t][0.85\linewidth]\frac{1}{\mathcal{Z}}\sum_{j\in\mathbb{Z}}\bigg(\frac{\varphi_T(4\lambda + 2j\pi)}{2}  \\- \varphi_T(2(\theta+\lambda)+ 2j\pi) +\frac{\varphi_T(4\theta+ 2j\pi)}{2}\bigg)
        \end{multlined}\nonumber\\
        &\le \sum_{j\in\mathbb{Z}\setminus\{0\}}\frac{\varphi_T(4\lambda + 2j\pi)}{2} +\frac{\varphi_T(4\theta+ 2j\pi)}{2}\nonumber\\
        &\le \sum_{j\in\mathbb{N}}2\varphi_T\left(2j\pi \right) + 2\varphi_T\left(2j\pi - \frac{1}{T} \right),
    \end{align}
    where we use the fact that $\mathcal{Z}>0$ and $\varphi_T$ is concave in the first inequality. Under the condition of $T \ge \frac{4}{\pi}$, we have 
    \begin{align}
        \mathcal{E}_2(\theta) &\le \sum_{j\in\mathbb{N}}2\varphi_T\left(2j\pi \right) + 2\varphi_T\left(2j\pi-\frac{\pi}{4}\right)\nonumber\\
        &\begin{multlined}[t][0.85\linewidth]\le 2e^{-2\pi^2T^2} + 2e^{-\frac{49}{32}\pi^2T^2}\\
            + \int_{j=1}^\infty2\varphi_T\left(2j\pi \right) + \int_{j=7/8}^\infty 2\varphi_T\left(2j\pi\right)
        \end{multlined}\nonumber\\
        &\begin{multlined}[t][0.85\linewidth]\le2e^{-2\pi^2T^2} + 2e^{-\frac{49}{32}\pi^2T^2}\\
        + \frac{1}{2\pi^2 T^2} e^{-2\pi^2T^2} + \frac{4}{7\pi^2 T^2} e^{-\frac{49}{32}\pi^2T^2}
        \end{multlined}\nonumber\\
        &\le\frac{33}{32}e^{-32} + \frac{57}{28}e^{-\frac{49}{2}}<5\times 10^{-11}.
    \end{align}
    Thus, we can write 
    \begin{equation}
        \max_{\theta<\frac{1}{4T}} \widetilde{\mathcal{L}}(\theta) \le \widetilde{\mathcal{L}}(\lambda) + \max_{\theta<\frac{1}{4T}}\mathcal{E}_1(\theta) + \frac{\sigma\delta}{4}  +2\delta^2 + 5\times 10^{-11}.
    \end{equation}

    To show a lower bound for the case where $\theta \in \mathcal{A}$, we assume the lower threshold for $\mathcal{A}$ is $\frac{1}{T}$. We again start from \cref{eqLossHoff}, but this time we tweak the range for $\lvert\mathcal{D}(\theta, \lambda)\rvert$ such that it is not dependent on $\lvert\theta-\lambda\rvert$. This resulting change in Hoeffding's inequality effectively produces
    \begin{align}
        \widetilde{\mathcal{L}}(\theta) &\ge \widetilde{\mathcal{L}}(\lambda) + \mathcal{E}_1(\theta) + \mathcal{E}_2(\theta) - \sigma \delta -2\delta^2.
    \end{align}
    We again continue lower bounding $\mathcal{E}_2(\theta)$ where
    \begin{align}
        \mathcal{E}_2(\theta) &\begin{multlined}[t][0.9\linewidth]=\frac{1}{\mathcal{Z}}\sum_{j\in\mathbb{Z}}\bigg(\frac{\varphi_T(4\lambda + 2j\pi)}{2}  \\- \varphi_T(2(\theta+\lambda)+ 2j\pi) +\frac{\varphi_T(4\theta+ 2j\pi)}{2}\bigg)
        \end{multlined}\nonumber\\
        &\begin{multlined}[b][0.9\linewidth]\ge\frac{1}{\mathcal{Z}} \bigg(\frac{\varphi_T(4\lambda)}{2}   - \varphi_T(2(\theta+\lambda)) +\frac{\varphi_T(4\theta)}{2} \\- \sum_{j\in\mathbb{Z}\setminus\{0\}}\varphi_T(2(\theta+\lambda)+2j\pi)\bigg)\end{multlined}\nonumber\\
        &\begin{multlined}[b][0.9\linewidth]\ge\frac{1}{\mathcal{Z}} \bigg(\frac{1}{2}\varphi_T\left(\frac{1}{T}\right)   - \varphi_T\left(\frac{2}{T}\right)  + 0\\- \sum_{j\in\mathbb{N}}\left(2\varphi_T\left(2j\pi \right) + 2\varphi_T\left(2j\pi - \pi - \frac{1}{T} \right)\right)\bigg)\end{multlined}
    \end{align}
    Processing the terms separately, we have
    \begin{equation}
        \frac{1}{2}\varphi_T\left(\frac{1}{T}\right)   - \varphi_T\left(\frac{2}{T}\right) = \frac{1}{2}(e^{-\frac{1}{2}} - e^{-2}) \ge 0.167
    \end{equation}
    and if $T\ge\frac{4}{\pi}$,
    \begin{align}
        &\sum_{j\in\mathbb{N}}2\varphi_T\left(2j\pi \right) + 2\varphi_T\left(2j\pi - \pi - \frac{1}{T} \right)\nonumber\\
        &\le\sum_{j\in\mathbb{N}}2\varphi_T\left(2j\pi \right) + 2\varphi_T\left(2j\pi - \pi - \frac{\pi}{4} \right)\nonumber\\
        &\begin{multlined}[t][0.85\linewidth]\le2e^{-2\pi^2T^2} + 2e^{-\frac{9}{32}\pi^2T^2}\\
        + \int_{j=1}^\infty 2\varphi_T\left(2j\pi\right) + \int_{j=3/8}^\infty 2\varphi_T\left(2j\pi\right)
        \end{multlined}\nonumber\\
        &\begin{multlined}[t][0.85\linewidth]\le2e^{-2\pi^2T^2} + 2e^{-\frac{9}{32}\pi^2T^2}\\
        + \frac{1}{2\pi^2 T^2} e^{-2\pi^2T^2} + \frac{4}{3\pi^2 T^2} e^{-\frac{9}{32}\pi^2T^2}
        \end{multlined}\nonumber\\
        &\le\frac{33}{32}e^{-32} + \frac{25}{12}e^{-\frac{9}{2}}<0.024
    \end{align}
    thus in total,
    \begin{equation}
        \mathcal{E}_2(\theta) \ge \frac{0.167-0.024}{1.0008} > 0.14.
    \end{equation}
    
    Thus, we can write 
    \begin{equation}
        \min_{\theta\ge\frac{1}{T}} \widetilde{\mathcal{L}}(\theta) \ge \widetilde{\mathcal{L}}(\lambda) + \min_{\theta\ge\frac{1}{T}}\mathcal{E}_1(\theta) - \sigma\delta  -2\delta^2 +0.14.
    \end{equation}
    Therefore 
    \begin{equation}
        \min_{\theta\ge\frac{1}{T}} \widetilde{\mathcal{L}}(\theta) \ge \max_{\theta<\frac{1}{4T}} \widetilde{\mathcal{L}}(\theta) 
    \end{equation}
    is true under the condition where
    \begin{multline}
        \widetilde{\mathcal{L}}(\lambda) + \min_{\theta\ge\frac{1}{T}}\mathcal{E}_1(\theta) - \sigma\delta  -2\delta^2 +0.14 \\
        \ge \widetilde{\mathcal{L}}(\lambda) + \max_{\theta<\frac{1}{4T}}\mathcal{E}_1(\theta) + \frac{\sigma\delta}{4}  +2\delta^2 + 5\times 10^{-11}
    \end{multline}
    Given that $\lambda <\frac{1}{4T}$, $\min_{\theta\ge\frac{1}{T}}\mathcal{E}_1(\theta) > \max_{\theta<\frac{1}{4T}}\mathcal{E}_1(\theta)$ and the above can be satisfied with a small enough $\delta$, or equivalently, sufficient samples $N$ at a polylogarithmic scale.

    By symmetry, the upper threshold of $\mathcal{A}$ can be found to be the matching $\frac{\pi}{2} - \frac{1}{T}$. Lastly, converting this to amplitude terms, given $\zeta = \sin^2(\frac{1}{4T})$, we find that
    \begin{equation}
        \sin^2\left(\frac{1}{T}\right) \le 16 \sin^2\left(\frac{1}{4T}\right) = 16\zeta.
    \end{equation}
\end{proof}

\subsection{Full-range extension}
\label{appProofPlus}

Lastly, we discuss the proof on how we can extend the algorithm to estimate the full amplitude range by utilizing knowledge of the boundary regions. Similar ideas have been explored for amplitude estimation by \citet{simon2024amplified}, but the method used here is even simpler and more straightforward.

\plus*

\begin{proof}
    Based on \cref{corCerti}, GLSAE rejects the algorithm output in two scenarios, if $\widetilde a < 16 \zeta$ or $\widetilde a > 1 - 16 \zeta$. In this case, the cost of Monte Carlo sampling to estimate $a$ can be found by the Chebyshev inequality as follows:
    \begin{equation}
        \Pr(|\widetilde a - a | \le\epsilon) \le \frac{{\rm Var}(\tilde a)}{\epsilon^2} = \frac{a(1-a)}{N\epsilon^2}
    \end{equation}
    Note that if $a\le 16\zeta$ or $a\ge 1-16\zeta$, then we have
    \begin{equation}
        a(1-a) \in\mathcal{O}(\zeta) \subseteq \mathcal{O}(\varepsilon^{2-2\beta}).
    \end{equation}
    Thus, with failure probability $0.1$, 
    we can find $|\widetilde a - a | \le\epsilon$ with
    \begin{equation}
        N \in \mathcal{O}\left(\frac{a(1-a)}{\epsilon^2}\right) \subseteq \mathcal{O}(\epsilon^{-2\beta})
    \end{equation}
    samples. This can be boosted to a higher arbitrary probability via further median boosting.

    Lastly, to complete the proof, we note that in cases of acceptance in GLSAE, we have, by Lipschitz continuity of cosines,
    \begin{equation}
        \lvert\widetilde{a} - a| \le |\theta-\lambda| \le \epsilon
    \end{equation}

\end{proof}

\section{Additional numerical results}
\label{appNum}

\begin{figure*}
	\centering
    \includegraphics[width=0.95\linewidth]{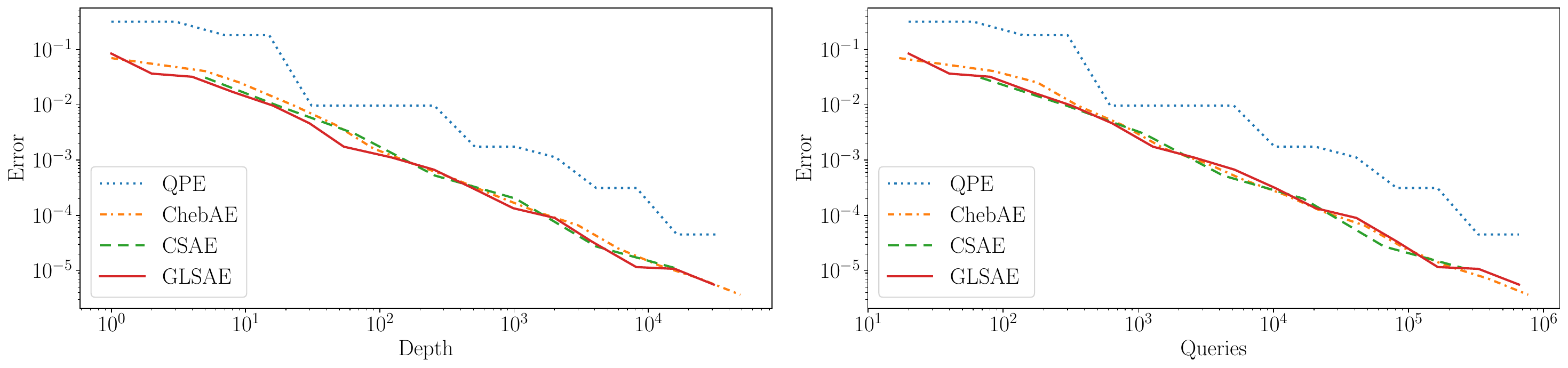}
	\caption{
		\emph{Comparison of the depth and query complexity of Heisenberg-limited amplitude estimation algorithms.} We benchmark GLSAE against the ``textbook'' QPE-based algorithm~\citep{brassard2002quantum}, ChebAE from quantum signal processing~\citep{rall2023amplitude}, and CSAE from ESPRIT~\citep{labib2024quantum}. We note that our methods perform similarly to the prior state-of-the-art, such as ChebAE and CSAE.}
    \label{figHeisenberg}
\end{figure*}

\begin{figure*}
	\centering
    \includegraphics[width=0.95\linewidth]{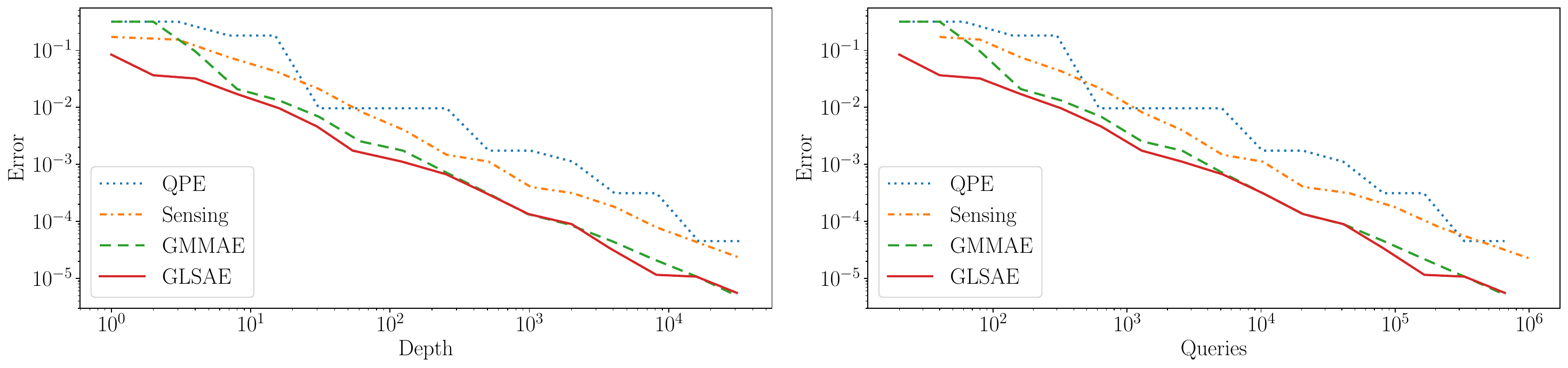}
	\caption{
		\emph{Comparison of the depth and query complexity to additional algorithms.} We benchmark an additional two algorithms, one that uses compressed sensing for the classical post-processing step, and a variant of our algorithm, GMMAE, that maximizes the magnitude function. We show that our highlighted methods in the main text perform better than these variant methods.}
    \label{figOther}
\end{figure*}

\begin{figure*}
	\centering
    \includegraphics[width=0.95\linewidth]{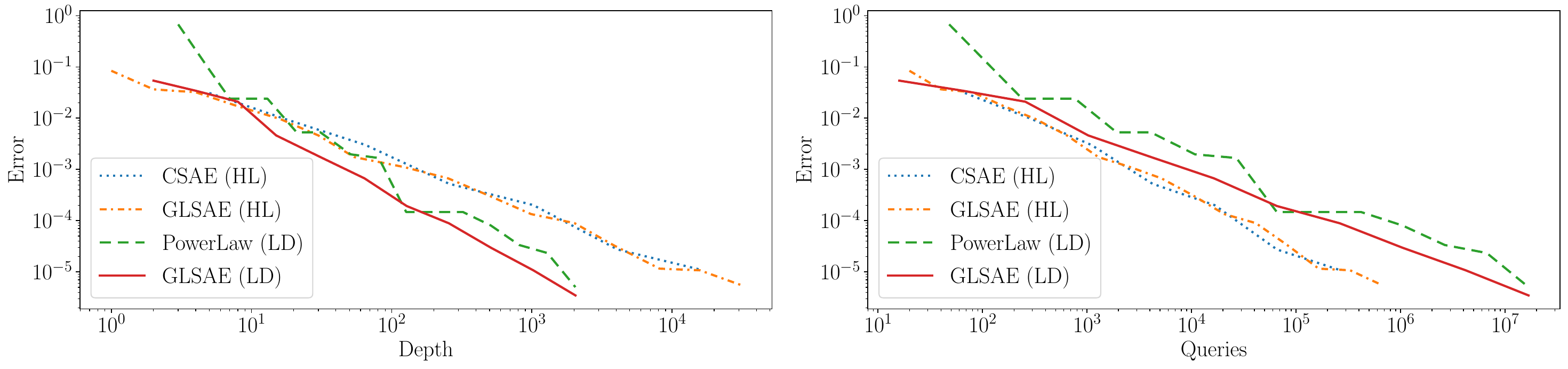}
	\caption{
		\emph{Comparison of the depth and query complexity of amplitude estimation in different regimes.} We set the circuit depth and circuit sample complexity of the low-depth circuits equivalent to uniform time sampling where $\mathcal{D} \in \mathcal{O}(\epsilon^{-2/3})$ and $\mathcal{Q} \in \mathcal{O}(\epsilon^{-4/3})$. Note that the trade-off for low-depth algorithms comes at the cost of higher total query complexity.}
    \label{figDepths}
\end{figure*}

\begin{figure*}
	\centering
    \includegraphics[width=0.95\linewidth]{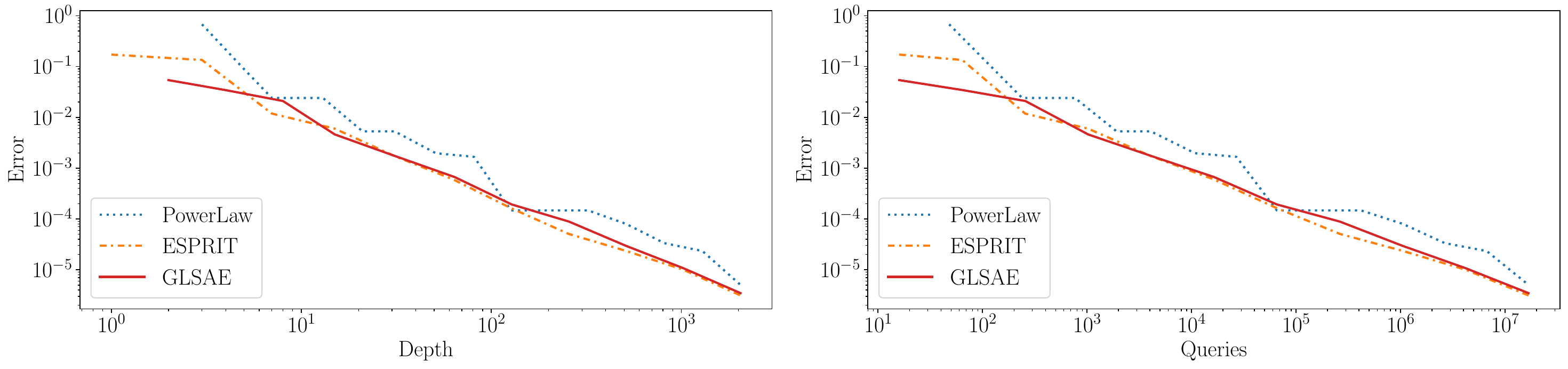}
	\caption{
		\emph{Comparison of the depth and query complexity of low-depth amplitude estimation.} We again set the circuit depth and circuit sample complexity to be equivalent to uniform time sampling. We see that our algorithm outperforms prior art such as Power Law AE~\citep{giurgicatiron2022low}, and matches ESPRIT, which is classically efficient for signal processing under noise.}
    \label{figUniform}
\end{figure*}
We provide more in-depth details for our numerical results in the main text, including full depth and query complexity results, as well as alternate and extended implementations of ideas from statistical phase estimation not featured in the main text.

\subsection{Heisenberg-limited regime}
In \cref{figHeisenberg}, we show that our algorithm performs similarly to the prior state-of-the-art, such as ChebAE and CSAE, in both depth and query complexities. Apart from our algorithm and state-of-the-art amplitude estimation algorithms, we also adapt compressed sensing~\citep{candes2006near-optimal, candes2006robust} for post-processing. Further, we also present a variant algorithm, Gaussian Maximum Magnitude Amplification Estimation. Relevant numerical results are shown in \cref{figOther}. We detail the implementation of our methods as follows:

\paragraph*{Compressed sensing} We adapt the use of compressed sensing in phase estimation from \citet{yi2024quantum} to perform amplitude estimation. As our signals are a single cosine, we employ the use of a discrete Fourier transform matrix, but just the discrete cosine transform matrix. In addition, we employ the use of LASSO~\citep{tibshirani1996regression} for minimization as opposed to convex optimization with $\ell_1$ inequality constraints, as given the high noise in the samples, the inequality constraints require a hyperparameter to control the tolerance of noise, and it might be simpler to penalize the $\ell_1$ norm as part of the loss minimization rather than a strict constraint. From \cref{figOther}, we see that while compressed sensing has asymptotically similar performances (judging from the slope), it requires a higher number of both queries and depth compared to ChebAE, CSAE, and our algorithm. 

\paragraph*{GMMAE} For this algorithm, instead of taking the minimum of the loss function in GLSAE, we maximize the magnitude function
\begin{equation}
\widetilde{\mathcal{M}}'(\theta) = \frac{1}{N} \sum_{m \sim \widetilde{p}_T} Z_m\cos(2\theta m).
\end{equation}
We term this variant algorithm the Gaussian Maximum Magnitude Amplitude Estimation (GMMAE). From \cref{figOther}, we see that the algorithm only operates when the circuit is sufficiently deep, with low-depth circuits having a significantly higher error rate compared to GLSAE. In addition to the indiscernibility of the two Gaussian peaks, as with GLSAE, the magnitude function here has no guarantees that the function would be maximized at $\theta=\lambda$, whereas GLSAE has this guarantee from the nature of least square estimators. This results in a merging of the aforementioned Gaussian peaks and offsets the actual original peaks, distorting the maximum found from its original position and inducing greater errors in a wider range. 

\subsection{Low-depth regime}
In \cref{figDepths}, we show that our algorithm performs better compared to Power Law AE~\citep{giurgicatiron2022low} and exhibits the query--depth tradeoff when compared to Heisenberg-limited amplitude estimation. We quickly note that we do not implement the QoPrime AE algorithm due to the complexity of the post-processing algorithm, as well as due to prior numerical results that indicate an inferiority to Power Law AE in practice~\citep{giurgicatiron2022low,giurgicatiron2022lowexp}. We also omit here the low-depth approaches from \citet{rall2023amplitude} and \citet{vu2025lowdepth}, whose algorithms require a non-Chebyshev polynomial transformation for the purposes of obtaining unbiased estimates, and which would require additional quantum resources (additional phase gates and an additional control qubit) for the implementation of QSP that would be difficult to quantify when compared to the other Grover-type algorithms implemented in this paper.

Given that we select $M, N \in \mathcal{O}(\epsilon^{-2/3})$, the complexity matches that of obtaining the evolution times from a uniform grid. Under this setting, we can apply direction of arrival algorithms from classical signal processing directly, and hence we apply the ESPRIT algorithm as a benchmark, which we detail as follows:

\paragraph*{Off-the-shelf ESPRIT} As our signals are a single cosine, we can adapt \citet{derevianko2022esprit}'s modified ESPRIT algorithm tailored to cosine signal extraction. Compared to \citet{labib2024quantum}'s ESPRIT algorithm for amplitude estimation, standard ESPRIT does not construct virtual arrays nor does it require log-likelihood minimization for sign finding, but does require a uniform sample of the signal array. Our results in \cref{figUniform} show that our algorithm has comparable results to ESPRIT, which is known to be an efficient classical signal processing algorithm even in a high noise regime~\citep{ding2024esprit}.
\end{document}